\def\dOi{9(4:4)2013}
\subjclass{F.3.2 Semantics of Programming Languages}
\theoremstyle{remark}
\newcommand{\N}{\ensuremath{\mathbb{N}}\xspace}
\newcommand{\ordURel}[1]{\ensuremath{\textit{Rel}_{#1}}\xspace} 
\newcommand{\VRel}{\textit{VRel}\xspace} 
\newcommand{\SRel}{\textit{SRel}\xspace} 
\newcommand{\ERel}{\textit{TRel}\xspace} 
\newcommand{\msn}[1]{\textnormal{\textit{#1}}}
\newcommand{\Type}{\msn{Type}\xspace}
\newcommand{\Exp}{\msn{Tm}\xspace}
\newcommand{\Val}{\msn{Val}\xspace}
\newcommand{\Ctxt}{\msn{Stk}\xspace}
\newcommand{\Stk}{\Ctxt}
\newcommand{\Subst}{\msn{Subst}\xspace}
\renewcommand{\r}{\ensuremath{R}}
\newcommand{\s}{\ensuremath{S}}
\newcommand{\f}{\ensuremath{F}}
\newcommand{\omegaOne}{\ensuremath{{\omega_1}}}
\newcommand{\fv}[1]{\ensuremath{\mathit{fv}(#1)}}
\newcommand{\ftv}[1]{\ensuremath{\mathit{ftv}(#1)}}
\newcommand{\IMPLIES}{\ensuremath{\Rightarrow}}
\newcommand{\IFF}{\ensuremath{\Leftrightarrow}}
\newcommand{\id}{\ensuremath{\mathit{id}}}
\newcommand{\nequiv}[1]{\ensuremath{\mathrel{\stackrel{#1}{=}}}}
\newcommand{\LATER}{\ensuremath{\mathop{\triangleright}}\xspace}
\newcommand{\syn}[1]{\textnormal{\texttt{#1}}}
\newcommand{\bnfeq}{\ensuremath{::=}}
\newcommand{\hmid}{\ensuremath{\;|\;}}
\newcommand{\ONE}{\ensuremath{\mathbf{1}}}
\newcommand{\NAT}{\syn{nat}}
\newcommand{\ALL}[2]{\ensuremath{\forall#1.#2}}
\newcommand{\REC}[2]{\ensuremath{\mu#1.#2}}
\newcommand{\IN}[1]{\ensuremath{\syn{in}_{#1}}}
\newcommand{\sto}{\ensuremath{\mathop{\multimap}}}
\newcommand{\unitval}{\langle\rangle}
\newcommand{\PAIR}[2]{\ensuremath{\langle#1,#2\rangle}}
\newcommand{\FUN}[2]{\ensuremath{\lambda#1.#2}}
\newcommand{\TFUN}[2]{\ensuremath{\Lambda#1.#2}}
\newcommand{\TAPP}[2]{\ensuremath{#1\,#2}}
\newcommand{\CASE}[1]{\ensuremath{\syn{case}~{#1}~\syn{of}}}
\newcommand{\BRANCH}[3]{\ensuremath{\IN{#1}\,#2.\,#3}}
\newcommand{\SEP}{\xspace\ensuremath{\texttt{|}}\xspace}
\newcommand{\IFZERO}[3]{\ensuremath{\syn{ifz}~#1~\syn{then}~#2~\syn{else}~#3}}
\newcommand{\CHOOSE}{\syn{?}\xspace}
\newcommand{\OR}{\ensuremath{\mathrel{\textnormal{\syn{or}}}}}
\newcommand{\FIX}{\syn{fix}\xspace}
\newcommand{\stepsto}{\ensuremath{\mathrel{\longmapsto}}}
\newcommand{\stepstopure}{\ensuremath{\mathrel{\stackrel{p}{\leadsto}}}}
\newcommand{\stepstozero}{\ensuremath{\mathrel{\leadsto^0}}}
\newcommand{\stepstozeropure}{\ensuremath{\mathrel{\stackrel{p}{\leadsto}^0}}}
\newcommand{\stepstoone}{\ensuremath{\mathrel{\leadsto^1}}}
\newcommand{\subst}[2]{\ensuremath{[{#2}{/}{#1}]}}
\newcommand{\emp}{\ensuremath{\varnothing}}  %%% empty type context
\newcommand{\R}{\ensuremath{\mathcal{R}}\xspace} 
\newcommand{\ctxtapprox}{\ensuremath{\mathrel{\lesssim}^\textit{ctx}}}
\newcommand{\mayctxtapprox}{\ensuremath{\mathrel{\ctxtapprox_\downarrow}}}
\newcommand{\mustctxtapprox}{\ensuremath{\mathrel{\ctxtapprox_\Downarrow}}}
\newcommand{\ctxtequiv}{\ensuremath{\mathrel{\cong}^\textit{ctx}}}
\newcommand{\mayctxtequiv}{\ensuremath{\mathrel{\ctxtequiv_\downarrow}}}
\newcommand{\mustctxtequiv}{\ensuremath{\mathrel{\ctxtequiv_\Downarrow}}}
\newcommand{\ciu}{\ensuremath{\mathrel{\lesssim}^\textit{ciu}}}
\newcommand{\mayciu}{\ensuremath{\mathrel{\ciu_\downarrow}}}
\newcommand{\mustciu}{\ensuremath{\mathrel{\ciu_\Downarrow}}}
\newcommand{\logapprox}{\ensuremath{\mathrel{\lesssim}^\textit{log}}}
\newcommand{\maylogapprox}{\ensuremath{\mathrel{\logapprox_\downarrow}}}
\newcommand{\mustlogapprox}{\ensuremath{\mathrel{\logapprox_\Downarrow}}}
\newcommand{\LR}[1]{\ensuremath{\den{#1}}}
\newcommand{\TT}[1]{\ensuremath{{#1}^{\bot\!\!\bot}}}
\newcommand{\T}[1]{\ensuremath{{#1}^{\bot}}}
\newcommand{\den}[1]{%
 \left\llbracket #1
 \right\rrbracket}
\renewcommand{\paragraph}{\subsection}
\begin{document}

\title[Step-Indexed Relational Reasoning for Countable Nondeterminism]{Step-Indexed Relational Reasoning for Countable Nondeterminism\rsuper*}

\author[L.~Birkedal]{Lars Birkedal\rsuper a}	%required
\address{{\lsuper{a,b}}Aarhus University}	%required
\email{\{birkedal,abizjak\}@cs.au.dk}  %optional
%\thanks{thanks 1, optional.}	%optional

\author[A.~Bizjak]{Ale\v{s} Bizjak\rsuper b}	%required
\address{\vspace{-18 pt}}	%required
%\email{abizjak@cs.au.dk}  %optional
%\thanks{thanks 1, optional.}	%optional

\author[J.~Schwinghammer]{Jan Schwinghammer\rsuper c}	%optional
\address{{\lsuper c}Saarland University}	%optional
\email{jan@ps.uni-saarland.de}  %optional
%\thanks{thanks 2, optional.}	%optional

%% etc.

%% required for running head on odd and even pages, use suitable
%% abbreviations in case of long titles and many authors:

%% mandatory lists of keywords and classifications:
\keywords{Countable choice, lambda calculus, program equivalence}

\titlecomment{{\lsuper*}A preliminary version of this work has been presented at
 the 20th EACSL Annual Conference on Computer Science Logic (CSL'11),
 12-15 September 2011, Bergen, Norway, see
 Section~\ref{sec:comparison-conf}
 for how the present paper relates to the conference paper.}

%%%%%%%%%%%%%%%%%%%%%%%%%%%%%%%%%%%%%%%%%%%%%%%%%%%%%%%%%

%% the abstract has to PRECEED the command \maketitle:
%% be sure not to issue the \maketitle command twice!

\begin{abstract}
  \noindent Programming languages with countable nondeterministic choice are
 computationally interesting since countable nondeterminism arises
 when modeling fairness for concurrent systems.
 Because countable choice introduces non-continuous behaviour, it is 
 well-known that developing semantic models for programming languages 
 with countable nondeterminism is challenging. 
 We present a step-indexed logical relations model of a higher-order functional
 programming language with countable nondeterminism and demonstrate
 how it can be used to reason about contextually defined may- and
 must-equivalence.  
 In earlier step-indexed models, the indices have been drawn from
 $\omega$. Here the step-indexed relations for must-equivalence
 are indexed over an ordinal greater than $\omega$.
\end{abstract}

%%%%%%%%%%%%%%%%%%%%%%%%%%%%%%%%%%%%%%%%%%%%%%%%%%%%%%%%%

\maketitle

%%%%%%%%%%%%%%%%%%%%%%%%%%%%%%%%%%%%%%%%%%%%%%%%%%%%%%%%%

\section{Introduction}
\label{sec:intro}

Programming languages with countable nondeterministic choice are
computationally interesting since countable nondeterminism arises
when modeling fairness for concurrent systems. 
In this paper we show how to construct simple semantic models
for reasoning about may- and must-equivalence in a 
call-by-value higher-order functional programming language with countable nondeterminism,
recursive types and impredicative polymorphism.

Models for languages with   nondeterminism have originally
been studied using denotational techniques.  In the case of 
countably branching nondeterminism it is not enough to consider standard
$\omega$-continuous complete partial orders and the denotational models
become quite involved \cite{Apt:Plotkin:86,DiGianantonio:Honsell:Plotkin:95}. 
This has sparked research in operationally-based
theories of equivalence for nondeterministic higher-order languages 
\cite{Agha:Mason:Smith:Talcott:97,Lassen:PhD,Lassen:Moran:99,Lassen:Pitcher:97,Levy:06,Sabel:Schmidt-Schauss:08}.
In particular, Lassen investigated operationally-based
relational methods for countable nondeterminism and suggested that it would be interesting to
consider also methods based on logical
relations, i.e., where the \emph{types} of
the programming languages are given a relational interpretation~\cite[page~47]{Lassen:PhD}. Such an interpretation 
would allow one to relate terms of different types, as needed for
reasoning about parametricity properties of polymorphic types.  

For  
languages with recursive types, however, logical relations cannot be defined by
induction on types. In the case of deterministic languages, 
this problem has been addressed by the technique 
of syntactic minimal invariance~\cite{Birkedal:Harper:99}  (inspired by
domain theory~\cite{Pitts:96}).  The idea here is that one proves that
a syntactically definable fixed point on a recursive type is
contextually equivalent to the identity function, and then uses a
so-called unwinding theorem for syntactically definable fixed points
when showing the existence of the logical relations.  However, in the
presence of countable nondeterminism it is not clear how to define the
unwindings of the syntactic fixed point in the programming language.
Indeed, Lassen proved an unwinding theorem for his language with
countable nondeterminism, but he did so by extending the language with
new terms needed for representing the unwindings and left open the
question of whether this is a conservative extension of the language. 

Here we give a logical relations model of our language where we do not
rely on syntactic minimal invariance for constructing the logical
relations. Instead, we use the idea of step-indexed logical
relations~\cite{Appel:McAllester:01}.
In particular, we show how to use step-indexing over ordinals
larger than $\omega$ to reason about must-equivalence in the presence
of countable nondeterminism. 

This approach turns out to be both  {simple} and also  {useful} for reasoning
about concrete may- and must-equivalences.  We show that our logical
relations are sound and complete with
respect to the contextually defined notions of may- and
must-equivalence.  Moreover, we show how to use our logical relations
to establish some concrete equivalences. In particular, we prove the
recursion-induction rule from Lassen~\cite{Lassen:PhD} and establish
the syntactic minimal invariance property (without extending the
language with new unwinding terms). We also include an example to
show that the model can be used to prove parametricity properties
(free theorems) of polymorphic types.

\paragraph*{Overview of the technical development}

One way to understand the failure of $\omega$-continuity in an
operational setting is to consider the must-convergence predicate
$e\Downarrow$, which by Tarski's fixed point theorem can be defined as the least fixed point of the
monotone functional $\Phi(R) = \{e\;|\; \forall e'.\ e\stepsto e'\
\Rightarrow\ e'\in R\}$ on sets of terms. Here $e\stepsto e'$ means that $e$
reduces to $e'$ in one step. However, due to the countable branching the fixed point is not reached by
$\omega$-many iterations $\bigcup_{n\in\omega} \Phi^n(\emptyset)$. The reason is that even when a program has no infinite reduction sequences, we cannot in general bound the length of reduction sequences by any $n<\omega$.  

The idea of step-indexed semantics is a stratified construction of relations which facilitates the interpretation of recursive types, 
and in previous applications   this stratification has typically been realized by indexing over  $\omega$. 
However, as we pointed out, the closure ordinal of the inductively defined must-convergence predicate is strictly larger than $\omega$: 
the least fixed point $\Downarrow$ is reached after  $\omegaOne$-many iterations, for
$\omegaOne$   the least uncountable ordinal. (In fact, the least non-recursive ordinal would suffice \cite{Apt:Plotkin:86}.) 
Thus, one of the key steps in our development is the definition of $\alpha$-indexed uniform relations, for arbitrary ordinals $\alpha$, in Section~\ref{sect:uniform}. 

In Section~\ref{sect:maytheory} we define a logical $\omega$-indexed
uniform relation, and use this relation to prove a CIU theorem for
may-contextual equivalence. The logical relation combines
step-indexing and biorthogonality, and we can prove that it coincides
with may-contextual equivalence; the proofs are similar to those in
\cite{Pitts:10}.  Section~\ref{sect:musttheory} considers the case of
must-contextual equivalence.  The only modifications that this
requires, compared to Section~\ref{sect:maytheory}, are the use of
$\omegaOne$-indexed uniform relations and of a suitably adapted notion
of biorthogonality.

%mention that one of the key steps in the development is the definition of
%uniform relations indexed over larger ordinals; this can lead to 
%mention the topos-point of view,  In this paper
%we are presenting the semantics in an elementary way, but we would
%like to point out that in the background there is a more abstract
%account

\paragraph*{Summary of contributions}
In summary, the contribution of this paper is a simple,
operationally-based logical relations model of countable nondeterminism in a
higher-order language, and the use of this model for proving several
non-trivial applications in Section~\ref{sect:applications}. In
particular, we derive a least prefixed point property for recursive
functions in our language and characterize the elements of the type
$\ALL{\alpha}{\alpha\times\alpha\to\alpha}$, using relational 
parametricity.

Laird \cite{Laird:06} has developed a fully abstract denotational 
model based on bidomains for a calculus similar to the one studied here
but without recursive and polymorphic types; 
our model appears to be the first model of countable nondeterminism 
for a language with impredicative polymorphism. Finite nondeterminism
and polymorphism has been studied for a call-by-name language 
by Johann et. al.~\cite{johann:2010}, who developed an
operational theory for algebraic effects.

\section{A lambda calculus with countable choice}
\label{sec:language}

\paragraph*{Syntax and operational semantics}

\begin{figure*}
\begin{align*}
\tau &\bnfeq \alpha\hmid\ONE\hmid\tau_1\times\tau_2\hmid\tau_1\to\tau_2\hmid\REC\alpha{\tau_1\mathop+\ldots\mathop+\tau_n}\hmid\ALL\alpha\tau\\
v  &\bnfeq x\hmid  \unitval\hmid \PAIR {v_1}{v_2}\hmid \FUN xe \hmid \IN{i}\,v
\hmid \TFUN\alpha e\\
e  &\bnfeq v \hmid 
\CHOOSE\hmid  \syn{proj}_i\,v\hmid v\,e\hmid \CASE{v}~\BRANCH{1}{x_1}{e_1}\SEP\ldots\SEP\BRANCH{n}{x_n}{e_n}\hmid
\TAPP v\tau\\
E  &\bnfeq []\hmid v\,E
\end{align*}
\caption{\label{fig:syntax} Types, terms and evaluation contexts}
\end{figure*}

Figure~\ref{fig:syntax} gives the syntax of a higher-order functional language with recursive and polymorphic types, and a (countably branching) choice construct. 
We assume disjoint, countably infinite sets of \emph{type variables}, ranged over by $\alpha$, and \emph{term variables}, ranged over by $x$. 
The free type variables of types and terms, $\ftv\tau$ and $\ftv e$, and free term variables $\fv e$, are defined in the usual way. 
The notation $(\cdot)\subst{\vec\alpha}{\vec\tau}$  denotes the simultaneous capture-avoiding substitution of types $\vec\tau$ for the free type variables $\vec\alpha$ in types and terms; similarly, $e\subst{\vec x}{\vec v}$ denotes simultaneous capture-avoiding substitution of values $\vec v$ for the free term variables $\vec x$ in $e$. 

\begin{figure*}
\begin{align*}
\syn{proj}_i\,\PAIR{v_1}{v_2} &\stepsto v_i
& 
\CASE{(\IN{j}\,v)}~(\ldots\SEP\BRANCH{j}{x_j}{e_j}\SEP\ldots) &\stepsto e_j\subst{x_j}{v}\\
(\FUN x e)\,v &\stepsto e \subst x v 
&
\CHOOSE &\stepsto \underline n \qquad\ \  (n\in\N)\\
\TAPP{(\TFUN\alpha e)}\tau &\stepsto e\subst \alpha\tau 
&
v\,e&\stepsto v\, e' \qquad\text{if $e\stepsto e'$}
\end{align*}
\caption{\label{fig:semantics} Operational semantics}
\end{figure*}

\newcommand{\letr}[3]{\ensuremath{\syn{let}~#1=#2~\syn{in}~#3}}

To reduce the number of proof cases in the formal
development, we keep the syntax minimal. For instance, we only include
$\syn{proj}_1 v$, for $v$ a value and not an expression. In examples we may use additional
syntactic sugar. We write
$\syn{let}~x=e~\syn{in}~e'$ for $(\FUN x{e'})\,e$ and $\TAPP e\tau$ for $\syn{let}~f=e~\syn{in}~\TAPP f\tau$ for some fresh $f$. 

We define the unary natural numbers datatype as 
$\NAT = \REC\alpha{\ONE\mathop+\alpha}$ and write 
$\underline 0 = \IN{1}\,\unitval$ and $\underline{n{+}1} = \IN{2}(\underline n)$.
The `erratic' (finitely branching) choice construct $e_1\OR{e_2}$ can be defined from \CHOOSE as $\syn{let}~x=\CHOOSE~\syn{in}~\CASE{x}~\BRANCH{1}{y}{e_1}~\SEP~\BRANCH{2}{y}{e_2}$ for fresh $x,y$.

The operational semantics of the language is given in Figure~\ref{fig:semantics} by a reduction relation $e\stepsto e'$. 
In particular, the choice operator \CHOOSE evaluates nondeterministically to any numeral $\underline n$ ($n\in\N$). 
We also consider evaluation contexts $E$, and write $E[e]$ for the term obtained by plugging $e$ into $E$. It is easy to see that $e\stepsto e'$ holds if and only if $E[e]\stepsto E[e']$. 

\begin{figure*}
\begin{gather*}
\inferrule{x{:}\tau\in\Gamma\\ \Delta\vdash\Gamma}{\Delta;\Gamma\vdash x:\tau}
\qquad
\inferrule{\Delta\vdash\Gamma }{\Delta;\Gamma\vdash\unitval:\ONE}
\qquad
\inferrule{\Delta;\Gamma\vdash v_1:\tau_1\\\Delta;\Gamma\vdash v_2:\tau_2 }{\Delta;\Gamma\vdash\PAIR{v_1}{v_2}:\tau_1\mathop\times\tau_2}
\\[2mm]
\inferrule{\Delta;\Gamma,x{:}\tau_1\vdash e:\tau_2\ }{\Delta;\Gamma\vdash\FUN xe:\tau_1\mathop\to\tau_2}
\qquad
\inferrule{\Delta;\Gamma\vdash v:\tau_j\subst{\alpha}{\REC\alpha{\tau_1\mathop+\ldots\mathop+\tau_n}}}{\Delta;\Gamma\vdash\IN{j}\,v:\REC\alpha{\tau_1\mathop+\ldots\mathop+\tau_n}}\ 1\leq j\leq n
\\[2mm]
%\inferrule{\Delta,\alpha;\Gamma\vdash v:\tau}{\Delta;\Gamma\vdash v : \ALL\alpha\tau}
\inferrule{\Delta,\alpha;\Gamma\vdash e:\tau}{\Delta;\Gamma\vdash\TFUN \alpha e : \ALL\alpha\tau}
\qquad
\inferrule{\Delta;\Gamma\vdash v:\tau_1\times\tau_2}{\Delta;\Gamma\vdash \syn{proj}_i\,v:\tau_i}
\qquad
\inferrule{\Delta;\Gamma\vdash v:\tau'\to\tau\\ \Delta;\Gamma\vdash e:\tau'}{\Delta;\Gamma\vdash v\,e:\tau}
\\[2mm]
\inferrule{\Delta;\Gamma\vdash v:\REC\alpha{\tau_1\mathop+\ldots\mathop+\tau_n}\\
\ldots\\ \Delta;\Gamma,x_j{:}\tau_j\subst{\alpha}{\REC\alpha{\tau_1\mathop+\ldots\mathop+\tau_n}}\vdash e_j:\tau \\ \ldots}
{\Delta;\Gamma\vdash \CASE{v}\,(\ldots\SEP\,\BRANCH{j}{x_j}{e_j}\,\SEP\ldots) : \tau}
\\[2mm]
%\inferrule{\Delta;\Gamma\vdash e:\ALL\alpha\tau\\\Delta\vdash\tau'}{\Delta;\Gamma\vdash e:\tau\subst\alpha{\tau'}}
\inferrule{\Delta;\Gamma\vdash v:\ALL\alpha\tau\\\Delta\vdash\tau'}{\Delta;\Gamma\vdash\TAPP v{\tau'}:\tau\subst\alpha{\tau'}}
\qquad
\inferrule{ \Delta\vdash\Gamma}{\Delta;\Gamma\vdash \CHOOSE :\NAT}
\\[4mm]
\inferrule{\emp\vdash\tau }{  \vdash [] : \tau\sto \tau}
\qquad
\inferrule{\emp;\emp\vdash v:\tau\to\tau_2\\  \vdash E: \tau_1\sto\tau}{  \vdash v\,E: \tau_1\sto \tau_2}
\end{gather*}
\caption{\label{fig:typing} 
Typing of terms and evaluation contexts, where $\Gamma\bnfeq\emptyset\hmid\Gamma,x{:}\tau$ and $\Delta\bnfeq\emptyset\hmid\Delta,\alpha$. 
The notation $\Delta\vdash\tau$ means that $\ftv{\tau}\subseteq\Delta$, and $\Delta\vdash\Gamma$ means that $\Delta\vdash\tau$ holds for all $x{:}\tau\in\Gamma$. 
}
\end{figure*}

Typing judgements take the form $\Delta;\Gamma\vdash e:\tau$ where $\Gamma$ is a typing context $x_1{:}\tau_1,\ldots,x_n{:}\tau_n$ and where $\Delta$ is a finite set of type variables that contains the free type variables of $\tau_1,\ldots,\tau_n$ and $\tau$.   
The rules defining this judgement are  summarized in Figure~\ref{fig:typing}. 
The typing judgement for evaluation contexts, $\vdash E:\tau\sto\tau'$, 
means that $\emp;\emp\vdash E[e]:\tau'$ holds whenever $\emp;\emp\vdash e:\tau$. 

We write $\Type$ for the set of closed types $\tau$, i.e., where $\ftv\tau=\emptyset$. 
We write $\Val(\tau)$ and $\Exp(\tau)$ for the sets of closed values and terms of type $\tau$, resp., and  
$\Ctxt(\tau)$ for the set of $\tau$-accepting evaluation contexts. 
For a typing context $\Gamma = x_1{:}\tau_1,\ldots,x_n{:}\tau_n$ with $\tau_1,\ldots,\tau_n\in\Type$, let 
%\begin{align*}
$\Subst(\Gamma) = \{\gamma\in\Val^{\vec x}\;|\;\forall 1\leq i\leq n.\  \gamma(x_i)\in\Val(\tau_i) \}$ 
%\end{align*}
denote the set of type-respecting value substitutions. 
In particular, if  $\Delta;\Gamma\vdash e:\tau$ then $\emp;\emp\vdash
e\delta \gamma : \tau\delta$ for any $\delta\in \Type^\Delta$ and
$\gamma\in\Subst(\Gamma\delta)$, and the type system satisfies 
standard properties:

\begin{lem}
[Canonical forms]\ 
\begin{itemize}
\item If $v\in\Val{(\ONE)}$ then $v$ is $\unitval$.
\item If $v\in\Val{(\tau_1\mathop\times\tau_2)}$ then $v$ is of the form  $\PAIR{v_1}{v_2}$ for $v_i\in\Val(\tau_i)$. 
\item If $v\in\Val{(\tau_1\mathop\to\tau_2)}$ then $v$ is of the form $\FUN{x}{t}$ for some $x$ and $e$. 
\item If $v\in\Val(\REC\alpha{\tau_1\mathop+\ldots\mathop+\tau_m})$ then $v$ is of the form $\IN{j}\,v'$ for some $1\leq j\leq m$ and $v'\in\Val(\tau_j\subst\alpha{\REC\alpha{\tau_1\mathop+\ldots\mathop+\tau_m}})$.  
\item If $v\in\Val(\ALL\alpha\tau)$ then $v$ is of the form $\TFUN\alpha e$ for some $\alpha$ and $e$. 
\end{itemize}
\end{lem}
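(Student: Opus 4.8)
The plan is to prove each clause by a straightforward case analysis on the typing derivation, exploiting the fact that a value's syntactic form is tightly constrained by which typing rules can conclude a judgement $\emp;\emp\vdash v:\tau$. First I would observe that since $v$ is a value, only the value-introduction rules from Figure~\ref{fig:typing} can be the last rule applied: namely the rules for $\unitval$, for pairs $\PAIR{v_1}{v_2}$, for $\lambda$-abstractions, for injections $\IN{j}\,v$, for type abstractions $\TFUN\alpha e$, and the variable rule. Because we are in the empty typing context $\emp;\emp$, the variable rule cannot apply (there is no $x{:}\tau\in\emp$), so that case is immediately discharged.

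For each clause I would then match the given type against the conclusion of the remaining rules. For instance, for $v\in\Val(\ONE)$, the only introduction rule whose conclusion has type $\ONE$ is the rule for $\unitval$, so $v$ must be $\unitval$. Similarly, a value of type $\tau_1\mathop\times\tau_2$ can only arise from the pairing rule, forcing $v=\PAIR{v_1}{v_2}$; the premises of that rule simultaneously give $\Delta;\Gamma\vdash v_i:\tau_i$, and since subterms of values are values, this yields $v_i\in\Val(\tau_i)$ as required. The arrow case forces $v=\FUN xe$, the recursive-type case forces $v=\IN{j}\,v'$ with the premise supplying $v'\in\Val(\tau_j\subst\alpha{\REC\alpha{\tau_1\mathop+\ldots\mathop+\tau_m}})$, and the polymorphic case forces $v=\TFUN\alpha e$. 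In each instance the side conditions and premises of the matching rule deliver exactly the stated form and typing of the components.

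The one point requiring a little care is that the syntactic form of a value does not by itself name the last typing rule: one must be sure that \emph{no other} introduction rule could produce a value of the given type. Here this is immediate because each value constructor targets a syntactically distinct type former ($\ONE$, $\times$, $\to$, $\mu$, $\forall$), so the head type constructor uniquely determines both the applicable rule and hence the shape of $v$. I expect the main (though still minor) obstacle to be the recursive-type clause, where the typing rule for $\IN{j}$ introduces a type of the form $\REC\alpha{\tau_1\mathop+\ldots\mathop+\tau_n}$ and the premise involves the unfolded type $\tau_j\subst\alpha{\REC\alpha{\tau_1\mathop+\ldots\mathop+\tau_n}}$; one simply reads off the index $j$ and the unfolded typing of $v'$ directly from that premise. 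Since there is no induction on types or terms involved—only inversion on a single typing step—the proof is a routine case split, and I would present it as such.
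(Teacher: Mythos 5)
Your proof is correct. The paper states this lemma without proof, presenting it as one of the ``standard properties'' of the type system; your argument---inversion on the last rule of the typing derivation, discharging the variable case via the empty context, and using the fact that each value constructor targets a syntactically distinct type former so that the head constructor of the type uniquely determines the shape of $v$---is precisely the routine inversion argument the paper takes for granted, and your observation that no induction is needed (the premises of the matching rule directly supply the typing of the components, which are syntactically values) is accurate.
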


\begin{prop}[Preservation and progress]\ 
\begin{itemize}
\item If $e\in\Exp(\tau)$ and $e\stepsto e'$ then $e'\in\Exp(\tau)$. 
\item If $e\in\Exp(\tau)\setminus\Val(\tau)$ then $e\stepsto e'$ for some $e'$. 
\end{itemize}
\end{prop}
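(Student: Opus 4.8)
The plan is to prove the two parts separately, treating preservation first.

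For \emph{preservation}, I would proceed by case analysis on the rule used to derive $e\stepsto e'$ (Figure~\ref{fig:semantics}). Since $e\in\Exp(\tau)$ means $\emp;\emp\vdash e:\tau$, and every reduction rule except the congruence rule $v\,e\stepsto v\,e'$ fires at the top level, each case reduces to inverting the typing derivation of $e$ and reassembling a derivation for $e'$. The two auxiliary results this needs are a \textbf{value substitution lemma}---if $\Delta;\Gamma,x{:}\tau'\vdash e_0:\tau$ and $\Delta;\Gamma\vdash v:\tau'$ then $\Delta;\Gamma\vdash e_0\subst{x}{v}:\tau$---and a \textbf{type substitution lemma}---if $\Delta,\alpha;\Gamma\vdash e_0:\tau$ and $\Delta\vdash\tau'$ then $\Delta;\Gamma\subst{\alpha}{\tau'}\vdash e_0\subst{\alpha}{\tau'}:\tau\subst{\alpha}{\tau'}$---each proved by a routine induction on the typing derivation. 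The $\beta$-rule $(\FUN{x}{e_0})\,v\stepsto e_0\subst{x}{v}$ and the case-rule both use value substitution (the latter after inverting the typing of $\IN{j}\,v$); the type-application rule $\TAPP{(\TFUN{\alpha}{e_0})}{\tau'}\stepsto e_0\subst{\alpha}{\tau'}$ uses type substitution; projection follows by inverting the typing of the pair; $\CHOOSE\stepsto\underline{n}$ is immediate since $\emp;\emp\vdash\underline{n}:\NAT$; and the congruence case follows from the induction hypothesis applied to the premise $e\stepsto e'$.

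For \emph{progress}, I would argue by induction on the derivation of $\emp;\emp\vdash e:\tau$, using the Canonical Forms lemma throughout. If $e$ is a value there is nothing to prove, so assume $e\in\Exp(\tau)\setminus\Val(\tau)$ and consider the last typing rule. If $e=\CHOOSE$ then $e\stepsto\underline{0}$. If $e=\syn{proj}_i\,v$ then by canonical forms $v=\PAIR{v_1}{v_2}$ and $e$ steps; similarly $e=\CASE{v}\,(\ldots)$ has $v=\IN{j}\,v'$ and $e=\TAPP{v}{\tau'}$ has $v=\TFUN{\alpha}{e_0}$, so both step. The one genuinely compound case is application $e=v\,e_0$: if $e_0$ is a value then canonical forms give $v=\FUN{x}{e_1}$ and the $\beta$-rule applies, while if $e_0$ is not a value the induction hypothesis yields $e_0\stepsto e_0'$ and the congruence rule gives $v\,e_0\stepsto v\,e_0'$.

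There is no serious obstacle here: this is the standard syntactic type-safety argument. The only points requiring care are the two substitution lemmas, where one must check that substitution commutes correctly with the unfoldings of recursive types appearing in the injection and case rules, and handle capture-avoidance when substituting a type under a $\Lambda$-binder. I would also note that the design decision to restrict $\syn{proj}_i$, $\syn{case}$, and the function position of an application to \emph{values} makes progress almost immediate, since the only position in which a subterm must still be reduced is the argument of an application---which is exactly the single congruence rule.
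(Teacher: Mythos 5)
Your proposal is correct and follows exactly the standard syntactic type-safety argument (case analysis on the reduction rule plus value- and type-substitution lemmas for preservation; induction on the typing derivation with the Canonical Forms lemma for progress) that the paper itself implicitly appeals to: the paper states this proposition without proof, merely noting that the type system satisfies standard properties. Your closing observation---that restricting $\syn{proj}_i$, $\syn{case}$, and the function position of applications to values leaves the application argument as the sole congruence case, matching the evaluation contexts $E \bnfeq []\hmid v\,E$---is accurate and correctly identifies why progress is nearly immediate in this language.
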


Following Lassen~\cite{Lassen:PhD}, 
we let $\FIX :
\ALL{\alpha,\beta}{((\alpha\mathop\to\beta)\to(\alpha\mathop\to\beta))\to(\alpha\mathop\to\beta)}$
denote a variant of the (call-by-value) fixed point combinator from
untyped lambda calculus, $\FIX =
\TFUN{\alpha,\beta}{\FUN{f}{\delta_f(\IN{}\,\delta_f)}}$ where
$\delta_f$ is the term
$\FUN{y}{\CASE{y}~\BRANCH{}{y'}{f(\FUN{x}{\syn{let}~r\mathop=y'\,y~\syn{in}~r\,x})}}$.
In Section~\ref{sect:applications} we show in what sense $\FIX$ is a
fixed point combinator. We write $\Omega :\ALL\alpha\alpha$ for the term
$\TFUN\alpha{\TAPP{\FIX}{\ONE\,\alpha}\,(\FUN{f}{f})\,\unitval}$.
Note that, for all closed types $\tau$, reduction from
$\TAPP{\Omega}{\tau}$ is deterministic and non-terminating.

\paragraph*{Contextual approximation}

We follow Lassen's approach   \cite{Lassen:PhD} and define contextual approximation 
as the largest relation that satisfies certain compatibility and adequacy properties (also see, e.g.\ \cite{Pitts:05,Pitts:10}). 
The technical advantage of this approach, compared to the more traditional one of universally quantifying over program contexts, is that in proofs there will be no need to explicitly take care of contexts and of term occurrences within contexts. 
In our terminology, we keep close to Pitts \cite{Pitts:05}, except for suitably adapting the definitions to take the nondeterministic outcomes of evaluation into account. 

The observables on which contextual approximation is based are given
by may- and must-convergence.  A closed term $e$ \emph{may-converges},
written $e\downarrow$, if $e\stepsto^* v$ for some $v\in\Val$, and $e$
\emph{may-diverges}, written $e\uparrow$, if there is an infinite
reduction sequence starting from $e$.  The \emph{must-convergence}
predicate $e\Downarrow$ is the complement of may-divergence, and it
can be defined as the least predicate satisfying
$e\Downarrow$ if for all $e'$, if $e\stepsto e'$ then
$e'\Downarrow$. In addition, we say that $e$ \emph{must-diverges} if it does not may-converge.

\begin{defi}
[Type-indexed relation]
A \emph{type-indexed relation} is a set consisting of tuples $(\Delta,\Gamma,e,e',\tau)$ 
such that $\Delta;\Gamma\vdash e:\tau$ and $\Delta;\Gamma\vdash e':\tau$ holds, where
we write $\Delta;\Gamma\vdash e\mathrel{\R} e':\tau$ for  $(\Delta,\Gamma,e,e',\tau)\in\R$. 
\end{defi}

\begin{defi}
[Precongruence]
A type-indexed relation \R is \emph{reflexive} if 
$\Delta;\Gamma\vdash e:\tau$ implies $\Delta;\Gamma\vdash e\mathrel\R e:\tau$. 
It is \emph{transitive} if 
$\Delta;\Gamma\vdash e\mathrel\R e':\tau$ and $\Delta;\Gamma\vdash e'\mathrel\R e'':\tau$ implies $\Delta;\Gamma\vdash e\mathrel\R e'':\tau$. 
A \emph{precongruence} is a reflexive and transitive type-indexed relation \R that is closed under the inference rules in 
Figure~\ref{fig:compatibility}. 
\end{defi}

\begin{figure}[t]
\begin{gather*}
\inferrule{ }{\Delta;\Gamma\vdash x\mathrel\R x:\tau}\ x{:}\tau\in\Gamma
\qquad
\inferrule{  }{\Delta;\Gamma\vdash\unitval \mathrel\R \unitval:\ONE}
\\[2mm]
\inferrule{\Delta;\Gamma\vdash v_1\mathrel\R v_1':\tau_1\\\Delta;\Gamma\vdash v_2\mathrel\R v_2':\tau_2 }{\Delta;\Gamma\vdash\PAIR{v_1}{v_2}\mathrel\R \PAIR{v_1'}{v_2'}:\tau_1\times\tau_2}
\qquad
\inferrule{\Delta;\Gamma,x{:}\tau_1\vdash e\mathrel\R e':\tau_2 }{\Delta;\Gamma\vdash\FUN xe\mathrel\R\FUN x{e'}:\tau_1\to\tau_2}
\\[2mm]
\inferrule{\Delta;\Gamma\vdash v\mathrel\R v':\tau_j\subst\alpha{\REC\alpha{\tau_1\mathop+\ldots\mathop+\tau_n}}}{\Delta;\Gamma\vdash\IN{j}\,v\mathrel\R \IN{j}\,v':\REC\alpha{\tau_1\mathop+\ldots\mathop+\tau_n}}\ 1\leq j\leq n
\qquad
%\inferrule{\Delta,\alpha;\Gamma\vdash v\mathrel\R v':\tau}{\Delta;\Gamma\vdash  v\mathrel\R  {v'} : \ALL\alpha\tau}
\inferrule{\Delta,\alpha;\Gamma\vdash e\mathrel\R e':\tau}{\Delta;\Gamma\vdash\TFUN \alpha e\mathrel\R\TFUN \alpha {e'} : \ALL\alpha\tau}
\\[2mm]
\inferrule{\Delta;\Gamma\vdash v\mathrel\R v':\tau_1\times\tau_2}{\Delta;\Gamma\vdash \syn{proj}_i\,v\mathrel\R\syn{proj}_i\,v':\tau_i}
\qquad
\inferrule{\Delta;\Gamma\vdash v\mathrel\R v':\tau'\to\tau\\ \Delta;\Gamma\vdash e\mathrel\R e':\tau'}{\Delta;\Gamma\vdash v\,e\mathrel\R v'\,e':\tau}
\\[2mm]
\inferrule{\Delta;\Gamma\vdash v\mathrel\R v':\tau\\
\ldots\\
\Delta;\Gamma,x_j{:} \tau_j\subst{\alpha}{\tau}\vdash e_j\mathrel\R e_j':\tau'\\
\ldots\\
}
{\Delta;\Gamma\vdash \CASE{v}\,(\ldots\SEP\,\BRANCH{j}{x_j}{e_j}\,\SEP\ldots) \mathrel\R\CASE{v'}\,(\ldots\SEP\,\BRANCH{j}{x_j}{e_j}\,\SEP\ldots): \tau'}\ \tau=\REC\alpha{\tau_1\mathop+\ldots\mathop+\tau_n}
\\[2mm]
%\inferrule{\Delta;\Gamma\vdash e\mathrel\R e':\ALL\alpha\tau}{\Delta;\Gamma\vdash e\mathrel\R {e'}:\tau\subst\alpha{\tau'}}\ \Delta\vdash\tau'
\inferrule{\Delta;\Gamma\vdash v\mathrel\R v':\ALL\alpha\tau}{\Delta;\Gamma\vdash\TAPP v{\tau'}\mathrel\R\TAPP{v'}{\tau'}:\tau\subst\alpha{\tau'}}\ \ftv{\tau'}\subseteq\Delta%\Delta\vdash\tau'
\qquad
\inferrule{  }{\Delta;\Gamma\vdash \CHOOSE\mathrel\R\CHOOSE :\NAT}
\end{gather*}
\caption{\label{fig:compatibility} Compatibility properties of type-indexed relations}
\end{figure}

\begin{defi}
[May- and must-adequate relations] 
A type-indexed relation \R is \emph{may-adequate} if, whenever  
$\emp;\emp\vdash e\mathrel{\R} e':\tau$ holds, then $e\downarrow$ implies $e'\downarrow$. 
It is \emph{must-adequate} if, whenever $\emp; \emp\vdash e\mathrel{\R} e':\tau$ holds, then $e\Downarrow$ implies $e'\Downarrow$. 
\end{defi}

\begin{defi}
[Contextual approximations and equivalences]
\emph{May-contextual approximation}, written $\mayctxtapprox$, is the largest may-adequate precongruence. 
\emph{May-contextual equivalence}, $\mayctxtequiv$, is the symmetrization of  $\mayctxtapprox$. 
Analogously, \emph{must-contextual approximation}, written $\mustctxtapprox$, is the largest must-adequate precongruence, and 
\emph{must-contextual equivalence}, $\mustctxtequiv$, is its symmetrization.   
\emph{Contextual approximation}, $\ctxtapprox$, and \emph{contextual equivalence}, $\ctxtequiv$, are given as intersections of the respective may- and must-relations,  
and thus ${\ctxtequiv}$ is also the symmetrization of $\ctxtapprox$. 
\end{defi}

That this largest (may-, must-) adequate precongruence exists can be shown as in \cite{Pitts:05}, by proving that  
the relation $S=\bigcup\{R\;|\;\text{$R$ compatible and (may-, must-) adequate}\}$  is   an adequate precongruence. 

%First, the identity relation $\textit{Id}=\{(\Delta,\Gamma,e,e,\tau)\;|\;\Delta;\Gamma\vdash e:\tau\}$ is an adequate congruence, hence $\textit{Id}\subseteq S$, i.e., $S$ is reflexive. 
%Next, adequate relations are closed under taking non-empty unions, hence $S$ is adequate. 
%Moreover, the compatible and adequate   relations are closed under relation composition, hence $S$ is transitive. 
%Finally, any non-empty union of compatible relations that is transitive is compatible, hence $S$ is compatible. 
%%\begin{enumerate}
%%\item The identity relation $\textit{Id}=\{(\Delta,\Gamma,e,e,\tau)\;|\;\Delta;\Gamma\vdash e:\tau\}$ is an adequate congruence, hence $\textit{Id}\subseteq S$, i.e., $S$ is reflexive. 
%%\item The adequate relations are closed under taking non-empty unions, hence $S$ is adequate. 
%%\item The compatible and adequate   relations are closed under relation composition, hence $S$ is transitive. 
%%\item A non-empty union of compatible relations that is transitive is compatible, hence $S$ is compatible. 
%%\end{enumerate}
%Clearly $S$ is the largest adequate and compatible relation. %, since it contains any other such relation

In principle, to establish an equivalence $\Delta;\Gamma\vdash
e\ctxtequiv e':\tau$ it suffices to find some may- and must-adequate congruence
$\R$ that contains the tuple $(\Delta,\Gamma,e,e',\tau)$ since
$\ctxtequiv$ is the largest such relation.  However, in practice it is
difficult to verify that a relation $\R$ has the necessary compatibility
properties   in Figure~\ref{fig:compatibility}.  An alternative
characterization of the contextual approximation and equivalence
relations can be  given in terms of  CIU preorders \cite{Mason:Talcott:91}, which we
define next.

\begin{defi}
[CIU preorders] 
\emph{May- and must-CIU preorder}, written $\mayciu$ and $\mustciu$ resp., are the type-indexed relations defined as follows:
for all $e,e'$ with $\Delta;\Gamma\vdash e:\tau$ and $\Delta;\Gamma\vdash e':\tau$, 
\begin{itemize}
\item $\Delta;\Gamma\vdash e\mayciu e':\tau\ 
\IFF\ 
\forall\delta \mathop\in\Type^\Delta,\,  \gamma\mathop\in\Subst(\Gamma\delta),\, E \mathop\in\Ctxt(\tau\delta).\ 
{E[e \delta\gamma]\downarrow}\, \Rightarrow\, E[e' \delta\gamma]\downarrow$
\item $\Delta;\Gamma\vdash e\mustciu e':\tau\ 
\IFF\ 
\forall\delta \mathop\in\Type^\Delta,\,  \gamma \mathop\in\Subst(\Gamma\delta),\, E \mathop\in\Ctxt(\tau\delta).\ 
{E[e \delta \gamma]\Downarrow}\, \Rightarrow\, E[e' \delta\gamma]\Downarrow$
\end{itemize}
The CIU preorder is defined as the intersection of $\mayciu$ and $\mustciu$. 
\end{defi}

\begin{thm}
[CIU theorem] 
\label{thm:ciu}
The (may-, must-) CIU preorder coincides with (may-, must-) contextual approximation.
\end{thm}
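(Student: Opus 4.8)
The plan is to establish, separately for the may- and the must-variants, the two inclusions whose conjunction is the claim. Writing $\ciu$ and $\ctxtapprox$ for either variant, the routine inclusion is ${\ctxtapprox}\subseteq{\ciu}$ and the substantial one is ${\ciu}\subseteq{\ctxtapprox}$. For the latter I would \emph{not} try to verify the compatibility rules of Figure~\ref{fig:compatibility} for $\ciu$ directly; rather, since $\ciu$ is trivially reflexive, transitive and adequate (instantiate with the empty evaluation context and identity substitutions), it remains only to prove that $\ciu$ is \emph{compatible}, for then $\ciu$ is an adequate precongruence and maximality of $\ctxtapprox$ yields the inclusion. Compatibility of $\ciu$ will in turn be extracted from the step-indexed biorthogonal logical relation $\logapprox$ of Sections~\ref{sect:maytheory} and~\ref{sect:musttheory}, by showing ${\ciu}={\logapprox}$ and invoking the Fundamental Property of $\logapprox$.

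For the routine inclusion, suppose $\Delta;\Gamma\vdash e\ctxtapprox e':\tau$ and fix $\delta\in\Type^\Delta$, $\gamma\in\Subst(\Gamma\delta)$ and $E\in\Ctxt(\tau\delta)$. As a reflexive precongruence, $\ctxtapprox$ is closed under substitution of types and of values for free variables—using the compatibility rules for $\Lambda$/type application and $\lambda$/application together with the auxiliary fact that $\ctxtapprox$ relates each $\beta$-redex to its contractum—so $\emp;\emp\vdash e\delta\gamma\ctxtapprox e'\delta\gamma:\tau\delta$. Induction on $E$, again through the compatibility rules, then gives $\emp;\emp\vdash E[e\delta\gamma]\ctxtapprox E[e'\delta\gamma]:\tau'$ for the result type $\tau'$ of $E$, and adequacy of $\ctxtapprox$ converts this into the required implication between (may- or must-) convergences.

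The identity ${\ciu}={\logapprox}$ rests on two facts about the logical relation. The inclusion ${\logapprox}\subseteq{\ciu}$ is immediate from the Fundamental Property and adequacy: if $e\logapprox e'$, then compatibility closes them under an arbitrary closing substitution and evaluation context, and adequacy (which holds because values always converge, so the identity stacks are self-orthogonal) converts the result into the defining convergence implication of $\ciu$. For the converse ${\ciu}\subseteq{\logapprox}$ I would use that the biorthogonal term relation $\TT{r}$ is closed under $\ciu$ on either side: if $(E,E')\in\T{r}$ and $e_1\ciu e_2$, $(e_2,e_3)\in\TT{r}$, $e_3\ciu e_4$, then convergence of $E[e_1]$ propagates through $E[e_2]$ and $E'[e_3]$ to $E'[e_4]$, so $(e_1,e_4)\in\TT{r}$. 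Combining this closure with the Fundamental Property—which makes $e$ logically related to itself under any pair of related closing substitutions—then yields $e\ciu e'\Rightarrow e\logapprox e'$, completing the identity.

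The principal obstacle is the Fundamental Property—compatibility of $\logapprox$—at the recursive and impredicatively polymorphic types, where the relation cannot be defined by induction on type structure; this is precisely where step-indexing enters, stratifying the value interpretations so that the clauses for $\REC\alpha{\ldots}$ and $\ALL\alpha\tau$ are well founded. The must-variant carries the additional difficulty that the observation $\Downarrow$ is the least fixed point of $\Phi$ only after $\omegaOne$-many iterations, so its relations must be indexed over $\omegaOne$ rather than $\omega$, and the biorthogonal closure $\TT{(\cdot)}$ must be taken with respect to a correspondingly ordinal-stratified notion of convergence in order for adequacy with respect to $\Downarrow$ to go through.
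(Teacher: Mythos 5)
Your route is the same as the paper's, in both structure and ordering: first establish ${\ciu}\subseteq{\ctxtapprox}$ by showing $\ciu$ is an adequate precongruence, with compatibility imported from the logical relation via the coincidence ${\ciu}={\logapprox}$ and the Fundamental Property (Proposition~\ref{prop:FTLR} and Theorems~\ref{thm:coincidence}, \ref{thm:must-coincidence}); then get ${\ctxtapprox}\subseteq{\ciu}$ from the closed case (reflexivity, compatibility under evaluation contexts, adequacy) plus substitutivity of $\ctxtapprox$ (Figure~\ref{fig:substitutivity}), where your ``auxiliary fact'' that $\ctxtapprox$ relates $\beta$-redexes to contracta is exactly the paper's move of routing the redex/contractum equivalences through $\ciu$ and the already-proved inclusion ${\ciu}\subseteq{\ctxtapprox}$ --- so your chosen dependency order is not only convenient but necessary.

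One auxiliary claim, however, is false as written: that $\TT{r}$ is closed under $\ciu$ ``on either side.'' Right-closure is sound, since the conclusion of membership in $\TT{r}_n$ is the \emph{unindexed} convergence $E'[e_3]\downarrow$, and CIU instantiated at $E'$ converts it to $E'[e_4]\downarrow$. Left-closure is not: to show $(e_1,e_4)\in\TT{r}_n$ you must discharge the \emph{indexed} hypothesis $E[e_1]\downarrow_j$ for $(E,E')\in\T{r}_j$, $j\leq n$, but $e_1\ciu e_2$ only yields the unindexed $E[e_2]\downarrow$, i.e.\ $E[e_2]\downarrow_k$ for some $k$ that may exceed $j$ --- and since uniform relations shrink as the index grows, $(E,E')\in\T{r}_j$ does not give $(E,E')\in\T{r}_k$. (The same failure occurs, a fortiori, in the must case with $\Downarrow_\beta$ over $\omegaOne$; it is precisely why the paper's Lemma~\ref{lem:ext-lemma} permits only index-preserving reductions $\stepstozeropure$ on the left.) Fortunately your final assembly uses only the right-sided closure: from $e\logapprox e$ (Fundamental Property) and $e\ciu e'$ composed on the right you get $e\logapprox e'$, which is exactly the paper's argument. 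So the proof goes through once the left-closure half of your claim is deleted.
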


Using the CIU theorem, it is easy to verify that all the deterministic reductions are also valid equivalences, and that the various call-by-value  eta laws hold. 
Moreover, we can establish the laws of Moggi's computational lambda calculus and the basic (inequational) theory of erratic choice (Figure~\ref{fig:basictheory}). 
We will prove the CIU theorem in Section~\ref{sect:maytheory} (for the may-CIU preorder) and Section~\ref{sect:musttheory} (for the must-CIU preorder). 
The CIU theorem was also proved, using different operational
techniques, for a langauge with countable nondeterminism (but no
polymorphism) in~\cite{Lassen:PhD}. 

\begin{figure}
\begin{align*}
%\syn{let}\,~y\mathop=(\syn{let}~x\mathop=e_1~\syn{in}~e_2)~\syn{in}~e_3 \ctxtequiv \syn{let}\,~x\mathop=e_1~\syn{in}~\syn{let}~y\mathop=e_2~\syn{in}~e_3\quad (x\notin\fv{e_3})
\syn{let}~x\mathop=\CHOOSE~\syn{in}~e &\ctxtequiv e\ \ (x\notin\fv{e})
&
\syn{let}~x\mathop=v~\syn{in}~e &\ctxtequiv e\subst x v
&
\syn{let}~x\mathop=e~\syn{in}~x &\ctxtequiv e
\\
e \OR e &\ctxtequiv e
&
\TAPP{\Omega}{\tau}&\mayctxtapprox e
&
\TAPP{\Omega}{\tau} &\mustctxtapprox e
\\
e_1\OR e_2 &\ctxtequiv e_2\OR e_1
&
e_1 &\mayctxtapprox  e_1\OR e_2 
&
e_1\OR e_2 &\mustctxtapprox e_1
\\
(e_1\OR e_2)\OR e_3 &\ctxtequiv e_1\OR (e_2\OR e_3) 
&
e\OR(\TAPP{\Omega}{\tau}) &\mayctxtequiv e
&
e\OR(\TAPP{\Omega}{\tau}) &\mustctxtequiv \TAPP{\Omega}{\tau}
\end{align*}
\caption{\label{fig:basictheory} Basic may- and must-theory, where
  $e_1\OR{e_2}$ is an abbreviation for the term
  $\syn{let}~x=\CHOOSE~\syn{in}~\CASE{x}~\BRANCH{1}{y}{e_1}~\SEP~\BRANCH{2}{y}{e_2}$,
and $e$ is of type $\tau$.}
\end{figure}

\section{Uniform relations} 
\label{sect:uniform}
For a limit ordinal number $\alpha$ and a set $X$ we define an
\emph{$\alpha$-indexed uniform subset on $X$} to be a family
$(\r_\beta)_{\beta<\alpha}$ of subsets $\r_\beta\subseteq X$ such
that
\begin{itemize}
\item $\r_0 = X$, 
\item $\r_{\beta+1}\subseteq \r_{\beta}$ for all $\beta<\alpha$, and 
\item $\r_\lambda = \bigcap_{\beta<\lambda} \r_\beta$ for every limit ordinal $\lambda<\alpha$.
\end{itemize}
Let $\ordURel\alpha(X)$ denote the $\alpha$-indexed uniform subsets on $X$. 

\paragraph*{Recursive definitions}

The notions of $n$-equivalence, non-expansiveness and contractiveness
(e.g., \cite{Birkedal:Reus:Schwinghammer:Stovring:Thamsborg:Yang:11})
all generalize from the case of $\omega$-indexed uniform subsets:
Given $\alpha$-indexed uniform subsets $\r,\s\in\ordURel\alpha(X)$
and $\nu<\alpha$ we say that $\r$ and $\s$ are
\emph{$\nu$-equivalent}, written $\r\nequiv\nu \s$, if
$\r_\beta=\s_\beta$ for all $\beta\leq\nu$.  In particular, $\r= \s$
if and only if $\r\nequiv\nu \s$ for all $\nu<\alpha$.  

A function $\f: \ordURel\alpha(X_1)\mathop\times\cdots\mathop\times\ordURel\alpha(X_n)\to\ordURel\alpha(X)$ is \emph{non-expansive} if
$\vec\r\nequiv\nu\vec\s$ implies ${\f(\vec \r)}\nequiv\nu {\f(\vec\s)}$, and $\f$ is
\emph{contractive} if $\vec\r\nequiv\nu \vec\s$ implies
${\f(\vec\r)}\nequiv{\nu\mathop+1} {\f(\vec\s)}$. 
If $\r\in\ordURel\alpha(X)$ then $\LATER \r\in\ordURel\alpha(X)$ is
the uniform subset determined by $({\LATER \r})_{\beta\mathop+1} =
\r_\beta$; this operation gives rise to a contractive function on
$\ordURel\alpha(X)$. Henceforth, we often omit parentheses
and write ${\LATER \r}_\beta$ for $({\LATER \r})_\beta$.

\begin{prop}
[Unique fixed points]
\label{prop:unique-fixed-points}
If $\f : \ordURel\alpha(X)\to\ordURel\alpha(X)$ is  contractive, 
then $\f$ has a unique fixed point $\textit{fix}\,r.\f(r)$. 
%This fixed point is given by $\f^{\alpha}(\r)$ for arbitrary $\r\in\ordURel\alpha(X)$, where 
%$\f^0(\r) = \r$, $\f^{\beta\mathop+1}(\r) = \f(\f^\beta(\r))$ and where 
%$\f^\lambda(\r)$ for a limit ordinal $\lambda$ is the componentwise intersection of the uniform relations $\f^\beta(\r)$ with ${\beta<\lambda}$. 
\end{prop}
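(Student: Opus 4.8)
The plan is to mimic the classical Banach-style argument for contractive maps, but carried out in the ordinal-indexed setting where "distance" is replaced by the equivalence relations $\nequiv\nu$. The key structural fact I would exploit is that an $\alpha$-indexed uniform subset $\r$ is completely determined by its components $(\r_\beta)_{\beta<\alpha}$, and that two such subsets are equal iff they are $\nu$-equivalent for all $\nu<\alpha$ (as noted in the excerpt). So to construct a fixed point it suffices to pin down each component $\r_\beta$ in a way that is forced by contractiveness, and to prove uniqueness it suffices to show any two fixed points are $\nu$-equivalent for every $\nu$.

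First I would prove \emph{uniqueness}, since it is both easier and guides the existence construction. Suppose $\r$ and $\s$ are both fixed points of $\f$, so $\f(\r)=\r$ and $\f(\s)=\s$. I claim $\r\nequiv\nu\s$ for all $\nu<\alpha$, by transfinite induction on $\nu$. The base case $\nu=0$ is immediate because every uniform subset has $0$-component equal to $X$. For the successor step, if $\r\nequiv\nu\s$ then contractiveness of $\f$ gives $\f(\r)\nequiv{\nu+1}\f(\s)$, i.e.\ $\r\nequiv{\nu+1}\s$. For a limit ordinal $\lambda<\alpha$, the inductive hypothesis gives $\r\nequiv\nu\s$ for all $\nu<\lambda$, which means $\r_\beta=\s_\beta$ for all $\beta<\lambda$; since $\r_\lambda=\bigcap_{\beta<\lambda}\r_\beta=\bigcap_{\beta<\lambda}\s_\beta=\s_\lambda$ by the limit condition in the definition of uniform subset, we get $\r\nequiv\lambda\s$. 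Hence $\r=\s$.

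For \emph{existence}, I would build the fixed point componentwise by transfinite recursion, letting the uniqueness argument tell me what each component must be. Concretely, I would define a uniform subset $\r$ by specifying $\r_\beta$ and simultaneously verifying that $\r\nequiv\beta\f(\r)$ holds as the recursion proceeds; contractiveness guarantees that the value of $\f(\r)$ at index $\beta$ depends only on the components $\r_\gamma$ for $\gamma<\beta$ (this is exactly what $\vec\r\nequiv\nu\vec\s\Rightarrow\f(\vec\r)\nequiv{\nu+1}\f(\vec\s)$ says), so the recursion is well-founded. Setting $\r_\beta\defeq\f(\r)_\beta$ at each stage is consistent with the uniform-subset conditions precisely because $\f$ lands in $\ordURel\alpha(X)$ and is contractive: the successor and limit closure conditions on $\r$ are inherited from those of $\f(\r)$. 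I would then check that the resulting family is a genuine fixed point by showing $\r\nequiv\nu\f(\r)$ for every $\nu$, again by transfinite induction using contractiveness in the successor case and the limit condition in the limit case, and conclude $\r=\f(\r)$.

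The main obstacle I anticipate is making the existence construction precise: because $\f$ takes a \emph{whole} uniform subset as input rather than a single component, I cannot literally define $\r_\beta$ in isolation from $\r$. The clean way around this is to observe that contractiveness makes the $(\beta{+}1)$-component of $\f(\r)$ depend only on $\r\!\restriction\!\beta$, so I would either (i) set up the recursion on approximations, defining a compatible family of partial uniform subsets $\r^{(\nu)}$ with $\r^{(\nu)}\nequiv\nu\f(\r^{(\nu')})$ for $\nu'\geq\nu$ and taking a limit, or (ii) more slickly, define the candidate fixed point outright by $\r_\beta\defeq\bigcap_{\gamma\leq\beta}(\LATER^{?}\cdots)$-style iterates and verify the three uniform-subset axioms directly. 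Option (i) is the most transparent and is the route I would write up, with the verification that the limit is well-defined being the one genuinely delicate point; everything else reduces to routine transfinite inductions.
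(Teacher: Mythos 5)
Your proposal is correct and takes essentially the same route as the paper: uniqueness by transfinite induction on $\nu$ (contractiveness at successors, the uniformity conditions $\r_0=X$ and $\r_\lambda=\bigcap_{\beta<\lambda}\r_\beta$ at zero and limits), and existence by a componentwise transfinite recursion whose well-foundedness rests on the observation that contractiveness makes $\f(\r)_{\nu+1}$ depend only on the components of $\r$ up to $\nu$. The single point you flag as delicate is resolved in the paper by a concrete form of your option (i): to define $\textit{fix}\,r.\f(r)_{\nu+1}$ one applies $\f$ to the \emph{total} uniform subset that agrees with the already-constructed components for $\beta\leq\nu$ and is padded with $\emptyset$ for $\beta>\nu$, contractiveness guaranteeing that the choice of padding is irrelevant, so no separate limit-of-approximants construction is needed.
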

\begin{proof} 
First note that $\f$ has at most one fixed point: if $\r,\s$ are fixed points of $\f$ then, by the contractiveness of $\f$, we can establish that  $\r = F(\r)\nequiv\nu \f(\s) = \s$ holds for all $\nu<\alpha$ by induction and thus $\r=\s$. 

Because of the uniformity conditions it is sufficient to  give  the components of  the fixed point  $\textit{fix}\,r.\f(r)$  
 that are indexed by successor ordinals. 
We set $\textit{fix}\,r.\f(r)_{\nu\mathop+1} = \f(\r)_{\nu\mathop+1}$ where 
$\r\in\ordURel\alpha(X)$ is defined by 
$\r_\beta = \textit{fix}\,r.\f(r)_\beta$ for $\beta\leq\nu$ and $\r_\beta = \emptyset$ for $\beta>\nu$.  
By induction, it is easy to see that $\textit{fix}\,r.\f(r)\in\ordURel\alpha(X)$ 
and that $\f(\textit{fix}\,r.\f(r))_\nu = \textit{fix}\,r.\f(r)_\nu$ holds for all $\nu<\alpha$, 
and thus $\f(\textit{fix}\,r.\f(r)) = \textit{fix}\,r.\f(r)$. 
\end{proof}

Proposition~\ref{prop:unique-fixed-points} is an instance of Di
Gianantonio and Miculan's     sheaf-theoretic fixed point
theorem \cite{DiGianantonio:Miculan:04}. Indeed, an $\alpha$-indexed
uniform subset on $X$ corresponds to a subobject of the constant
sheaf on $X$ in the sheaf topos on $\alpha$. 

\paragraph*{Uniform relations on syntax}

For $\tau,\tau'\in\Type$ we consider the collections of
$\beta$-indexed \emph{uniform relations}  between values, terms and
evaluation contexts: 
we write $\VRel_\beta(\tau,\tau')$ for $\ordURel\beta(\Val(\tau)\mathop\times\Val(\tau'))$,
we write $\SRel_\beta(\tau,\tau')$ for $\ordURel \beta(\Ctxt(\tau)
\mathop\times\Ctxt(\tau'))$, and we use $\ERel_\beta(\tau,\tau')$ for $\ordURel
\beta(\Exp(\tau) \mathop\times\Exp(\tau'))$.
Note that a value relation may relate values of distinct types; that
is essential for reasoning about relational parametricity, see, e.g., the
proof of Lemma~\ref{lem:must:ex:converging}.

The description of the logical relations in the sections below makes use of the following (non-expansive) constructions on uniform relations: \begin{itemize}
\item  
$\r_1\mathop\times \r_2 \in\VRel_\beta(\tau_1\mathop\times\tau_2,\tau_1'\mathop\times\tau_2')$, 
for $\r_1\in\VRel_\beta(\tau_1,\tau_1')$ and $\r_2\in\VRel_\beta(\tau_2,\tau_2')$,  is defined by 
$(\r_1\mathop\times \r_2)_\nu\, =\, \{(\PAIR{v_1}{v_2}, \PAIR{v_1'}{v_2'})\;|\;(v_1,v_1')\in (\r_1)_{\nu}\ \wedge\  (v_2,v_2')\in (\r_2)_{\nu}\}$.
\vspace{2mm}
\item 
$\r_1\mathop\to \r_2 \in\VRel_\beta(\tau_1\mathop\to\tau_2,\tau_1'\mathop\to\tau_2')$, for $\r_1\in\VRel_\beta(\tau_1,\tau_1')$ and $\r_2\in\ERel_\beta(\tau_2,\tau_2')$, is given by 
$(\r_1\mathop\to \r_2)_\nu = \{(\FUN xe,\FUN x{e'})\;|\;\forall \nu'\,{\leq}\, \nu.\, \forall (v,v')\,{\in}\, (\r_1)_{\nu'}.\, (e\subst{x}{v}, e'\subst{x}{v'})\,{\in}\,(\r_2)_{\nu'}\}$.% 
\vspace{2mm}
%
%\item 
%$\forall r.F(r)\,{\in}\,\VRel_\beta(\ALL\alpha{\tau_1},\ALL\alpha{\tau_1'})$, for a family of non-expansive maps $F_{\tau,\tau'}:\VRel_\beta(\tau,\tau')\to\VRel_\beta(\tau_1\subst\alpha\tau,\tau_1'\subst\alpha{\tau'})$, is %given by 
%$\forall r.F(r)_\beta= \bigcap \{F_{\tau,\tau'}(\r)_\beta\;|\;\tau,\tau'\,{\in}\,\Type, \r\,{\in}\,\VRel_\beta(\tau,\tau') \}$.
%\vspace{2mm}
\item $\forall
 r.F(r)\,{\in}\,\VRel_\beta(\ALL\alpha{\tau_1},\ALL\alpha{\tau_1'})$,
 for
 $F_{\tau,\tau'}:\VRel_\beta(\tau,\tau')\to\ERel_\beta(\tau_1\subst\alpha\tau,\tau_1'\subst\alpha{\tau'})$
 a family of non-expansive maps, is the uniform relation that is
 defined by \[\forall r.F(r)_\nu= \{(\TFUN\alpha
 e,\TFUN\alpha{e'})\;|\;\forall\tau,\tau'\,{\in}\,\Type,
 \r\,{\in}\,\VRel_\nu(\tau,\tau').\
 (e\subst\alpha\tau,e'\subst\alpha{\tau'}\in{F_{\tau,\tau'}(\r)}_\nu
 \}\ .\]
\item 
$\IN{j} \r\in\VRel_\beta(\tau,\tau')$, for $\tau=\REC\alpha{\tau_1\mathop{+}\ldots\mathop{+}\tau_m}$ and $\tau'=\REC\alpha{\tau_1'\mathop{+}\ldots\mathop{+}\tau_n'}$ as well as $\r\in\VRel_\beta(\tau_j\subst\alpha{\tau},\tau_j'\subst\alpha{\tau'})$, is given by 
$(\IN{j} \r)_\nu\, =\, \{(\IN{j}\,v,\IN{j}\,v')\;|\;(v,v')\in R_\nu \}$. 
\end{itemize}

\section{May equational theory}
\label{sect:maytheory}

In this section, we will define a logical uniform relation that is
used to prove that may-CIU preorder and may-contextual approximation
coincide.  The key idea of the definition is the usual one of
step-indexing \cite{Appel:McAllester:01}, i.e., that the observables
can be stratified based on step-counting in the operational semantics.
Let us refer to reduction steps of the form
\begin{displaymath}
\CASE{(\IN{j}\,v)}~(\ldots\SEP\BRANCH{j}{x_j}{e_j}\SEP\ldots) 
\stepsto e_j\subst{x_j}{v}
\end{displaymath}
as \emph{unfold-fold reductions}.
Following~\cite{BirkedalL:lslr-journal} we will only count such unfold-fold
reductions. The advantage of this is that 
the interpretation of types is slightly more extensional than if we 
counted all reduction steps; see the precise formulation in
Lemma~\ref{lem:ext-lemma} below.
Hence we define
\begin{displaymath}
e\stepstozero e'
\end{displaymath}
to mean that $e\stepsto^* e'$ and \emph{none} of the reductions in the
reduction sequence is an
unfold-fold reduction, and we define
\begin{displaymath}
e\stepstoone e'
\end{displaymath}
to mean that $e\stepsto^* e'$ and \emph{exactly one} of the reductions
in the reduction sequence is an
unfold-fold reduction.

We shall also make use of pure reductions.  To that end, we refer to
reductions of the form
\begin{displaymath}
 \CHOOSE \stepsto \underline n  
\end{displaymath}
as \emph{choice reductions}.
We then define
\begin{displaymath}
e\stepstopure e'
\end{displaymath}
to mean that $e \stepsto^* e'$ and \emph{none} of the reductions in
the reduction sequence is a choice reduction.
Further, we define
\begin{displaymath}
e\stepstozeropure e'
\end{displaymath}
to mean that $e\stepstozero e'$ and $e\stepstopure e'$.

We write $e\downarrow_n$ if $e \stepsto^* v$ for some
$v\in\Val$ and \emph{at most} $n$ reduction steps are unfold-fold reductions.

\paragraph*{Logical $\omega$-indexed uniform relation for may-approximation}

In the case of may-approximation, it suffices to consider $\omega$-indexed uniform relations. 
Using the   constructions on relations given above, %from Figure~\ref{fig:relation-constructors}, 
we  define  a relational interpretation $\den{\tau}({\vec
 r})\in\mkern -4 mu\VRel_\omega(\tau\subst{\vec\alpha}{\vec\tau},\tau\subst{\vec\alpha}{\vec\tau'})$
by induction on the type $\vec\alpha\vdash\tau$, given 
closed types $\tau_1,\tau_1',\ldots,\tau_k,\tau_k'\in\Type$ and
relations
$r_1\in\VRel_\omega(\tau_1,\tau_1'),\ldots,r_k\in\VRel_\omega(\tau_k,\tau_k')$: 
\begin{align*}
\LR{\alpha_i}(\vec r) &= r_i &
\LR{\tau_1\times\tau_2}(\vec r) &= \LR{\tau_1}(\vec r)\mathop\times\den{\tau_2}(\vec r)\\
%(\Val(\ONE)\mathop\times\Val(\ONE))_{n<\omega}\\
%(\{(\unitval,\unitval)\})_{n<\omega}\\
\LR{\ONE}(\vec r) &= (\textit{Id}_\ONE)_{n<\omega}&
\LR{\tau_1\to\tau_2}(\vec r) &=  \LR{\tau_1}(\vec r)\mathop\to\TT{(\LR{\tau_2}(\vec r))}\\
\LR{\ALL\alpha{\tau}}(\vec r) &= \forall r.\TT{(\den{\tau}(\vec r,r))} &
\den{\REC\alpha{\tau_1\mathop+\ldots\mathop+\tau_m}}(\vec r)&= \textit{fix}\, s.\,\textstyle{\bigcup_j} \IN{j}(\LATER \den{\tau_j}(\vec r,s))
\end{align*}
Here, value relations $r\in\VRel_\omega(\tau,\tau')$ are
lifted to relations $\T r\in\SRel_\omega(\tau,\tau')$ on evaluation contexts and to relations
$\TT r\in\ERel_\omega(\tau,\tau')$ on terms by biorthogonality, much as in~\cite{BirkedalL:stslr-conf}: 
\begin{align*}
\T{r}_n &= \{(E,E')\;|\;\forall j\leq n.\ \forall (v,v')\in r_j.\ E[v]\downarrow_j\ \Rightarrow\ E'[v']\downarrow\ \}\\
\TT{r}_n &= \{(e,e')\;|\;\forall j\leq n.\ \forall (E,E')\in\T{r}_j.\ E[e]\downarrow_j\ \Rightarrow\ E'[e']\downarrow\ \}
\end{align*}
The fixed point in the interpretation of recursive types is well-defined by Proposition~\ref{prop:unique-fixed-points} since each $\den{\tau}$ denotes a family of non-expansive functions, and thus composition with $\LATER$ yields a contractive function.   
Intuitively, we want to relate two values $\IN{1}{v}$ and $\IN{1}{v'}$ 
of a recursive type if $v$ and $v'$ are related at the unfolded
type. We cannot define the relation that way. Instead we only require
that $v$ and $v'$ are related at one step later. This 
suffices because we count unfold-fold reductions, see the
proof of Proposition~\ref{prop:FTLR} for details.

We often omit parentheses and write ${\den{\tau}\vec r}_n$ for
$({\den{\tau}\vec r})_n$ and
$\T{\den{\tau}\vec r}_n$ for
$\T{(\den{\tau}\vec r)}_n$ and
$\TT{\den{\tau}\vec r}_n$ for
$\TT{(\den{\tau}\vec r)}_n$.

The following lemmas express basic properties of the defined relations which are often used
in subsequent proofs and calculations.

\begin{lem}
[Substitution] 
\label{lem:may:subst}
If $\Delta,\alpha\vdash\tau$ and $\Delta\vdash\tau'$ 
then $\den{\tau\subst\alpha{\tau'}}(\vec r) = \den{\tau}(\vec r,\den{\tau'}(\vec r))$. 
\end{lem}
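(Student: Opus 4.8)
The plan is to argue by induction on the structure of the type $\tau$ (equivalently, on the derivation of $\Delta,\alpha\vdash\tau$). Throughout I regard a relational environment $\vec r$ as an assignment of value relations to the type variables in scope, and I first record the routine \emph{weakening} fact that $\den{\sigma}(\vec r)$ depends only on the restriction of $\vec r$ to $\ftv\sigma$; in particular $\den{\tau'}(\vec r, r) = \den{\tau'}(\vec r)$ whenever the extra variable does not occur in $\tau'$. With this in hand the variable cases are immediate: if $\tau = \alpha_i$ then both sides equal $r_i$, and if $\tau = \alpha$ then $\tau\subst\alpha{\tau'} = \tau'$ while $\den{\alpha}(\vec r, \den{\tau'}(\vec r)) = \den{\tau'}(\vec r)$, since $\alpha$ is interpreted by the last component of the environment. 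The case $\tau = \ONE$ is trivial.

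For the non-binding compound cases I would unfold the relevant clause of the interpretation, use that capture-avoiding substitution commutes with the type constructor (e.g.\ $(\tau_1\times\tau_2)\subst\alpha{\tau'} = \tau_1\subst\alpha{\tau'}\times\tau_2\subst\alpha{\tau'}$), and apply the induction hypothesis to the immediate subterms. For $\tau_1\times\tau_2$ and $\tau_1\to\tau_2$ this amounts to the observation that the constructions $\r_1\times\r_2$, $\r_1\to\r_2$ and the biorthogonal lift $\TT{(\cdot)}$ are defined componentwise in their relational arguments, so equality of the arguments (supplied by the induction hypothesis) yields equality of the results.

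The only cases requiring care are the binders $\ALL\beta{\tau_1}$ and $\REC\beta{\tau_1\mathop+\dots\mathop+\tau_m}$. By alpha-conversion I may assume $\beta\neq\alpha$ and $\beta\notin\ftv{\tau'}$, so that substitution pushes under the binder. For $\ALL\beta{\tau_1}$, unfolding gives $\forall r.\TT{\den{\tau_1\subst\alpha{\tau'}}(\vec r, r)}$ on the left and $\forall r.\TT{\den{\tau_1}(\vec r, \den{\tau'}(\vec r), r)}$ on the right; applying the induction hypothesis to $\tau_1$ in the environment extended by $\beta\mapsto r$, together with the weakening identity $\den{\tau'}(\vec r, r) = \den{\tau'}(\vec r)$, makes the two bodies coincide.

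The recursive type case is where the real work lies, and I expect it to be the main obstacle, because there the interpretation is not obtained by unfolding but as a fixed point $\textit{fix}\,s.\,\bigcup_j \IN{j}(\LATER\den{\tau_j}(\vec r,s))$. I would therefore not compare the two interpretations directly, but instead observe that both $\den{(\REC\beta{\dots})\subst\alpha{\tau'}}(\vec r)$ and $\den{\REC\beta{\dots}}(\vec r,\den{\tau'}(\vec r))$ are fixed points of contractive self-maps (the very contractiveness already used to make the interpretation well-defined via Proposition~\ref{prop:unique-fixed-points}). Applying the induction hypothesis to each branch $\tau_j$ in the environment extended by $s$ for the recursion variable — and again using weakening for $\tau'$ — shows that these two operators agree as functions of $s$. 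Hence they share a unique fixed point by Proposition~\ref{prop:unique-fixed-points}, which is exactly the desired equation.
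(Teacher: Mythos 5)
Your proposal is correct; the paper in fact states Lemma~\ref{lem:may:subst} without proof, and your argument is precisely the standard one the development relies on: structural induction on $\tau$ with routine weakening (and, implicitly, exchange, since in the binder cases $\alpha$ ends up in the middle of the extended context rather than last) of the relational environment, alpha-conversion to push substitution under binders, and the case $\REC\beta{\tau_1\mathop+\ldots\mathop+\tau_m}$ discharged by observing that the two contractive operators $s\mapsto\bigcup_j\IN{j}(\LATER\den{\tau_j\subst\alpha{\tau'}}(\vec r,s))$ and $s\mapsto\bigcup_j\IN{j}(\LATER\den{\tau_j}(\vec r,\den{\tau'}(\vec r),s))$ coincide and invoking the \emph{uniqueness} clause of Proposition~\ref{prop:unique-fixed-points}. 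Your identification of that uniqueness as the crux of the recursive-type case matches exactly the role the proposition plays in the paper.
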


\begin{lem}
[Extensiveness]
\label{lem:extensiveness}
For all $r\in\VRel(\tau,\tau')$, 
$r\subseteq\TT{r}$.
\end{lem}

\begin{lem}
[Monotonicity]
For all $r,s\in\VRel(\tau,\tau')$, 
if $r\subseteq s$ then $\TT{r}\subseteq\TT{s}$.
\end{lem}

\begin{lem}
[Context composition]
\label{lem:context-composition}
If $(v,v')\in{\den{\tau_1\mathop\to\tau_2}\vec r}_n$ 
and  $(E,E')\in\T{\den{\tau_2}\vec r}_n$ 
then $(E[v \,[]],E'[v' \,[]])\in\T{\den{\tau_1}\vec r}_{n}$. 
\end{lem}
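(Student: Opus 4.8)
The plan is to unfold the definition of the orthogonal $\T{(\cdot)}$ on the goal and then reduce the whole statement to a direct application of the arrow relation together with the biorthogonal closure of $\den{\tau_2}\vec r$. By the definition of $\T{(\cdot)}$, the goal $(E[v\,[]],E'[v'\,[]])\in\T{\den{\tau_1}\vec r}_n$ amounts to showing, for every $j\leq n$ and every $(w,w')\in{\den{\tau_1}\vec r}_j$ with $E[v\,w]\downarrow_j$, that $E'[v'\,w']\downarrow$. So I would fix such a $j$ and pair $(w,w')$ and assume $E[v\,w]\downarrow_j$.

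First, by the definition of the $\to$-construction the pair $(v,v')\in{\den{\tau_1\mathop\to\tau_2}\vec r}_n$ is necessarily of the form $v=\FUN x e$, $v'=\FUN x {e'}$, so I can fire the $\beta$-redex at the hole: $E[v\,w]\stepsto E[e\subst x w]$. This forced first step is not an unfold-fold reduction, so the convergence budget on the left is untouched and $E[v\,w]\downarrow_j$ yields $E[e\subst x w]\downarrow_j$. Next I would use the two hypotheses at the index $j$. From $(v,v')\in{\den{\tau_1\mathop\to\tau_2}\vec r}_n$, the $\to$-construction instantiated with $j\leq n$ and $(w,w')\in{\den{\tau_1}\vec r}_j$ gives $(e\subst x w,e'\subst x {w'})\in\TT{\den{\tau_2}\vec r}_j$. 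From $(E,E')\in\T{\den{\tau_2}\vec r}_n$, since uniform subsets are decreasing in their index and $j\leq n$, I get $(E,E')\in\T{\den{\tau_2}\vec r}_j$. Feeding the test pair $(E,E')$ and the convergence $E[e\subst x w]\downarrow_j$ into the definition of $\TT{(\cdot)}$ at index $j$ then produces $E'[e'\subst x {w'}]\downarrow$. Finally, prepending the right-hand $\beta$-step $E'[v'\,w']\stepsto E'[e'\subst x {w'}]$ gives $E'[v'\,w']\downarrow$, closing the goal.

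The only subtle point, and the part worth stating carefully, is the step-index bookkeeping. Everything is arranged so that the $\beta$-reductions introduced on the two sides are ordinary (non unfold-fold) reductions: on the left this preserves the bound $\downarrow_j$, so that the index at which convergence of $E[e\subst x w]$ is known matches exactly the index $j$ at which the arrow relation delivers membership of $(e\subst x w,e'\subst x {w'})$ in the biorthogonal; on the right the $\beta$-step is simply prepended to a may-converging sequence, for which the plain predicate $\downarrow$ suffices. The second ingredient is that $\T{(\cdot)}$ is downward closed in its index, which lets the given $(E,E')$ at index $n$ be used at the smaller index $j$. Once these indices are lined up, the remainder is a mechanical chase through the definitions of $\T{(\cdot)}$ and $\TT{(\cdot)}$, with no induction on types or on reduction length required.
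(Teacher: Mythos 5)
Your proof is correct and follows essentially the same route as the paper's: unfold $\T{(\cdot)}$, use the canonical form $v=\FUN{x}{e}$, $v'=\FUN{x}{e'}$ forced by the arrow relation, fire the (non unfold-fold) $\beta$-step on the left to keep the index $j$, obtain $(e\subst{x}{w},e'\subst{x}{w'})\in\TT{\den{\tau_2}\vec r}_j$ from the arrow relation, and discharge against $(E,E')$ before prepending the right-hand $\beta$-step. Your explicit remark on the downward closure of $\T{(\cdot)}$ in the index just spells out a step the paper leaves implicit; nothing is missing.
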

\begin{proof}
Let $j\leq n$, $(v_1,v_1')\in{\den{\tau_1}\vec r}_j$. 
Assume $E[v\,v_1]\downarrow_j$. 
We have $v=\FUN{x}{e}$ and $v'=\FUN{x}{e'}$ and
$(\FUN{x}{e},\FUN{x}{e'})\in{\den{\tau_1\mathop\to\tau_2}\vec r}_n$
for some $x,e,e'$ 
and since $E[v\,v_1]\stepsto E[e\subst{x}{v_1}]$ also
$E[e\subst{x}{v_1}]\downarrow_j$.
By definition, $(e\subst{x}{v_1},e'\subst{x}{v_1'})\in\TT{\den{\tau_2}\vec r}_j$. 
From $(E,E')\in\T{\den{\tau_2}\vec r}_n$  we obtain $E'[e'\subst{x}{v_1'}]\downarrow$. 
Thus, $E'[v'\,v_1']\downarrow$. 
\end{proof}

The following lemma expresses that the
term-relations  are closed on the right under arbitrary pure reduction
sequences and on the left under zero-step pure reduction sequences.
(The lemma could be strengthened slightly by allowing some of the
reductions to be non-pure, but the way it is stated
now, it holds both for the may interpretations of types, and also for
the must interpretations of types given in the following section.) 
\begin{lem}
 \label{lem:ext-lemma}
 For all $(e,e')\in \TT{\den{\tau}}_n$,
 \begin{itemize}
 \item if $e'_1\stepstopure e' \stepstopure e'_2$, then 
   $(e,e'_1)\in \TT{\den{\tau}}_n$ and
   $(e,e'_2)\in \TT{\den{\tau}}_n$;
 \item if $e_1\stepstozeropure e \stepstozeropure e_2$, then 
   $(e_1,e')\in \TT{\den{\tau}}_n$ and
   $(e_2,e')\in \TT{\den{\tau}}_n$.
 \end{itemize}
\end{lem}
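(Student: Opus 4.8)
The plan is to reduce both closure statements to two elementary facts about reduction and then to unfold the definition of $\TT{\den{\tau}}_n$ mechanically. The first fact is that purity and unfold-fold-freeness of a reduction are preserved under plugging into an evaluation context: since $a\stepsto b$ iff $E[a]\stepsto E[b]$, and the reduced redex -- hence whether the step is a choice reduction or an unfold-fold reduction -- is determined by the redex and not by the surrounding context, $e\stepstopure e'$ implies $E[e]\stepstopure E[e']$, $e\stepstozero e'$ implies $E[e]\stepstozero E[e']$, and likewise for $\stepstozeropure$. The second, and main, fact is that reduction is deterministic away from $\CHOOSE$: by inspection of the rules, a closed non-value term whose next redex is not $\CHOOSE$ has a unique successor. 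Consequently, if $a\stepstopure b$ then the whole pure sequence is deterministic, so any reduction $a\stepsto^* v$ to a value must follow it; by induction on the length of $a\stepstopure b$ it factors as $a\stepstopure b\stepsto^* v$, and the number of unfold-fold reductions in $a\stepsto^* v$ is the sum of those in $a\stepstopure b$ and in $b\stepsto^* v$.

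With these in hand I unfold the goal of the first bullet. To prove $(e,e'_2)\in\TT{\den{\tau}}_n$ I fix $j\le n$ and $(E,E')\in\T{\den{\tau}}_j$ with $E[e]\downarrow_j$; from $(e,e')\in\TT{\den{\tau}}_n$ I get $E'[e']\downarrow$, and since $e'\stepstopure e'_2$ gives $E'[e']\stepstopure E'[e'_2]$, applying the factoring fact to a converging reduction $E'[e']\stepsto^* v$ yields $E'[e'_2]\stepsto^* v$, i.e. $E'[e'_2]\downarrow$. For $(e,e'_1)$ the argument is simpler: $e'_1\stepstopure e'$ gives $E'[e'_1]\stepsto^* E'[e']$, so $E'[e']\downarrow$ immediately implies $E'[e'_1]\downarrow$ by prefixing, with no appeal to determinism.

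The second bullet is handled symmetrically, now tracking the unfold-fold count appearing in $\downarrow_j$. For $(e_2,e')$ I fix $j\le n$ and $(E,E')\in\T{\den{\tau}}_j$ with $E[e_2]\downarrow_j$; since $E[e]\stepstozeropure E[e_2]$ contributes no unfold-fold steps, concatenating $E[e]\stepsto^* E[e_2]$ with the witnessing reduction $E[e_2]\stepsto^* v$ shows $E[e]\downarrow_j$, whence $E'[e']\downarrow$ by $(e,e')\in\TT{\den{\tau}}_n$. For $(e_1,e')$, given $E[e_1]\downarrow_j$ I invoke determinism: $E[e_1]\stepstozeropure E[e]$ forces the converging reduction to factor through $E[e]$, and because the prefix $E[e_1]\stepstozeropure E[e]$ carries zero unfold-fold steps, the suffix $E[e]\stepsto^* v$ stays within the budget $j$, so $E[e]\downarrow_j$ and again $E'[e']\downarrow$.

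The only genuinely non-trivial point is the determinism-based factoring needed to establish $(e,e'_2)$ in the first bullet and $(e_1,e')$ in the second: in both cases one must argue that a converging reduction sequence cannot branch away from the given pure path before reaching its endpoint, which rests on the uniqueness of successors for non-choice redexes. Everything else is concatenation of reduction sequences together with the bookkeeping that $\stepstozero$ and $\stepstozeropure$ reductions contribute no unfold-fold steps to the count measured by $\downarrow_j$, which is exactly why the lemma holds verbatim for both the may- and must-interpretations of types.
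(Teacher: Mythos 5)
Your proof is correct and is essentially the paper's own argument: the paper omits the details, declaring the proof ``straightforward'' and noting only that the use of $\stepstozeropure$ in the second item ensures the index of the relation does not change, which is precisely the bookkeeping point you isolate. Your elaboration --- determinism of pure (non-choice) reduction giving the factoring of converging runs through the given reduct, plus concatenation together with the unfold-fold count carried by $\downarrow_j$ --- is exactly the intended filling-in of that sketch, including correctly identifying the two cases where the factoring argument is needed.
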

The proof is straightforward; the use of $\stepstozeropure$ in the second
item ensures that the index of the relation does not change.

The relational interpretation extends pointwise to value substitutions: 
$(\gamma,\gamma')\in{\den{\Gamma}\vec r}_n$ if  $(\gamma(x),\gamma(x'))\in{\den{\tau}\vec r}_n$ for all $x{:}\tau\in\Gamma$. 
Based on this   interpretation we
consider the following type-indexed relation:
\begin{multline*}
\Delta;\Gamma\vdash e\maylogapprox \!e'\,{:}\,\tau\ \text{ where $\Delta = \vec\alpha$}\\ 
\IFF\ 
\forall\vec\tau,\vec\tau'\!.\, \forall\vec r\in\!\!\VRel_\omega(\vec\tau,\vec\tau').\, \forall n\mathop<\omega.\,\forall(\gamma,\gamma')\in{\den{\Gamma}\vec r}_n.\  (e\subst{\vec\alpha}{\vec\tau} \gamma, e'\subst{\vec\alpha}{\vec\tau'} \gamma')\in\TT{\den{\tau}\vec r}_n
\end{multline*}
The definition of $\maylogapprox$ builds in enough closure properties to prove its compatibility.

\begin{prop}
[Fundamental property]\ 
\label{prop:FTLR}
The relation $\maylogapprox$ has the compatibility properties given in Figure~\ref{fig:compatibility}. 
In particular, it is reflexive: if $\Delta;\Gamma\vdash e:\tau$ then $\Delta;\Gamma\vdash e\maylogapprox e:\tau$. 
\end{prop}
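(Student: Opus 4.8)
The plan is to prove each compatibility rule of Figure~\ref{fig:compatibility} separately, in every case unfolding the definition of $\maylogapprox$: after fixing $\vec\tau,\vec\tau'$, relations $\vec r$, an index $n<\omega$ and $(\gamma,\gamma')\in{\den{\Gamma}\vec r}_n$, the goal is always to place the two instantiated terms in ${\TT{\den{\tau}\vec r}}_n$. Because the syntax forces several subterm positions (the function of an application, the scrutinee of a case, the argument of a projection or type application) to be \emph{values}, I would first introduce the auxiliary value-level relation $\maylogvalapprox$, defined exactly like $\maylogapprox$ but demanding $(v\subst{\vec\alpha}{\vec\tau}\gamma,v'\subst{\vec\alpha}{\vec\tau'}\gamma')\in{\den{\tau}\vec r}_n$ in the \emph{value} relation rather than in its biorthogonal closure. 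By Extensiveness (Lemma~\ref{lem:extensiveness}) we then have $\maylogvalapprox\,\subseteq\,\maylogapprox$, so it suffices to establish the value-constructor rules at the stronger value level and the remaining rules at the term level, reading $\maylogvalapprox$ in the value positions; reflexivity then follows by a routine induction on the typing derivation that maintains the stronger invariant $e\maylogvalapprox e$ at value-typed nodes. Throughout I would use freely that the uniform subsets are decreasing, so all relations are downward closed in the index.

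For variables and $\unitval$ the instantiated terms already lie in the value relation — for a variable this is exactly membership of $(\gamma,\gamma')$ in ${\den{\Gamma}\vec r}_n$ — so extensiveness closes these. For $\PAIR{v_1}{v_2}$, $\FUN xe$, $\IN j v$ and $\TFUN\alpha e$ I would unfold the corresponding construction on relations ($\times$, $\to$, $\IN j(\cdot)$, $\forall r.(\cdot)$) at index $n$; each obligation is discharged by instantiating the premise at some $\nu'\leq n$ and, where a binder occurs, extending either the value substitution (for $\lambda$) or the relation environment $\vec r$ (for $\Lambda$) by the freshly quantified data — this is where the term-level premise on a $\lambda$- or $\Lambda$-body is consumed — and each such conclusion lands in the value relation. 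The rules for $\syn{proj}_i v$, $\TAPP v{\tau'}$ and $\CHOOSE$ produce non-values: here one reduces the redex by a single pure, non-unfold-fold step to a term already known to be related (for projection to the corresponding component, for type application via the Substitution Lemma~\ref{lem:may:subst}, for $\CHOOSE$ to a numeral) and then closes the relation under these reductions using Lemma~\ref{lem:ext-lemma} — zero-step pure reductions on the left, arbitrary pure reductions on the right, matching the asymmetry of may-approximation. The $\CHOOSE$ case additionally needs that every numeral is self-related in ${\den{\NAT}\vec r}$, which is a one-line induction on the numeral using the interpretation of $\NAT$. The application rule $v\,e$ is the one driven by Context composition (Lemma~\ref{lem:context-composition}): from the value-relatedness of $v$ at index $j$ and any $(E,E')\in{\T{\den{\tau}\vec r}}_j$ one gets $(E[v\,{[]}],E'[v'\,{[]}])\in{\T{\den{\tau'}\vec r}}_j$, and since $E[v\,{[]}]$ is again an evaluation context by the grammar $E\bnfeq[]\mid v\,E$, the term-level premise on $e$ plugs straight in.

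The hard part, and the place where the $\omega$-indexing really earns its keep, will be the case rule for recursive types. Here the scrutinee $v$ is value-related, so by canonical forms and the fixed-point equation for $\den{\REC\alpha{\tau_1\mathop+\ldots\mathop+\tau_m}}\vec r$ (Proposition~\ref{prop:unique-fixed-points}) we have $v=\IN k w$, $v'=\IN k w'$ with $(w,w')\in{(\LATER\den{\tau_k}(\vec r,\den{\REC\alpha{\tau_1\mathop+\ldots\mathop+\tau_m}}\vec r))}_{j}$. Given $(E,E')\in{\T{\den{\tau'}\vec r}}_j$ with $E[\CASE{(\IN k w)}\ldots]\downarrow_j$, the unique first reduction $\CASE{(\IN k w)}\ldots\stepsto e_k\subst{x_k}w$ is an \emph{unfold-fold} reduction, so $j\geq 1$ and $E[e_k\subst{x_k}w]\downarrow_{j-1}$. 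Writing $j=m+1$, the $\LATER$ shift $({\LATER R})_{m+1}=R_m$ together with the Substitution Lemma~\ref{lem:may:subst} turns the above into $(w,w')\in{\den{\tau_k\subst\alpha{\REC\alpha{\tau_1\mathop+\ldots\mathop+\tau_m}}}\vec r}_m$, which is exactly what is needed to extend $(\gamma,\gamma')$ and feed the branch premise $e_k\maylogapprox e_k'$ at index $m$ against the context $(E,E')\in{\T{\den{\tau'}\vec r}}_m$. Thus the single unfold-fold step is precisely paid for by the single $\LATER$ built into the recursive-type interpretation; this alignment of the decrement of $\downarrow_j$ with the decrement of the step index is the technical core of the argument and the whole reason only unfold-fold reductions are counted.
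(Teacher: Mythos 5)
Your overall architecture---introducing the value-level relation $\maylogvalapprox$ and reading it in the value positions of the compatibility rules---proves a genuinely weaker statement than the proposition. The compatibility properties of Figure~\ref{fig:compatibility} are closure conditions on $\maylogapprox$ itself: in the rule for $\IN{j}$, for instance, the premise supplies only $(v\gamma,v'\gamma')\in\TT{\den{\tau_j\subst{\alpha}{\tau}}\vec r}_n$, not membership in the value relation ${\den{\tau_j\subst{\alpha}{\tau}}\vec r}_n$. The inclusion $\maylogvalapprox\,\subseteq\,\maylogapprox$ from Extensiveness lets you weaken \emph{conclusions}, not \emph{premises}: a rule with stronger hypotheses is a weaker rule, and nothing you cite gives the converse implication from $\TT{(\cdot)}$-relatedness of values back to value-relatedness---the biorthogonal closure is in general strictly larger on values, and no such ``unwinding'' lemma appears in the paper. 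So your treatments of $\IN{j}\,v$, of the case rule (where you destructure the scrutinee because it ``is value-related''), of $\syn{proj}_i\,v$ and of $\TAPP{v}{\tau'}$ establish only a hybrid variant of Figure~\ref{fig:compatibility}. That variant does yield reflexivity (your induction on typing derivations is fine, and is essentially Corollary~\ref{cor:values-FTLR}), but the full statement is what is consumed downstream: the proof of Theorem~\ref{thm:ciu}(1) shows that $\mayciu$ is a precongruence by citing the compatibility of $\maylogapprox$, and there the premises in value positions are only CIU-level, hence only $\TT{(\cdot)}$-level.

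The missing idea is the paper's uniform device for value positions: keep every premise at the term level and move the constructor into the evaluation context via Lemma~\ref{lem:context-composition}. For the $\IN{j}$ rule one checks directly that $(\FUN{x}{\IN{j}\,x},\FUN{x}{\IN{j}\,x})\in{\den{\tau_j\subst{\alpha}{\tau}\to\tau}\vec r}_n$ (with $\tau$ the recursive type), so any $(E,E')\in\T{\den{\tau}\vec r}_n$ yields related contexts $(E[(\FUN{x}{\IN{j}\,x})\,[]],E'[(\FUN{x}{\IN{j}\,x})\,[]])$ at the premise type, against which the term-level premise $(v,v')\in\TT{\den{\tau_j\subst{\alpha}{\tau}}\vec r}_n$ can be played; the elimination rule is handled the same way with the pair $\FUN{x}{\CASE{x}\,(\ldots)}$, and your $\LATER$/unfold-fold bookkeeping---$j\geq 1$, the decrement to $j-1$ paid by the single $\LATER$, which is indeed the technical core---reappears essentially verbatim \emph{inside} the paper's proof that this wrapped pair is value-related, where the scrutinees $(a,a')$ legitimately come from the value relation. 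You already deploy exactly this wrapping for the application rule, and the grammar $E\bnfeq[]\hmid v\,E$ makes all the wrapped contexts well-formed, so the repair is local; but as written, the reduction ``it suffices to read $\maylogvalapprox$ in value positions'' is a genuine gap, not a stylistic variant.
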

\begin{proof}
We consider the inference rules from Figure~\ref{fig:compatibility} in turn. 
\begin{itemize}

\item For the introduction of recursive types, we assume  that
\[
\Delta;\Gamma \vdash v\maylogapprox v':\tau_j\subst\alpha{\REC\alpha{\tau_1\mathop+\ldots\mathop+\tau_m}},\]
and then prove that 
\[
\Delta;\Gamma\vdash \IN{j}\,{v}\maylogapprox
\IN{j}\,{v'}:\REC\alpha{\tau_1\mathop+\ldots\mathop+\tau_m}.
\] 

For notational convenience we only consider the case of closed terms.
Let $\tau$ abbreviate the type
$\REC\alpha{\tau_1\mathop+\ldots\mathop+\tau_m}$.  Note that
\[ 
  \den{\tau}\vec r = \bigcup_j \IN{j}\bigl(\LATER\den{\tau_j}(\vec
r,\den{\tau}\vec r)\bigr) = \bigcup_j
\IN{j}\bigl(\LATER\den{\tau_j\subst\alpha\tau}(\vec r)\bigr)
\]
by
definition and Lemma~\ref{lem:may:subst}, and that the inclusion
$\den{\tau_j\subst\alpha\tau}(\vec
r)\subseteq\LATER\den{\tau_j\subst\alpha\tau}(\vec r)$ holds. It is easy to see, straight
from the definition, that $\left(\FUN{x}{\IN{j}\,x}, \FUN{x}{\IN{j}\,x}\right) \in
\den{\tau_j\subst{\alpha}{\tau} \to \tau} \vec{r}_n$, so
assuming $(E,E')\in\T{\den{\tau}\vec r}_n$ it follows from
Lemma~\ref{lem:context-composition} that
\[
(E[(\FUN{x}{\IN{j}\,x})\,[]],E'[(\FUN{x}{\IN{j}\,x})\,[]])\in\T{\den{\tau_j\subst\alpha\tau}\vec
 r}_{n}.
\]  
Thus, if $E[\IN{j}\,v]\downarrow_i$ for some $i\leq n$
then $E'[(\FUN{x}{\IN{j}\,x})\,v'])\downarrow$ follows from
$(v,v')\in\TT{\den{\tau_j\subst\alpha\tau}\vec r}_{n}$.  Therefore
we can conclude $E'[\IN{j}\,v']\downarrow$, and have shown
$(\IN{j}\,v,\IN{j}\,v')\in\TT{\den{\tau}\vec r}_n$.  Since $n$ was
chosen arbitrarily, we have $\Delta;\Gamma\vdash
\IN{j}\,{v}\maylogapprox \IN{j}\,{v'}:\tau$.

\item For the elimination of recursive types, we assume  that $\tau$ is of the form $\REC\alpha{\tau_1\mathop+\ldots\mathop+\tau_m}$, 
$\Delta;\Gamma,x_j{:}\tau_j\subst{\alpha}{\tau}\vdash e_j\maylogapprox e'_j:\tau'$  for all $1\leq j\leq m$  
and $\Delta;\Gamma\vdash v\maylogapprox v':\tau$. 
We prove $\Delta;\Gamma\vdash \CASE{v}(\ldots|\BRANCH j{x_j}{e_j}|\ldots)\maylogapprox \CASE{v'}(\ldots|\BRANCH j{x_j}{e_j'}|\ldots) :\tau'$. 

For simplicity we only consider the case of closed terms. 
By definition and by\break Lemma~\ref{lem:may:subst} we have
 $\den{\tau}\vec r = \bigcup_j \IN{j}\bigl(\LATER\den{\tau_j}(\vec r,\den{\tau}\vec r)\bigr) = \bigcup_j \IN{j} (\LATER\den{\tau_j\subst\alpha\tau}\vec r)$. 
 Moreover,\break 
$(\FUN x{\CASE x (\ldots\SEP\BRANCH j{x_j}{e_j}\SEP\ldots)}, \FUN x{\CASE x (\ldots\SEP\BRANCH j{x_j}{e_j'}\SEP\ldots)})\in\den{\tau\to\tau'}\vec r_n$ for any $n$. 
To see this, assume $k\leq n$, let $(a,a')\in\den{\tau}\vec r_n$ and $(E,E')\in\T{\den{\tau'}\vec r}_n$ 
such that $E[\CASE a (\ldots\SEP\BRANCH j{x_j}{e_j}\SEP\ldots)]\downarrow_k$. This implies
that $k > 0$ and by the above observation we have $a=\IN{j}{a_j}$ and $a'=\IN{j}{a_j'}$ for some $(a_j,a_j')\in\den{\tau_j\subst\alpha{\tau}}\vec r_{k-1}$. 
From $E[\CASE a (\ldots\SEP\BRANCH j{x_j}{e_j}\SEP\ldots)]\downarrow_k$ we obtain $E[e_j\subst{x_j}{a_j}]\downarrow_{k-1}$, and thus the assumption on $e_j$ and $e_j'$ gives $E'[e_j'\subst{x_j}{a_j'}]\downarrow$. This shows that $E'[\CASE {a'} (\ldots\SEP\BRANCH j{x_j}{e_j'}\SEP\ldots)]\downarrow$ holds. 

To prove the case, assume next  that $(E,E')\in\T{\den{\tau'}\vec r}_n$. 
From   Lemma~\ref{lem:context-composition} we obtain 
$(E[(\FUN x{\CASE x (\ldots\SEP\BRANCH j{x_j}{e_j}\SEP\ldots)})\,[]],E'[(\FUN x{\CASE x (\ldots\SEP\BRANCH j{x_j}{e_j'}\SEP\ldots)})\,[]])\in\T{\den{\tau}\vec r}_{n}$. 
Since by assumption $(v,v')\in\TT{\den{\tau}\vec r}_{n}$, we get that 
$E[\CASE{v}(\ldots|\BRANCH j{x_j}{e_j}|\ldots)]\downarrow_n$  
implies $E[\CASE{v'}(\ldots|\BRANCH j{x_j}{e_j'}|\ldots)]\downarrow$ as required. 

\item For choice, we assume $\Delta\vdash\Gamma$ and show $\Delta;\Gamma\vdash \CHOOSE\maylogapprox \CHOOSE:\NAT$. 
Suppose $(E,E')\in\T{\den{\NAT}\vec r}_n$ and $E[\CHOOSE]\downarrow_j$ for some $j\leq n$. 
Then $E[\CHOOSE]\stepsto E[\underline k]$ and $E[\underline k]\downarrow_{j}$ for some $k\in\N$. 
By induction on $k$ we obtain that $(\underline k,\underline k)\in\den{\NAT}\vec r_n$, 
and thus $E'[\underline k]\downarrow$. 
Hence $E'[\CHOOSE]\downarrow$. 
\end{itemize}
The proofs for  the remaining rules are similar. 
\end{proof}

\begin{cor}
  \label{cor:values-FTLR}
  If $v \in Val(\tau)$ then for all $n < \omega$, $(v,v) \in \den{\tau}_n$.
\end{cor}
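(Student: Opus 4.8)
The plan is to upgrade the reflexivity half of Proposition~\ref{prop:FTLR} from terms to values. Reflexivity only yields $(v,v)\in\TT{\den\tau}_n$, and since the biorthogonal closure $\TT{\den\tau}$ need not agree with $\den\tau$ even on values (only the inclusion $\den\tau\subseteq\TT{\den\tau}$ of Lemma~\ref{lem:extensiveness} holds in general), the corollary cannot be read off the Fundamental Property as a black box. Instead I would argue by a type-directed induction that inspects the shape of $v$ through the canonical forms lemma, invoking reflexivity of $\maylogapprox$ precisely at the higher-order type constructors.

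Concretely, I would induct primarily on $n$ and secondarily on the structure of $\tau$. The base case $n=0$ is immediate for every type, because $\den\tau$ is a uniform subset and hence $\den\tau_0$ is the full relation $\Val(\tau)\mathop\times\Val(\tau)$. For $n>0$ I case on $\tau$. If $\tau=\ONE$ then $v=\unitval$ and $\den\ONE_n$ is the identity relation, so there is nothing to do. If $\tau=\tau_1\mathop\times\tau_2$ then canonical forms give $v=\PAIR{v_1}{v_2}$ with $v_i\in\Val(\tau_i)$; since $\den{\tau_1\mathop\times\tau_2}_n$ relates pairs componentwise at the same index $n$, the claim follows from the secondary induction hypothesis applied to the structurally smaller $\tau_i$ at the unchanged index $n$.

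The arrow and universal cases are where I would call on Proposition~\ref{prop:FTLR}: in each, the defining condition of the value relation at index $n$ is literally an instance of the reflexivity statement for the body of $v$. If $v=\FUN x e$, then $(\FUN x e,\FUN x e)\in\den{\tau_1\mathop\to\tau_2}_n$ unfolds to the requirement that $(e\subst x w,e\subst x{w'})\in\TT{\den{\tau_2}}_{\nu'}$ for all $\nu'\leq n$ and $(w,w')\in\den{\tau_1}_{\nu'}$, and this is exactly what reflexivity gives for $\emp;x{:}\tau_1\vdash e\maylogapprox e:\tau_2$, instantiating the value substitutions by $w,w'$. The case $v=\TFUN\alpha e$ is analogous, the only extra bookkeeping being that the definition of $\forall r.(\cdot)$ quantifies over an $n$-indexed relation whereas $\maylogapprox$ quantifies over $\omega$-indexed ones; non-expansiveness of $r\mapsto\TT{\den{\tau}(r)}$ lets one pass between the two, since the value at index $n$ depends only on the relation up to $n$.

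The single genuinely step-indexed case is $\tau=\REC\alpha{\tau_1\mathop+\ldots\mathop+\tau_m}$, and this is where the induction on $n$ earns its keep. Canonical forms give $v=\IN{j}\,v'$ with $v'\in\Val(\tau_j\subst\alpha\tau)$, and membership in $\den\tau_n=\bigcup_k\IN{k}(\LATER\den{\tau_k\subst\alpha\tau})_n$ reduces to $(v',v')\in(\LATER\den{\tau_j\subst\alpha\tau})_n$. Writing the successor $n=m'{+}1$, the guard $\LATER$ shifts this down to $(v',v')\in\den{\tau_j\subst\alpha\tau}_{m'}$, which holds by the primary induction hypothesis at the strictly smaller index $m'$, even though the unfolded type $\tau_j\subst\alpha\tau$ is structurally larger than $\tau$. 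This mismatch is exactly the main obstacle: the statement cannot be proved by structural induction on types alone, and the purpose of the step index is to break the circularity by trading a structurally larger type for a strictly smaller index, with the trivial base case $\den\tau_0=\Val(\tau)\mathop\times\Val(\tau)$ absorbing the work at index $0$.
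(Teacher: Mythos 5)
Your proof is correct, and its two load-bearing cases (arrow and universal types, discharged by invoking the reflexivity instance of Proposition~\ref{prop:FTLR} on the body of the value, with the $\VRel_n$-versus-$\VRel_\omega$ mismatch handled by non-expansiveness) are exactly the ones the paper spells out. Where you genuinely differ is the induction measure. The paper inducts on the \emph{structure of the value} $v$: for $v=\IN{j}\,v'$ the component $v'$ is a structurally smaller value of the structurally larger (but closed) type $\tau_j\subst\alpha\tau$, so the inductive hypothesis applies to $v'$ at \emph{all} indices $n$, and $(v',v')\in(\LATER\den{\tau_j\subst\alpha\tau})_n$ follows from the index shift $(\LATER r)_{n+1}=r_n$ (or from $r\subseteq\LATER r$), with Lemma~\ref{lem:may:subst} cited for exactly the rewriting of $\den{\tau}$ you also perform. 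You instead induct lexicographically on $(n,\tau)$, spending a step index at the $\LATER$ guard and absorbing index $0$ via $\den{\tau}_0=\Val(\tau)\times\Val(\tau)$. Both measures are well-founded and both bottom out in Proposition~\ref{prop:FTLR} at binders; the paper's is slightly slicker here because a value of recursive type literally contains a smaller value, so no index arithmetic or lexicographic bookkeeping is needed, whereas your measure is the one that survives when structural induction on values is unavailable, and your framing makes explicit \emph{why} the stratification exists --- trading a larger unfolded type for a strictly smaller index. Your opening observation is also correct and well taken: the corollary is not a black-box consequence of reflexivity, since $\TT{(\cdot)}$ need not collapse onto $\den{\tau}$ on values (only the inclusion of Lemma~\ref{lem:extensiveness} holds), which is precisely why the paper gives this statement a separate proof.
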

\begin{proof}
  We prove this by induction on the value $v$. 

  \begin{itemize}
  \item Suppose $\tau = \tau_1 \to \tau_2$ and $v = \FUN{x}{e}$. Fix $n$ and let $i \leq
    n$. For arbitrary $(u, u') \in \den{\tau_1}_i$ we have to prove $\left(e\subst{x}{u},
      e\subst{x}{u'}\right) \in \TT{\den{\tau_2}_i}$. Since $\emp ; x\, {:}\, \tau_1
    \vdash e\, {:}\, \tau_2$ using Proposition~\ref{prop:FTLR} we have
    $\emp ; x\, {:}\, \tau_1 \vdash e \maylogapprox e \,{:}\,\tau_2$. If we instantiate
    this with $i$ and the substitution $x \mapsto (u, u')$ we get what is required.
  \item Suppose $\tau = \ALL{\alpha}{\tau}$ and $v = \TFUN{\alpha}{e}$. Fix $n$, pick
    $\tau, \tau' \in \Type$ and $R \in VRel_n(\tau, \tau')$. We have to show
    $\left(e\subst{\alpha}{\tau}, e\subst{\alpha}{\tau'}\right) \in \TT{\den{\tau}R_n}$,
    but this again follows straightforwardly from Proposition~\ref{prop:FTLR}.
  \end{itemize}

\noindent The other cases follows straightforwardly from the induction hypothesis. The case for
  $\IN{j}$ also requires Lemma~\ref{lem:may:subst}.
\end{proof}

\begin{thm}
[Coincidence] 
\label{thm:coincidence}
$\Delta;\Gamma\vdash e\maylogapprox e':\tau$  
if and only if 
$\Delta;\Gamma\vdash e\mayciu e':\tau$.   
\end{thm}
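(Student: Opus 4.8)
My plan is to prove the two implications separately, since they rely on quite different machinery. The easier direction is $\maylogapprox\;\subseteq\;\mayciu$: this is essentially an adequacy/soundness argument that follows from the Fundamental Property (Proposition~\ref{prop:FTLR}) together with biorthogonality. The harder direction, $\mayciu\;\subseteq\;\maylogapprox$, is a completeness argument, and I expect it to be the main obstacle.

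For the soundness direction, suppose $\Delta;\Gamma\vdash e\maylogapprox e':\tau$ and fix $\delta\in\Type^\Delta$, $\gamma\in\Subst(\Gamma\delta)$ and $E\in\Ctxt(\tau\delta)$ with $E[e\delta\gamma]\downarrow$. First I would observe that by Corollary~\ref{cor:values-FTLR} (reflexivity at values) the identity substitution pair $(\gamma,\gamma)$ lies in $\den{\Gamma}\vec r_n$ when we take the $\vec r$ to be the identity relations $\den{\delta}$ — more precisely, I would instantiate the relation $\maylogapprox$ with the diagonal choice $\vec\tau=\vec\tau'=\delta$ and $\vec r = \den{\vec\delta}$ (using Lemma~\ref{lem:may:subst} to rewrite $\den{\tau}(\den{\vec\delta}) = \den{\tau\delta}$). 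This yields $(e\delta\gamma, e'\delta\gamma)\in\TT{\den{\tau\delta}}_n$ for every $n$. Since $E[e\delta\gamma]\downarrow$, it converges in some bounded number of unfold-fold steps, so $E[e\delta\gamma]\downarrow_n$ for a suitable $n$; taking $(E,E)\in\T{\den{\tau\delta}}_n$ (again by reflexivity at the level of stacks, which follows from Corollary~\ref{cor:values-FTLR} and the definition of $\T{\cdot}$) and unfolding the definition of $\TT{\cdot}_n$ gives $E[e'\delta\gamma]\downarrow$. This is exactly $\mayciu$.

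For the completeness direction, the standard strategy (following Pitts~\cite{Pitts:10}) is to show that $\mayciu$ is itself a compatible relation that is contained in $\maylogapprox$, or equivalently that $\maylogapprox$ absorbs $\mayciu$ on the right. Concretely, I would first establish that $\maylogapprox$ is closed under may-CIU on the right, i.e. if $\Delta;\Gamma\vdash e\maylogapprox e':\tau$ and $\Delta;\Gamma\vdash e'\mayciu e'':\tau$ then $\Delta;\Gamma\vdash e\maylogapprox e'':\tau$; this uses the fact that the right-hand closure $\downarrow$ in the definition of $\TT{\cdot}$ involves ordinary (unindexed) may-convergence, so a CIU step on the right preserves membership. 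Given this absorption, I would combine it with reflexivity from Proposition~\ref{prop:FTLR}: for any $e$ with $\Delta;\Gamma\vdash e\mayciu e':\tau$, reflexivity gives $\Delta;\Gamma\vdash e\maylogapprox e:\tau$, and then absorption with the CIU hypothesis gives $\Delta;\Gamma\vdash e\maylogapprox e':\tau$.

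The main obstacle is the absorption lemma, and more fundamentally making precise the passage from a CIU hypothesis (which quantifies over \emph{syntactic} evaluation contexts $E$ and closing substitutions) to membership in the \emph{semantic} term relation $\TT{\den{\tau}\vec r}_n$ (which quantifies over \emph{related} stacks $(E,E')$ at arbitrary relations $\vec r$). The key technical point I would exploit is that $\TT{\cdot}$ is defined so that its right component only demands plain convergence $E'[e'']\downarrow$, not an indexed convergence; this asymmetry is precisely what lets a may-CIU step on $e'$ be threaded through without disturbing the step-index, and it is the reason the definition of $\maylogapprox$ was set up with biorthogonality in the first place. Once the absorption lemma is in hand, both inclusions close and the theorem follows.
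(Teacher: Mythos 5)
Your proposal is correct and follows essentially the same route as the paper's proof: the left-to-right direction via the fundamental property, reflexivity of substitutions and stacks (Corollary~\ref{cor:values-FTLR}), and stratified convergence $E[e\delta\gamma]\downarrow_n$; the right-to-left direction via the absorption observation that $\maylogapprox$ is closed under $\mayciu$ on the right, combined with reflexivity from Proposition~\ref{prop:FTLR}. Your explanation of \emph{why} absorption holds --- that the right-hand side of $\TT{(\cdot)}$ demands only unindexed convergence --- is exactly the point the paper leaves implicit.
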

\begin{proof}
 For the direction from left to right, let $\delta\in\Type^\Delta$,
 $\gamma\in\Subst(\Gamma\delta)$ and $E\in\Ctxt(\tau\delta)$, and
 assume $E[e \delta\gamma]\downarrow$. Then  $E[e
 \delta\gamma]\downarrow_n$ for some $n$.  We must show $E[e'\delta
 \gamma]\downarrow$.  As a consequence of
 Proposition~\ref{prop:FTLR} and Corollary~\ref{cor:values-FTLR},
 $(\gamma,\gamma)\in\den{\Gamma\delta}_n$ and
 $(E,E)\in\T{\den{\tau\delta}}_n$.  By definition of
 $\Delta;\Gamma\vdash e\maylogapprox e':\tau$ and
 Lemma~\ref{lem:may:subst} we have $(e \delta\gamma,e'
 \delta\gamma)\in\TT{\den{\tau\delta}}_n$, and thus $E[e
 \delta\gamma]\downarrow_n$ gives $E[e' \delta\gamma]\downarrow$.

For the direction from right to left, first note that the logical relation is closed under may-CIU approximation; more precisely,  if 
$\Delta;\Gamma\vdash e\maylogapprox e':\tau$ and $\Delta;\Gamma\vdash e'\mayciu e'':\tau$ 
then $\Delta;\Gamma\vdash e\maylogapprox e'':\tau$. This observation follows from the definition of $\TT{(\cdot)}$ used in $\Delta;\Gamma\vdash e\maylogapprox e':\tau$  and the definition of CIU approximation.
Now assume that  $\Delta;\Gamma\vdash e\mayciu e':\tau$. 
By  Proposition~\ref{prop:FTLR}, $\Delta;\Gamma\vdash e\maylogapprox e:\tau$, 
and thus $\Delta;\Gamma\vdash e\maylogapprox e':\tau$. 
\end{proof}

\begin{figure}
\begin{align*}
\inferrule{\Delta;\Gamma\vdash v\mathrel\R v':\tau\\ \Delta;\Gamma,x{:}\tau\vdash e\mathrel\R e':\tau'}
{\Delta;\Gamma\vdash e\subst xv \mathrel\R e'\subst x{v'} :\tau'}
\qquad
\inferrule{\Delta,\alpha;\Gamma\vdash e\mathrel\R e':\tau'}
{\Delta;\Gamma\subst\alpha\tau\vdash e  \mathrel\R e'  :\tau'\subst\alpha\tau}\ \Delta\vdash\tau
\end{align*}
\caption{\label{fig:substitutivity} Substitutivity   properties of type-indexed relations}
\end{figure}

\begin{proof}[Proof of CIU Theorem~\ref{thm:ciu}(1)]
We first show that $\mayciu$ is contained in $\mayctxtapprox$. By definition, $\mayctxtapprox$ is the largest may-adequate precongruence, thus it is sufficient to establish that $\mayciu$ is a may-adequate precongruence. 
From the definition it is immediate that $\mayciu$ is may-adequate, reflexive and transitive. 
By Theorem~\ref{thm:coincidence}, $\mayciu$ coincides with $\maylogapprox$ which is compatible by Proposition~\ref{prop:FTLR}.

For the other direction, following Pitts \cite{Pitts:10}, we first consider the special case where $\emp;\emp\vdash e\mayctxtapprox e':\tau$. 
To prove $\emp;\emp\vdash e\mayciu e':\tau$, note that  $\emp;\emp\vdash E[e]\mayctxtapprox E[e']:\tau'$ holds for all evaluation contexts $E$ such that $\vdash E:\tau\sto\tau'$ since $\mayctxtapprox$ is reflexive and compatible. 
Hence, that  $E[e]\downarrow$ implies $E[e']\downarrow$ follows since $\mayctxtapprox$ is may-adequate. 

The general case reduces to this special case since may-contextual approximation has the substitutivity properties given in  Figure~\ref{fig:substitutivity}. 
For the first of these, assume $\Delta;\Gamma\vdash v \mayctxtapprox v':\tau$ and $\Delta;\Gamma,x{:}\tau\vdash e  \mayctxtapprox e':\tau'$. 
From the definition of may-CIU approximation it is easy to see  
\[
\Delta;\Gamma \vdash e\subst{x}{v}  \mayciu (\FUN{x}{e})\,v:\tau'
\quad\text{and}\quad 
\Delta;\Gamma \vdash (\FUN{x}{e' })\,v'  \mayciu {e' }\subst{x}{v'}:\tau'\ .
\]
Since we have already shown that $\mayciu$ is contained in $\mayctxtapprox$, and since 
$\Delta;\Gamma\vdash (\FUN{x}{e})\,v\mayctxtapprox (\FUN{x}{e'})\,v' :\tau'$ by compatibility, 
we can conclude $\Delta;\Gamma \vdash e\subst{x}{v}  \mayctxtapprox {e'}\subst{x}{v'}:\tau'$ by transitivity.
The second substitutivity property is proved similarly, using a weakening property of may-contextual approximation. 
\end{proof}

Using the logical relation, we now prove some simple extensionality
properties for may contextual approximation and equivalence.  We will
use these properties in the parametricity example in
Section~\ref{sect:applications}.

\begin{lem}
\label{lem:may:forall-apply}
If $v \in \Val\left(\ALL{\alpha}{\sigma}\right)$ then $\forall
\tau, \tau' \in \Type, \forall R \in \VRel\left(\tau, \tau'\right), \forall n < \omega,
(\TAPP{v}{\tau}, \TAPP{v}{\tau'}) \in \TT{\den{\sigma}R_n}$.
\end{lem}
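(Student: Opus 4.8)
The plan is to derive this almost immediately from the fundamental property (Proposition~\ref{prop:FTLR}) together with the closure lemma for term relations (Lemma~\ref{lem:ext-lemma}). First I would apply the Canonical Forms lemma: since $v\in\Val(\ALL{\alpha}{\sigma})$, we have $v=\TFUN{\alpha}{e}$ for some $e$ with $\alpha;\emp\vdash e:\sigma$. By reflexivity of $\maylogapprox$ (Proposition~\ref{prop:FTLR}) this gives $\alpha;\emp\vdash e\maylogapprox e:\sigma$, and unfolding the definition of $\maylogapprox$ in the case $\Delta=\{\alpha\}$, $\Gamma=\emp$ (so the value substitutions $\gamma,\gamma'$ are trivial) yields exactly $(e\subst{\alpha}{\tau},e\subst{\alpha}{\tau'})\in\TT{\den{\sigma}R}_n$ for all $\tau,\tau'\in\Type$, all $R\in\VRel(\tau,\tau')$ and all $n<\omega$. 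This is the claim for the \emph{reduced} terms; it remains only to transport it back along the type-application reduction to $v\,\tau$ and $v\,\tau'$.

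To do that I would use the operational semantics: $\TAPP{v}{\tau}=\TAPP{(\TFUN{\alpha}{e})}{\tau}\stepsto e\subst{\alpha}{\tau}$, and likewise for $\tau'$. This single step is neither an unfold-fold reduction nor a choice reduction, so it is a zero-step pure reduction, i.e.\ $\TAPP{v}{\tau}\stepstozeropure e\subst{\alpha}{\tau}$ and $\TAPP{v}{\tau'}\stepstozeropure e\subst{\alpha}{\tau'}$. Starting from $(e\subst{\alpha}{\tau},e\subst{\alpha}{\tau'})\in\TT{\den{\sigma}R}_n$ I would then invoke Lemma~\ref{lem:ext-lemma} (which applies since $\den{\sigma}R$ is a value relation and the lemma concerns the biorthogonal closure of any such relation): the first item, closing on the right under $\stepstopure$, gives $(e\subst{\alpha}{\tau},\TAPP{v}{\tau'})\in\TT{\den{\sigma}R}_n$; the second item, closing on the left under $\stepstozeropure$, then gives $(\TAPP{v}{\tau},\TAPP{v}{\tau'})\in\TT{\den{\sigma}R}_n$, which is precisely the desired conclusion.

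The only point requiring care — and the closest thing to an obstacle — is the index bookkeeping. Lemma~\ref{lem:ext-lemma} preserves the index $n$ on the left only for \emph{zero-step} pure reductions, so I must verify that the type-application rule contributes no unfold-fold step (the sole kind counted by $\downarrow_n$) and no choice step; both hold by inspection of Figure~\ref{fig:semantics}. This is exactly what guarantees that we land at index $n$ rather than at some smaller index, matching the statement. Everything else is a direct unfolding of definitions, so the lemma is in effect a corollary of reflexivity of the logical relation.
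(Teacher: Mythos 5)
Your proof is correct, but it takes a genuinely different route from the paper's. The paper never inverts $v$ by canonical forms: it instead observes that $E[\TAPP{v}{\tau}]\downarrow_j$ exactly when $E[(\FUN{x}{\TAPP{x}{\tau}})\,v]\downarrow_j$, checks directly that the wrapped contexts $\left(E[(\FUN{x}{\TAPP{x}{\tau}})\,[]],\, E'[(\FUN{x}{\TAPP{x}{\tau'}})\,[]]\right)$ are related in $\T{\den{\ALL{\alpha}{\sigma}}}_j$ (which amounts to unfolding the $\forall r.F(r)$ construction), and then concludes from $(v,v)\in\TT{\den{\ALL{\alpha}{\sigma}}}_n$, obtained from Proposition~\ref{prop:FTLR}; in other words it stays entirely inside the biorthogonality machinery at the $\forall$-type. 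You instead invert $v=\TFUN{\alpha}{e}$, apply reflexivity to the open body to get $(e\subst{\alpha}{\tau}, e\subst{\alpha}{\tau'})\in\TT{\den{\sigma}R}_n$, and transport along the type-application step via Lemma~\ref{lem:ext-lemma}; your bookkeeping is right, since that step is neither unfold-fold nor choice, so it is $\stepstozeropure$ on the left and $\stepstopure$ on the right, preserving the index $n$. Both proofs hinge on the fundamental property, applied at the $\forall$-type in the paper versus at the open body $\sigma$ in yours. Your route does require two mild additions the paper's avoids: inversion of the typing rule for $\TFUN{\alpha}{e}$, and the observation (which you correctly flag) that Lemma~\ref{lem:ext-lemma}, though stated for closed types, holds for the biorthogonal lift $\TT{r}$ of an arbitrary uniform value relation such as $\den{\sigma}R$ --- its proof uses only the definition of $\T{(\cdot)}$, $\TT{(\cdot)}$ and determinism of non-choice steps, so this is unproblematic. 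In exchange, you avoid the paper's ``easy to see'' verification that the wrapped contexts are related, which itself secretly repeats the same reduction bookkeeping; and since Lemma~\ref{lem:ext-lemma} is stated to hold for the must-interpretation as well, your argument transfers verbatim to the must-analogue used in Section~\ref{sect:musttheory}, just as the paper's does. (Incidentally, the paper's own proof contains a slip --- it concludes $(v,v)\in\TT{\den{\ALL{\alpha}{\alpha\times\alpha\to\alpha}}}_n$ where $\sigma$ should appear in place of $\alpha\times\alpha\to\alpha$ --- which your version does not inherit.)
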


\begin{proof}
Take $n < \omega, j \leq n, (E, E') \in \T{\den{\sigma}R}_j$ and
assume $E\left[\TAPP{v}{\tau}\right]\downarrow_j$ which is equivalent to 
$E[(\FUN{x}{\TAPP{x}{\tau}}) v]\downarrow_j$. It is easy to see that 
$\left(E\left[(\FUN{x}{\TAPP{x}{\tau}}) []\right], 
  E'\left[(\FUN{x}{\TAPP{x}{\tau'}}) []\right]\right)
\in\T{\den{\ALL{\alpha}{\sigma}}R}_j$ and using
Proposition~\ref{prop:FTLR} we have $\forall n < \omega, (v,v) \in
\TT{\den{\ALL{\alpha}{\alpha\times\alpha\to\alpha}}}_n$ which concludes the proof.
\end{proof}

\begin{lem}
[Application]
\label{lem:may:application}
If $(e,e')\in\TT{\den{\tau_1}\vec r}_n$ 
and $(v,v')\in\den{\tau_1\mathop\to\tau_2}\vec r_n$ 
then $(v\,e,v'\,e')\in\TT{\den{\tau_2}\vec r}_n$. 
\end{lem}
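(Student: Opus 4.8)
The plan is to unfold the definition of the biorthogonal closure $\TT{(\cdot)}$ and reduce the goal to the context composition lemma (Lemma~\ref{lem:context-composition}) together with the hypothesis on $(e,e')$. The essential observation is purely syntactic: since $v\,[]$ is itself an evaluation context, we have $E[v\,e] = (E[v\,[]])[e]$ for any $E$, so applying $v$ to $e$ inside $E$ is the same as running $e$ in the larger context $E[v\,[]]$.

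Concretely, I would first unfold the goal $(v\,e, v'\,e')\in\TT{\den{\tau_2}\vec r}_n$: by definition of $\TT{(\cdot)}$ it suffices to fix $j\leq n$ and $(E,E')\in\T{\den{\tau_2}\vec r}_j$, assume $E[v\,e]\downarrow_j$, and derive $E'[v'\,e']\downarrow$. Next, I would pass the value hypothesis down to index $j$: since $(v,v')\in\den{\tau_1\mathop\to\tau_2}\vec r_n$ and the uniform relations are decreasing in their index, we also have $(v,v')\in\den{\tau_1\mathop\to\tau_2}\vec r_j$. Applying Lemma~\ref{lem:context-composition} at index $j$ to this together with $(E,E')\in\T{\den{\tau_2}\vec r}_j$ yields $(E[v\,[]],E'[v'\,[]])\in\T{\den{\tau_1}\vec r}_j$.

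Finally, I would invoke the hypothesis $(e,e')\in\TT{\den{\tau_1}\vec r}_n$. Instantiating its defining quantifier at the same $j\leq n$ with the pair of contexts $(E[v\,[]],E'[v'\,[]])$ just obtained, and rewriting the assumption $E[v\,e]\downarrow_j$ as $(E[v\,[]])[e]\downarrow_j$, gives $(E'[v'\,[]])[e']\downarrow$, i.e.\ $E'[v'\,e']\downarrow$, as required.

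I do not expect a genuine obstacle here; the proof is a short diagram chase through the biorthogonality definitions. The only points demanding care are the index bookkeeping---verifying that the step index remains $j$ throughout and that the downward closure of uniform relations legitimately transports $(v,v')$ from index $n$ to index $j$---and keeping the nested quantifiers of $\T{(\cdot)}$ and $\TT{(\cdot)}$ aligned so that Lemma~\ref{lem:context-composition} and the hypothesis on $(e,e')$ are each applied at the correct index.
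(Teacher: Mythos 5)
Your proof is correct and takes essentially the same route as the paper's: apply Lemma~\ref{lem:context-composition} to turn $(E,E')$ into $(E[v\,[]],E'[v'\,[]])\in\T{\den{\tau_1}\vec r}$ and then invoke the hypothesis $(e,e')\in\TT{\den{\tau_1}\vec r}_n$. If anything, your explicit index bookkeeping (fixing $j\leq n$, transporting $(v,v')$ down to index $j$ by the uniformity condition before applying the composition lemma) is slightly more careful than the paper's compressed one-line version of the same argument.
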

\begin{proof}
For any $(E,E')\in\T{\den{\tau_2}\vec r}_n$, 
$(E[v\,[]],E'[v'\,[]])\in\T{\den{\tau_1}\vec r}_n$ by Lemma~\ref{lem:context-composition}. 
Thus, if $E[v\,e]\downarrow_j$ for $j\leq n$ then $E'[v'\,e']\downarrow$.  
\end{proof}

\begin{lem}
\label{lem:may:forall-ext}
If $v, u \in \Val\left(\ALL{\alpha}{\sigma}\right)$, $n < \omega$ and 
$\forall \tau, \tau'\in\Type, \forall R \in \VRel(\tau, \tau'),
\left(\TAPP{v}{\tau}, \TAPP{u}{\tau'}\right) \in \TT{\den{\sigma}}R_n$
then $(v, u) \in \den{\ALL{\alpha}{\sigma}}_n$
\end{lem}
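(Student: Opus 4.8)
The plan is to unfold the definition of $\den{\ALL\alpha\sigma}_n$ and reduce the claim to the single reduction closure supplied by Lemma~\ref{lem:ext-lemma}. First I would apply the Canonical forms lemma to $v, u \in \Val(\ALL\alpha\sigma)$, obtaining $v = \TFUN\alpha e$ and $u = \TFUN\alpha{e'}$ for suitable $e, e'$. By the definition of $\forall r.F(r)$, instantiated at $F_{\tau,\tau'}(r) = \TT{\den\sigma r}$, membership $(v, u) \in \den{\ALL\alpha\sigma}_n$ amounts to showing, for every $\tau, \tau' \in \Type$ and every $R \in \VRel_n(\tau, \tau')$, that $(e\subst\alpha\tau, e'\subst\alpha{\tau'}) \in \TT{\den\sigma R}_n$.

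So I would fix $\tau, \tau'$ and $R$, and compare the target with what the hypothesis provides, namely $(\TAPP v\tau, \TAPP u{\tau'}) \in \TT{\den\sigma R}_n$. The two terms differ from the ones I want only by the type-application steps $\TAPP{(\TFUN\alpha e)}\tau \stepsto e\subst\alpha\tau$ and $\TAPP{(\TFUN\alpha{e'})}{\tau'} \stepsto e'\subst\alpha{\tau'}$. The key observation is that a type application is neither an unfold–fold reduction nor a choice reduction, so both steps are zero-step pure reductions: $\TAPP v\tau \stepstozeropure e\subst\alpha\tau$ and $\TAPP u{\tau'} \stepstozeropure e'\subst\alpha{\tau'}$ (in particular $\stepstopure$).

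Now I would apply Lemma~\ref{lem:ext-lemma} twice, forward, to the pair $(\TAPP v\tau, \TAPP u{\tau'})$. Using the second item (closure on the left under $\stepstozeropure$, which preserves the index $n$) I rewrite the left component to get $(e\subst\alpha\tau, \TAPP u{\tau'}) \in \TT{\den\sigma R}_n$; then, using the first item (closure on the right under arbitrary $\stepstopure$) I rewrite the right component to obtain $(e\subst\alpha\tau, e'\subst\alpha{\tau'}) \in \TT{\den\sigma R}_n$, which is exactly the required membership. Since $\tau, \tau', R$ were arbitrary, this establishes $(v, u) \in \den{\ALL\alpha\sigma}_n$.

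The one point needing care — and the only place where the argument is more than bookkeeping — is matching the relation quantifiers: the hypothesis ranges over $R \in \VRel(\tau, \tau') = \VRel_\omega(\tau, \tau')$, whereas the definition of $\den{\ALL\alpha\sigma}_n$ asks about $R \in \VRel_n(\tau, \tau')$. Given an $n$-indexed $R$, I would extend it to an $\omega$-indexed relation $\hat R$ agreeing with $R$ up to index $n$ and apply the hypothesis to $\hat R$; since $\den\sigma$ and the biorthogonal construction are non-expansive, $\TT{\den\sigma \hat R}_n = \TT{\den\sigma R}_n$, so the membership transfers. I expect this index-matching step, together with taking the two directions of Lemma~\ref{lem:ext-lemma} correctly (the left component must use $\stepstozeropure$ to keep the index from shifting, while the right component may use the more permissive $\stepstopure$), to be the main obstacle; everything else is a direct unfolding of definitions.
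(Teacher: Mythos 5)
Your proof is correct; the paper in fact states Lemma~\ref{lem:may:forall-ext} without giving a proof, and your argument --- canonical forms, unfolding the $\forall r.F(r)$ construction, then discharging the two type-application steps via Lemma~\ref{lem:ext-lemma}, since $\TAPP{(\TFUN{\alpha}{e})}{\tau}\stepsto e\subst{\alpha}{\tau}$ is neither an unfold-fold nor a choice reduction and hence a zero-step pure reduction --- is precisely the intended one. Your care over the quantifier mismatch is also warranted: the hypothesis ranges over $R\in\VRel_\omega(\tau,\tau')$ while the definition of $\forall r.F(r)_\nu$ quantifies over $r\in\VRel_\nu(\tau,\tau')$ (even though $F$ is typed on $\VRel_\beta$), a point the paper glosses over, and extension of $R$ up to index $n$ followed by non-expansiveness of $\den{\sigma}$ and of $\TT{(\cdot)}$ is the standard repair.
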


\begin{lem}
\label{lem:may:apply-let}
If $\tau, \sigma \in \Type$, $n < \omega$, $(f,f') \in \TT{\den{\tau\to\sigma}R_n}$ and
$(e,e') \in \TT{\den{\tau}R_n}$ then 
\[\left((\FUN{x}{x\,e})\,f, (\FUN{x}{x\,e'})\,f'\right) \in \TT{\den{\sigma}R_n}.\]
\end{lem}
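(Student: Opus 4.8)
The plan is to unfold the definition of the term closure $\TT{\den{\sigma}R}_n$ and verify the biorthogonality condition directly, using the hypothesis on $(f,f')$ to reduce everything to the hypothesis on $(e,e')$ together with Lemma~\ref{lem:may:application}. Concretely, I fix $j\leq n$ and a context pair $(E,E')\in\T{\den{\sigma}R}_j$, assume $E[(\FUN{x}{x\,e})\,f]\downarrow_j$, and must show $E'[(\FUN{x}{x\,e'})\,f']\downarrow$. The idea is to view $E[(\FUN{x}{x\,e})\,[\,]]$ and $E'[(\FUN{x}{x\,e'})\,[\,]]$ as an evaluation-context pair that tests $f$ and $f'$ at the arrow type, so that the desired convergence becomes an instance of $(f,f')\in\TT{\den{\tau\to\sigma}R}_n$.

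The crux is therefore the claim that $(E[(\FUN{x}{x\,e})\,[\,]],\,E'[(\FUN{x}{x\,e'})\,[\,]])\in\T{\den{\tau\to\sigma}R}_j$. To establish it I take $k\leq j$ and $(v,v')\in\den{\tau\to\sigma}R_k$ with $E[(\FUN{x}{x\,e})\,v]\downarrow_k$, and must produce $E'[(\FUN{x}{x\,e'})\,v']\downarrow$. Since $v,v'$ are values, the beta steps $(\FUN{x}{x\,e})\,v\stepsto v\,e$ and $(\FUN{x}{x\,e'})\,v'\stepsto v'\,e'$ are pure and are not unfold-fold reductions, so under the surrounding contexts they are $\stepstozeropure$ steps and do not alter the unfold-fold count; in particular $E[v\,e]\downarrow_k$ (this single deterministic non-unfold-fold step preserves $\downarrow_k$, exactly as in the proof of Lemma~\ref{lem:context-composition}). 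Using that $(e,e')\in\TT{\den{\tau}R}_n\subseteq\TT{\den{\tau}R}_k$ by downward closure of the index (immediate from the definition of $\TT{(\cdot)}$ and the uniformity conditions), Lemma~\ref{lem:may:application} with $\tau_1:=\tau$, $\tau_2:=\sigma$ yields $(v\,e,\,v'\,e')\in\TT{\den{\sigma}R}_k$. Since also $(E,E')\in\T{\den{\sigma}R}_j\subseteq\T{\den{\sigma}R}_k$, biorthogonality turns $E[v\,e]\downarrow_k$ into $E'[v'\,e']\downarrow$, and prepending the beta step recovers $E'[(\FUN{x}{x\,e'})\,v']\downarrow$, proving the claim.

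With the claim in hand I instantiate $(f,f')\in\TT{\den{\tau\to\sigma}R}_n$ at the index $j\leq n$ with the context pair just shown to lie in $\T{\den{\tau\to\sigma}R}_j$. Plugging $f$ into its left component is exactly $E[(\FUN{x}{x\,e})\,f]$, which converges in $\downarrow_j$ by assumption, so biorthogonality delivers convergence of $E'[(\FUN{x}{x\,e'})\,f']$, as required. As $j\leq n$ and $(E,E')$ were arbitrary, this gives $((\FUN{x}{x\,e})\,f,\,(\FUN{x}{x\,e'})\,f')\in\TT{\den{\sigma}R}_n$.

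I expect the main obstacle to be bookkeeping rather than conceptual: one must check that every index shift goes in the admissible direction, relying on $\TT{(\cdot)}_n\subseteq\TT{(\cdot)}_k$ and $\T{(\cdot)}_j\subseteq\T{(\cdot)}_k$ for $k\leq j\leq n$, and that the auxiliary reductions are genuinely $\stepstozeropure$ so that the unfold-fold count underlying $\downarrow_k$ is unchanged when passing between $(\FUN{x}{x\,e})\,v$ and $v\,e$. The essential point is that Lemma~\ref{lem:may:application} requires a \emph{value} in function position, which is precisely why the inner argument quantifies over $(v,v')\in\den{\tau\to\sigma}R_k$ extracted from treating $(f,f')$ as something to be tested, rather than attempting to apply the term $f$ to $e$ directly.
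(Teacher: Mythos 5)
Correct, and essentially the paper's own argument: the paper proves this lemma in one line by citing Lemmas~\ref{lem:context-composition} and~\ref{lem:may:application}, and your proof is precisely the expansion of that citation---you treat $E[(\FUN{x}{x\,e})\,[\,]]$ and $E'[(\FUN{x}{x\,e'})\,[\,]]$ as a test-context pair for $(f,f')$ and reduce its membership in $\T{\den{\tau\to\sigma}R}_j$ to Lemma~\ref{lem:may:application}, merely inlining the beta-step bookkeeping that constitutes Lemma~\ref{lem:context-composition} rather than invoking it on $(\FUN{x}{x\,e},\FUN{x}{x\,e'})\in\den{(\tau\to\sigma)\to\sigma}R_n$. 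Your index-shift checks ($\TT{(\cdot)}$ and $\T{(\cdot)}$ shrinking as the index grows, and the pure non-unfold-fold beta step preserving $\downarrow_k$) all go in the admissible direction, so the argument is complete.
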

\begin{proof}
This follows from Lemma~\ref{lem:context-composition} and Lemma~\ref{lem:may:application}.
\end{proof}

\begin{lem}[Functional extensionality]
\label{lem:may:fun-extensionality}
Let $\tau, \sigma \in \Type, f, g \in \Val(\tau\to\sigma)$ and assume
$\forall u \in \Val(\tau), f\,u \mayctxtequiv g\,u$. Then $f \mayctxtequiv g$. 
\end{lem}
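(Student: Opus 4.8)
The plan is to pass to the logical relation and then exploit reflexivity to bridge the gap between the hypothesis, which only compares $f\,u$ and $g\,u$ at the \emph{same} argument $u$, and the relational interpretation of the arrow type, which compares $f$ and $g$ on logically related but possibly distinct arguments. Since $\mayctxtequiv$ is the symmetrization of $\mayctxtapprox$, which coincides with $\maylogapprox$ by the CIU Theorem~\ref{thm:ciu} and Theorem~\ref{thm:coincidence}, and since the hypothesis is itself symmetric (so $g\,u \mayctxtequiv f\,u$ holds for all $u$ as well), it suffices to prove $f \maylogapprox g$; the converse $g \maylogapprox f$ is then symmetric. As $f$ and $g$ are closed, this amounts to showing $(f,g) \in \TT{\den{\tau\mathop\to\sigma}}_n$ for every $n$, and by Extensiveness (Lemma~\ref{lem:extensiveness}) it is enough to establish $(f,g) \in \den{\tau\mathop\to\sigma}_n$.

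By the Canonical Forms lemma write $f = \FUN x e$ and $g = \FUN x{e'}$. Unfolding $\den{\tau\mathop\to\sigma} = \den{\tau}\mathop\to\TT{\den{\sigma}}$, the goal becomes: for every $\nu \leq n$ and every $(v,v') \in \den{\tau}_\nu$, one has $(e\subst xv, e'\subst x{v'}) \in \TT{\den{\sigma}}_\nu$. Because $f\,v \stepstozeropure e\subst xv$ and $g\,v' \stepstopure e'\subst x{v'}$ are pure $\beta$-reductions free of unfold-fold steps, Lemma~\ref{lem:ext-lemma} reduces this goal, at the fixed index $\nu$, to proving $(f\,v, g\,v') \in \TT{\den{\sigma}}_\nu$.

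The heart of the argument is to factor this through the middle term $f\,v'$. First, $(f,f) \in \den{\tau\mathop\to\sigma}_\nu$ by Corollary~\ref{cor:values-FTLR}, while $(v,v') \in \den{\tau}_\nu \subseteq \TT{\den{\tau}}_\nu$ by Extensiveness, so the Application Lemma~\ref{lem:may:application} yields $(f\,v, f\,v') \in \TT{\den{\sigma}}_\nu$. Second, note that $v' \in \Val(\tau)$ (the interpretation of the closed type $\tau$ relates values of type $\tau$ on both sides), so the hypothesis applies at $u = v'$ to give $f\,v' \mayctxtequiv g\,v'$, whence $f\,v' \mayctxtapprox g\,v'$ and, by the CIU Theorem, $f\,v' \mayciu g\,v'$. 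Finally, exactly as in the right-to-left direction of Theorem~\ref{thm:coincidence}, the biorthogonal relation $\TT{\den{\sigma}}$ is closed on the right under $\mayciu$ at each fixed index; composing $(f\,v, f\,v') \in \TT{\den{\sigma}}_\nu$ with $f\,v' \mayciu g\,v'$ therefore gives $(f\,v, g\,v') \in \TT{\den{\sigma}}_\nu$, as needed.

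The main obstacle is precisely the mismatch just described: the logical relation at $\tau\mathop\to\sigma$ demands agreement of $f$ and $g$ on related arguments $v,v'$, whereas the hypothesis only supplies information at equal arguments. Factoring through $f\,v'$ resolves this, using reflexivity of the logical relation (Corollary~\ref{cor:values-FTLR}) to absorb the change of argument from $v$ to $v'$ and the hypothesis to switch from $f$ to $g$. The only point requiring care is keeping the step-indices aligned, which is why Lemma~\ref{lem:ext-lemma} and the CIU-closure property are invoked at the fixed index $\nu$ rather than uniformly over all indices.
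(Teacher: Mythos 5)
Your proof is correct and takes essentially the same route as the paper's: both show $(f,g)\in\den{\tau\to\sigma}_n$ directly for all $n$ and factor through the middle term $f\,v'$ (the paper's $f\,u$), using reflexivity of the logical relation (Proposition~\ref{prop:FTLR}, Corollary~\ref{cor:values-FTLR}) to absorb the change of argument and the hypothesis via Theorem~\ref{thm:coincidence} and the CIU theorem to switch from $f$ to $g$. The only difference is bookkeeping: you invoke Lemmas~\ref{lem:may:application} and~\ref{lem:ext-lemma} together with closure of $\TT{(\cdot)}$ on the right under $\mayciu$, where the paper unfolds a concrete context pair $(E,E')$ and appeals to Lemma~\ref{lem:may:apply-let}.
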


\begin{proof}
We will show directly that $\forall n < \omega, (f, g) \in \den{\tau\to\sigma}_n$. To
this end take $n < \omega, j \leq n$ and $(v, u) \in \den{\tau}_j$. By the canonical forms
lemma $f = \FUN{x}{e}$ and $g=\FUN{x}{e'}$ for some $e$ and $e'$. We must show
$(e\subst{x}{v}, e'\subst{x}{u})\in\TT{\den{\sigma}}_j$. So take a $k\leq j$, $(E,E')
\in \T{\den{\sigma}}_k$ and assume $E[e\subst{x}{v}]\downarrow_k$ which is equivalent to
$E[f\,v]\downarrow_k$. Proposition~\ref{prop:FTLR} shows that $\forall m < \omega, (f,f)
\in \TT{\den{\sigma\to\tau}}_m$ and then using Lemma~\ref{lem:may:apply-let} we can
conclude that $E'[f\,u]\downarrow$. Using the assumption and
Theorem~\ref{thm:coincidence} we get $E'[g\,u]\downarrow$, which concludes the proof.
\end{proof}

Note that extensionality property above is stated for \emph{values} of
function type; a more general extensionality property for \emph{expressions} of function
type fails.  To show that, we first define some abbreviations. 

\newcommand{\letapp}[2]{\ensuremath{(\FUN{x}{x\,#2})\,#1}}

\newcommand{\B}{\mathbf{2}}
\newcommand{\tval}{\mathbf{true}}
\newcommand{\fval}{\mathbf{false}}
\newcommand{\IFT}[3]{\syn{if}\,#1\,\syn{then}\,#2\,\syn{else}\,#3}

Let $\B$ be the type $\REC{\alpha}{\ONE + \ONE}$
and let $\tval = \IN{1}\unitval$ and $\fval = \IN{2}\unitval$ be values of type $\B$. By the
canonical forms lemma these two are the only closed values of this type. Let $\Omega_{\B}
= \TAPP{\Omega}{\B}$. Note that reduction from $\Omega_{\B}$ is deterministic and
non-terminating. We first define $\syn{if}$ and $\syn{ifz}$ constructs
as
\begin{align*}
\IFT{p}{e}{e'} &= \letr{y}{p}{\CASE{p}\;\BRANCH{1}{x}{e}\SEP\BRANCH{2}{x}{e'}}\\
\IFZERO{p}{e}{e'} &= \letr{y}{p}{\CASE{\underline n}\;\BRANCH{1}{x}{e}\SEP\BRANCH{2}{x}{e'}}
\end{align*}
where $x$ and $y$ are variables not free in $e$ or $e'$.

Now, to exhibit the failure of a more general extensionality property,
let $e\mkern-4 mu = \mkern -4 mu\FUN{x}{\syn{proj}_1\,x\mkern-2 mu \OR\mkern -2 mu \syn{proj}_2\,x}$ and $e' =
\IFZERO{\CHOOSE}{\FUN{x}{\syn{proj}_1\,x}}{\FUN{x}{\syn{proj}_2\,x}}$ be two terms of
type $\B\times\B\to\B$.

Then it is easy to see that $\forall u \in \Val(\B\times\B), \letapp{e}{u} \mayctxtequiv
\letapp{e'}{u}$. But on the other hand there is an evaluation context distinguishing the
two terms $e$ and $e'$. The idea is to call the resulting value twice with the same pair
and diverge if it produces the same value twice, but return a value if the results of
the two calls differ. To this end we first define the function $\syn{xor} =
\FUN{x}{\FUN{y}{\IFT{x}{(\IFT{y}{\fval}{\tval})}{y}}}$ and then we define
\begin{align*}
  E =\, &\letr{x}{[]}{\\
    &\letr{y}{x\,\PAIR{\tval}{\fval}}{\\
      &\letr{z}{x\,\PAIR{\tval}{\fval}}{\\
        &\letr{w}{x\, \syn{xor}\, y}{\\&\IFT{w}{w}{\Omega_{\B}}}}}}
\end{align*}
We then have $E[e]\downarrow$ but on the other hand, $E[e']$ always diverges, therefore
$e$ and $e'$ are not contextually equal.

A similar counter-example can also be exhibited for must-contextual equivalence.
Indeed, if we define the function 
$\syn{xnor} =
\FUN{x}{\FUN{y}{\IFT{x}{y}{(\IFT{y}{\fval}{\tval})}}}$ and the evaluation context
\begin{align*}
  E' =\, &\letr{x}{[]}{\\
    &\letr{y}{x\,\PAIR{\tval}{\fval}}{\\
      &\letr{z}{x\,\PAIR{\tval}{\fval}}{\\
        &\letr{w}{x\, \syn{xnor}\, y}{\\&\IFT{w}{w}{\Omega_{\B}}}}}}.
\end{align*}
we have that $\forall u \in \Val(\B\times\B), \letapp{e}{u}\mustctxtequiv
\letapp{e'}{u}$, but on the other hand $E'[e']\Downarrow$ but not $E'[e]\Downarrow$.

Finally, we state the expected extensionality property for values of
polymorphic type. 

\begin{lem}[Extensionality for $\forall$]
\label{lem:may:extensionality-ctx-forall}
Let $u, v \in \Val\left(\ALL{\alpha}{\sigma}\right)$ and assume $\forall \tau\in\Type,
\TAPP{u}{\tau} \mayctxtequiv \TAPP{v}{\tau}$. Then $u \mayctxtequiv v$.
\end{lem}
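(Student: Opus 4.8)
The plan is to transport the statement to the logical relation and then close the gap between same-type equivalence and arbitrary-type relatedness by parametricity. Since the hypothesis $\TAPP u \tau \mayctxtequiv \TAPP v \tau$ is symmetric in $u$ and $v$, it suffices to prove one approximation, say $u \mayctxtapprox v$; the reverse follows by the identical argument. By Theorem~\ref{thm:coincidence} together with the CIU theorem (Theorem~\ref{thm:ciu}), proving $u \mayctxtapprox v$ amounts to showing $u \maylogapprox v$, i.e., since $u,v$ are closed, that $(u,v) \in \TT{\den{\ALL{\alpha}{\sigma}}}_n$ for every $n < \omega$. By extensiveness (Lemma~\ref{lem:extensiveness}) it is enough to establish the stronger statement $(u,v) \in \den{\ALL{\alpha}{\sigma}}_n$.

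To obtain $(u,v) \in \den{\ALL{\alpha}{\sigma}}_n$ I would appeal to Lemma~\ref{lem:may:forall-ext}, which reduces the goal to showing, for all $\tau,\tau' \in \Type$ and all $R \in \VRel(\tau,\tau')$, that $(\TAPP u \tau, \TAPP v \tau') \in \TT{\den{\sigma}R}_n$. This is where parametricity enters and is the heart of the argument. First, by Corollary~\ref{cor:values-FTLR} applied to the value $u$ we have $(u,u) \in \den{\ALL{\alpha}{\sigma}}_n$; unfolding the interpretation of the $\forall$-type (and, writing $u = \TFUN{\alpha}{e}$, using Lemma~\ref{lem:ext-lemma} to replace $e\subst{\alpha}{\tau}$ and $e\subst{\alpha}{\tau'}$ by $\TAPP u \tau$ and $\TAPP u \tau'$ along the pure reductions $\TAPP u \tau \stepstozeropure e\subst{\alpha}{\tau}$ and $\TAPP u \tau' \stepstopure e\subst{\alpha}{\tau'}$) this yields the \emph{mixed-type} relatedness $(\TAPP u \tau, \TAPP u \tau') \in \TT{\den{\sigma}R}_n$ at the arbitrary relation $R$.

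It then remains to patch the right component from $\TAPP u \tau'$ to $\TAPP v \tau'$, and here I would use the hypothesis only at the equal type $\tau'$: $\TAPP u \tau' \mayctxtequiv \TAPP v \tau'$, hence $\TAPP u \tau' \mayciu \TAPP v \tau'$ by the CIU theorem. The term relation $\TT{\cdot}_n$ is closed under may-CIU approximation in its right argument — exactly the observation used in the right-to-left direction of the proof of Theorem~\ref{thm:coincidence}: unfolding $\TT{\den{\sigma}R}_n$, any context $E'$ occurring on the right accepts terms of type $\sigma\subst{\alpha}{\tau'}$, so whenever it makes $\TAPP u \tau'$ converge it makes $\TAPP v \tau'$ converge as well. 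Applying this closure to $(\TAPP u \tau, \TAPP u \tau') \in \TT{\den{\sigma}R}_n$ gives $(\TAPP u \tau, \TAPP v \tau') \in \TT{\den{\sigma}R}_n$, which completes the application of Lemma~\ref{lem:may:forall-ext} and hence the proof.

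I expect the main obstacle to be conceptual rather than computational: the logical relation demands relatedness of the two polymorphic values under \emph{arbitrary} and possibly distinct type instantiations tied together by an arbitrary $R$, whereas the hypothesis supplies equivalence only at \emph{identical} instantiations. Parametricity of $u$ with itself is precisely what bridges this gap, absorbing the arbitrary $R$ on the $u$-side while the genuine use of the hypothesis happens at a single type. A minor technical point to handle along the way is the reconciliation of index conventions — the unfolding of $\den{\ALL{\alpha}{\sigma}}_n$ quantifies over $R \in \VRel_n(\tau,\tau')$ whereas Lemma~\ref{lem:may:forall-ext} is phrased with $R \in \VRel(\tau,\tau')$ — which is resolved by truncating $R$ to its first $n$ components and invoking non-expansiveness of $\den{\sigma}(\cdot)$ and $\TT{\cdot}$.
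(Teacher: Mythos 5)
Your proof is correct and takes essentially the same route as the paper's, which proves this lemma ``essentially the same'' as Lemma~\ref{lem:may:fun-extensionality}: reduce to $(u,v)\in\den{\ALL{\alpha}{\sigma}}_n$, use reflexivity of the logical relation (your Corollary~\ref{cor:values-FTLR} step, packaged in the paper as Lemma~\ref{lem:may:forall-apply}) to relate $u$ to itself across the mixed instantiations $\tau,\tau',R$, and then swap $u$ for $v$ on the right at the single type $\tau'$ via the hypothesis, Theorem~\ref{thm:coincidence}, and closure of $\TT{(\cdot)}$ under may-CIU approximation on the right. Your explicit treatment of the pure reductions via Lemma~\ref{lem:ext-lemma} and of the index bookkeeping for $R$ is precisely the detail the paper leaves implicit.
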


The proof of this lemma is essentially the same as the proof of
Lemma~\ref{lem:may:fun-extensionality}.

\section{Must equational theory}
\label{sect:musttheory}
To define the logical relation for must-approximation, we need to stratify the observables again. 
We define stratified relations counting all steps ($\lightning$) and one counting
only unfold-fold reductions ($\Downarrow$).  The latter is used for
indexing the logical relations, the former for relating the latter to must-approximation.

For   terms $e$ and ordinals $\beta$ we define 
$e\lightning_\beta$ by induction on $\beta$:
$e\lightning_\beta$ if 
for all $e'$ such that $e\stepsto e'$ there exists $\nu<\beta$ and $e'\lightning_\nu$. 
The essential observation is that  $\lightning_\beta$ indeed captures must-convergent behaviour. 

\begin{lem}
[Stratified must-convergence]
\label{prop:stratified-must-convergence}
$e\Downarrow$ if and only if $e\lightning_\beta$ for some $\beta<\omega_1$ (for $\omega_1$ the least uncountable ordinal).
\end{lem}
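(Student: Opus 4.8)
The plan is to prove the two implications separately, with the right-to-left direction being routine and the left-to-right direction carrying the real content.

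\emph{From $\lightning$ to $\Downarrow$.} First I would show, by transfinite induction on $\beta$, that $e\lightning_\beta$ implies $e\Downarrow$. If $e\lightning_\beta$ then every one-step reduct $e'$ satisfies $e'\lightning_\nu$ for some $\nu<\beta$, so by the induction hypothesis $e'\Downarrow$. Since must-convergence is exactly the absence of an infinite reduction sequence, and any infinite reduction sequence out of $e$ would begin with a step $e\stepsto e'$ and continue as an infinite sequence out of $e'$ (contradicting $e'\Downarrow$), we conclude $e\Downarrow$. This direction uses only the defining rule of $\Downarrow$ and imposes no restriction on $\beta$.

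\emph{From $\Downarrow$ to $\lightning$.} Assume $e\Downarrow$, i.e.\ there is no infinite reduction sequence starting from $e$. Consider the relation on the terms reachable from $e$ given by $a\prec b$ iff $b\stepsto a$. Any $\prec$-descending chain is a reduction sequence, and every reachable term is again must-convergent (an infinite reduction out of a reachable $e'$ would compose with a path $e\stepsto^* e'$ to yield one out of $e$); hence $\prec$ is well-founded on the reachable terms. By well-founded recursion I define a rank $\mathrm{rk}(e)=\sup\{\mathrm{rk}(e')+1\mid e\stepsto e'\}$, so that normal forms receive rank $0$. The heart of the proof is the claim that $\mathrm{rk}(e)<\omegaOne$, shown by well-founded induction along $\prec$: each reduct $e'$ has, inductively, a countable rank, and there are only \emph{countably many} reducts of a given term---this is precisely where the countable (as opposed to arbitrary) branching of $\CHOOSE$ enters. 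Since $\omegaOne$ is regular, the supremum of countably many ordinals below $\omegaOne$ stays below $\omegaOne$, so $\mathrm{rk}(e)<\omegaOne$. The same induction shows $e\lightning_{\mathrm{rk}(e)}$: for each reduct $e'$ we have $\mathrm{rk}(e')<\mathrm{rk}(e')+1\le\mathrm{rk}(e)$ and, inductively, $e'\lightning_{\mathrm{rk}(e')}$, so taking $\nu=\mathrm{rk}(e')$ verifies the defining clause of $\lightning_{\mathrm{rk}(e)}$. Taking $\beta=\mathrm{rk}(e)<\omegaOne$ completes this direction.

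The main obstacle is the cardinality and cofinality bookkeeping in the second direction: one must justify that $\prec$ is well-founded \emph{exactly} because $e\Downarrow$, and then combine countable branching with the regularity of $\omegaOne$ to keep all ranks countable. If branching were uncountable the rank could reach $\omegaOne$, so this is where the hypothesis genuinely bites; everything else is routine manipulation of the two definitions.
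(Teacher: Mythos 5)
Your proof is correct; the right-to-left direction coincides with the paper's (transfinite induction on $\beta$ — and you could even skip the detour through infinite sequences there: since $\Downarrow$ is a fixed point of its defining rule, ``all one-step reducts of $e$ must-converge'' immediately yields $e\Downarrow$). For the substantive left-to-right direction you take a genuinely different route. The paper inducts directly on the inductive definition of $\Downarrow$: for each reduct $e'$ of $e$ the induction hypothesis gives some $\nu(e')<\omegaOne$ with $e'\lightning_{\nu(e')}$; it sets $\beta=\bigcup_{e'}\nu(e')$, observes that $\beta+1<\omegaOne$ because there are only countably many reducts and each $\nu(e')$ is countable, and concludes $e\lightning_{\beta+1}$. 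You instead start from the reading of $\Downarrow$ as absence of infinite reduction sequences, prove well-foundedness of the one-step relation on the terms reachable from $e$, and extract the bound from a rank defined by well-founded recursion; your $\mathrm{rk}(e)$ is exactly the ordinal the paper's induction assigns implicitly, and both arguments hinge on the same two facts, namely countable branching and regularity of $\omegaOne$. The paper's route is the more economical one, because the least-fixed-point definition of $\Downarrow$ carries its induction principle for free, so no separate well-foundedness argument is needed; your route has the merit of working from the operational characterization — in effect it also proves the equivalence of the two definitions of $\Downarrow$ that the paper merely asserts when introducing must-convergence — at the cost of an implicit appeal to dependent choice when passing from ``no infinite $\prec$-descending chain'' to induction and recursion along $\prec$ (a mild cost, given that regularity of $\omegaOne$, which both proofs use, is itself choice-flavoured).
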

\begin{proof}
The proof from left to right is by induction on $e\Downarrow$. 
By induction hypothesis there exists ordinals $\nu(e')<\omega_1$ for each term $e'$ such that $e\stepsto e'$.
Let $\beta=\bigcup \nu(e')$, then $\beta\mathop+1<\omega_1$ (since there are only countably many such $e'$ and each $\nu(e')$ is countable) and $e\lightning_{\beta\mathop+1}$. 
The direction from right to left is by induction on $\beta$. 
\end{proof}

For   terms $e$ and ordinals $\beta$ we define 
$e\Downarrow_\beta$ by induction on $\beta$:
$e\Downarrow_\beta$ if 
for all $e'$ such that $e\stepstoone e'$ there exists $\nu<\beta$ and $e'\Downarrow_\nu$. 

Using Lemma~\ref{prop:stratified-must-convergence}, we can show:
\begin{lem}
 \label{lem:11b}
 $e\Downarrow$ implies $e\Downarrow_\beta$ for some $\beta<\omega_1$.
\end{lem}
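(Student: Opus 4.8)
The statement relates two stratified convergence predicates: $\lightning_\beta$ (counting \emph{all} reduction steps) and $\Downarrow_\beta$ (counting only unfold-fold reductions). Lemma~\ref{prop:stratified-must-convergence} already bridges $\Downarrow$ (ordinary must-convergence) and $\lightning_\beta$. So the natural plan is to compose: from $e\Downarrow$ obtain $e\lightning_\beta$ for some $\beta<\omega_1$ by Lemma~\ref{prop:stratified-must-convergence}, and then show that $e\lightning_\beta$ implies $e\Downarrow_{\beta'}$ for some $\beta'<\omega_1$. Since $\omega_1$ is closed under the ordinal arithmetic we will need, the resulting witness will again be countable.

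The core of the argument is thus a comparison lemma: $e\lightning_\beta$ implies $e\Downarrow_\gamma$ for a suitable $\gamma$, which I would prove by induction on $\beta$. The key observation is that a single $\stepstoone$ step (one unfold-fold reduction, possibly preceded and followed by non-unfold-fold reductions) decomposes into finitely many ordinary $\stepsto$ steps. First I would record that if $e\lightning_\beta$ and $e\stepsto e'$ then $e'\lightning_\nu$ for some $\nu<\beta$, so the $\lightning$-rank strictly decreases along every ordinary reduction step. Consequently, whenever $e\stepstoone e'$ — say via the reduction sequence $e=e_0\stepsto e_1\stepsto\cdots\stepsto e_k=e'$ containing exactly one unfold-fold step — the rank strictly decreases at \emph{each} of the $k$ steps, so in particular $e'\lightning_{\nu}$ for some $\nu<\beta$. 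This gives us, for the $\Downarrow_\gamma$ predicate whose clause quantifies over $\stepstoone$-successors, exactly the decrease in $\lightning$-rank needed to drive the induction.

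Making this precise, the comparison lemma would read: \emph{for all ordinals $\beta$, if $e\lightning_\beta$ then $e\Downarrow_\beta$}. I would prove it by transfinite induction on $\beta$. Fix $e$ with $e\lightning_\beta$, and let $e\stepstoone e'$ be arbitrary; by the decomposition above, $e'\lightning_\nu$ for some $\nu<\beta$, and the induction hypothesis yields $e'\Downarrow_\nu$ with $\nu<\beta$. Since this holds for every $\stepstoone$-successor $e'$, the defining clause of $\Downarrow_\beta$ gives $e\Downarrow_\beta$, which closes the induction. (One must also handle the edge case where $e$ is already a value or has no $\stepstoone$-successor, but then the clause holds vacuously.) Combining this lemma with Lemma~\ref{prop:stratified-must-convergence} immediately yields the statement with the \emph{same} witnessing ordinal $\beta<\omega_1$.

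The main obstacle is establishing the rank-decrease across a full $\stepstoone$ step rather than a single $\stepsto$ step: one has to verify carefully that the $\lightning$-rank strictly decreases at every intermediate ordinary reduction, so that after the whole $\stepstoone$ sequence the rank has dropped below $\beta$. This relies on the fact that each $\stepstoone$ sequence is \emph{finite} (it is a finite composition of $\stepsto$ steps), so the rank drops only finitely often and remains strictly below $\beta$; if $\stepstoone$ sequences could be infinite this chaining would fail. Once that finiteness is used, the rest is a routine transfinite induction, and the countability bound is preserved because $\omega_1$ is a limit ordinal closed under the operations involved.
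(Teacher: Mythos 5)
Your proof is correct and takes the route the paper intends: the paper proves this lemma only by the remark that it follows from Lemma~\ref{prop:stratified-must-convergence}, and your way of filling in the details---composing that lemma with the comparison claim that $e\lightning_\beta$ implies $e\Downarrow_\beta$, proved by transfinite induction using that each $\stepstoone$ step is a finite, nonempty sequence of $\stepsto$ steps along which the $\lightning$-rank strictly decreases---is exactly the natural completion. Your version even produces the same witnessing ordinal $\beta<\omega_1$, slightly sharper than the existential form of the statement.
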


\paragraph*{Logical $\omega_1$-indexed uniform relation for must-approximation}

Proposition~\ref{prop:stratified-must-convergence} indicates that   logical relations for  must-approximation need to be indexed over  $\omegaOne$. The lifting of value relations $r\in\VRel_ \omegaOne(\tau,\tau')$ 
to relations $\T r\in\SRel_ \omegaOne(\tau,\tau')$ on evaluation contexts and to relations
$\TT r\in\ERel_ \omegaOne(\tau,\tau')$ on terms is defined with respect to must termination.
\begin{align*}
%\T{r}_\beta &= \{(E,E')\;|\;\forall \nu\leq \beta.\ \forall (v,v')\in r_\nu.\ E'[v']\uparrow\ \Rightarrow\ E[v]\ndiv\nu\ \}\\
%\TT{r}_\beta &= \{(e,e')\;|\;\forall \nu\leq \beta.\ \forall (E,E')\in\T{r}_\nu.\ E'[v']\uparrow\ \Rightarrow\ E[v]\ndiv\nu\ \}
\T{r}_\beta &= \{(E,E')\;|\;\forall \nu\leq \beta.\ \forall (v,v')\in r_\nu.\ E[v]\mathrel{\Downarrow_\nu}\ \Rightarrow\ E'[v']\mathrel\Downarrow\ \}\\
\TT{r}_\beta &= \{(e,e')\;|\;\forall \nu\leq \beta.\ \forall (E,E')\in\T{r}_\nu.\ E[e]\Downarrow_\nu\ \Rightarrow\ E'[e']\Downarrow\ \}
\end{align*}
Except for this difference, the relational interpretation $\den{\tau}({\vec
 r})\in\VRel_\omegaOne(\tau\subst{\vec\alpha}{\vec\tau},\tau\subst{\vec\alpha}{\vec\tau'})$
is literally the same as in Section~\ref{sect:maytheory} and defined by induction on the type $\vec\alpha\vdash\tau$, given 
closed types $\tau_1,\tau_1',\ldots,\tau_k,\tau_k'\in\Type$ and
relations
$r_1\in\VRel_ \omegaOne(\tau_1,\tau_1'),\ldots,r_k\in\VRel_ \omegaOne(\tau_k,\tau_k')$: 
\begin{align*}
\LR{\alpha_i}(\vec r) &= r_i &
\LR{\tau_1\times\tau_2}(\vec r) &= \LR{\tau_1}(\vec r)\mathop\times\den{\tau_2}(\vec r)\\
%(\Val(\ONE)\mathop\times\Val(\ONE))_{n<\omegaOne}\\
%(\{(\unitval,\unitval)\})_{n<\omegaOne}\\
\LR{\ONE}(\vec r) &= (\textit{Id}_\ONE)_{\beta<\omegaOne}&
\LR{\tau_1\to\tau_2}(\vec r) &=  \LR{\tau_1}(\vec r)\mathop\to\TT{\LR{\tau_2}(\vec r)}\\
\LR{\ALL\alpha{\tau}}(\vec r) &= \forall r.\TT{\den{\tau}(\vec r,r)} &
\den{\REC\alpha{\tau_1\mathop+\ldots\mathop+\tau_m}}(\vec r)&= \textit{fix}\, s.\,\textstyle{\bigcup_j} \IN{j}(\LATER \den{\tau_j}(\vec r,s))
\end{align*}
Logical must-approximation is defined as follows:
\begin{multline*}
\Delta;\Gamma\vdash e\mustlogapprox \!e'\,{:}\,\tau\ \text{ where $\Delta=\vec\alpha$}\\ 
\IFF\ 
\forall\vec\tau,\vec\tau'\!.\,\forall\vec r\in\!\!\VRel_\omegaOne(\vec\tau,\vec\tau').\, \forall \beta\mathop<\omegaOne.\,\forall(\gamma,\gamma')\in{\den{\Gamma}\vec r}_\beta.\ (e \subst{\vec\alpha}{\vec\tau}\gamma, e'\subst{\vec\alpha}{\vec\tau'} \gamma')\in\TT{\den{\tau}\vec r}_\beta
\end{multline*}

\begin{prop}
[Fundamental property]\ 
\label{prop:must-FTLR}
The relation $\mustlogapprox$ has the compatibility properties given in Figure~\ref{fig:compatibility}. 
In particular, it is reflexive: if $\Delta;\Gamma\vdash e:\tau$ then $\Delta;\Gamma\vdash e\mustlogapprox e:\tau$. 
\end{prop}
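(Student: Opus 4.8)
The plan is to mirror the proof of Proposition~\ref{prop:FTLR} almost verbatim, since the relational interpretation $\den{\tau}(\vec r)$ and the uniform-relation constructions are literally the same as in Section~\ref{sect:maytheory}; only the biorthogonal lifting has changed, replacing the may-observation $E[v]\downarrow_j$ by the stratified must-observation $E[v]\Downarrow_\nu$. Accordingly, I would first record that the supporting lemmas carry over to the must interpretation with identical proofs: the must-analogues of the substitution lemma (Lemma~\ref{lem:may:subst}), of extensiveness and monotonicity, and in particular of the context-composition lemma (Lemma~\ref{lem:context-composition}), all now phrased with $\T{\cdot}$, $\TT{\cdot}$ and $\Downarrow_\nu$; Lemma~\ref{lem:ext-lemma} already applies to both interpretations. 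With these in hand I would treat each inference rule of Figure~\ref{fig:compatibility} in turn. The variable, unit, pairing, abstraction, application, projection, type-abstraction and type-application cases are routine and proceed exactly as in the may case, replacing $\downarrow_j$ by $\Downarrow_\nu$ throughout.

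For the introduction and elimination of recursive types I would reuse the identity $\den{\tau}\vec r = \bigcup_j \IN{j}(\LATER\den{\tau_j\subst\alpha\tau}\vec r)$ (definition plus the must-analogue of Lemma~\ref{lem:may:subst}) together with the one-step-later trick. The only adaptation is that indices are ordinals rather than naturals: in the elimination case, assuming $E[\CASE a(\ldots\SEP\BRANCH j{x_j}{e_j}\SEP\ldots)]\Downarrow_\beta$, the unfold-fold reduction $\CASE{(\IN j\,a_j)}(\ldots)\stepsto e_j\subst{x_j}{a_j}$ is precisely the reduction counted by $\stepstoone$, so it forces $\beta>0$ and yields $E[e_j\subst{x_j}{a_j}]\Downarrow_\nu$ for some $\nu<\beta$. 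This $\nu$ matches the index drop recorded by $\LATER$, namely $(a_j,a_j')\in\den{\tau_j\subst\alpha\tau}\vec r_\nu$ after using monotonicity in the index to line the two ordinals up, so the hypothesis on $e_j,e_j'$ applies and gives $E'[\ldots]\Downarrow$. The introduction case dualises this exactly as in the may proof, via the must-analogue of Lemma~\ref{lem:context-composition}.

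The hard part will be the choice rule, which is where must-convergence and the ordinal indexing genuinely differ from the may case. To show $(\CHOOSE,\CHOOSE)\in\TT{\den{\NAT}\vec r}_\beta$ I would fix $\nu\leq\beta$ and $(E,E')\in\T{\den{\NAT}\vec r}_\nu$ and assume $E[\CHOOSE]\Downarrow_\nu$. In the may case a single numeral $\underline k$ witnessed convergence; here, by contrast, must-convergence of $E'[\CHOOSE]$ requires $E'[\underline n]\Downarrow$ for \emph{every} $n\in\N$, since the reducts of $E'[\CHOOSE]$ are exactly the terms $E'[\underline n]$. The key observation is that the choice reduction $\CHOOSE\stepsto\underline n$ is not an unfold-fold reduction, so every $\stepstoone$-successor of $E[\underline n]$ is also a $\stepstoone$-successor of $E[\CHOOSE]$; hence $E[\CHOOSE]\Downarrow_\nu$ gives $E[\underline n]\Downarrow_\nu$ for each $n$ \emph{without increasing the index}. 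A sub-induction on $n$ (using $\NAT=\REC\alpha{\ONE+\alpha}$) establishes $(\underline n,\underline n)\in\den{\NAT}\vec r_\nu$, so from $(E,E')\in\T{\den{\NAT}\vec r}_\nu$ we obtain $E'[\underline n]\Downarrow$ for every $n$, and therefore $E'[\CHOOSE]\Downarrow$, as required.

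This choice case is exactly the place where the countable branching of $\CHOOSE$ forces the move from $\omega$- to $\omegaOne$-indexing (cf.\ Lemma~\ref{prop:stratified-must-convergence} and Lemma~\ref{lem:11b}): a single index $\nu<\omegaOne$ must simultaneously bound the must-convergence of all countably many branches, which is impossible over $\omega$. I expect the main obstacle to be precisely the bookkeeping that keeps the ordinal indices aligned, in particular verifying that passing a non-unfold-fold reduction such as a choice reduction preserves $\Downarrow_\nu$ without inflating $\nu$, and that the single index drop of an unfold-fold reduction matches the shift recorded by $\LATER$. Reflexivity then follows from the established compatibility properties, exactly as in the may case.
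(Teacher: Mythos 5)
Your proposal is correct and takes essentially the same route as the paper, which likewise transfers the may-case proof wholesale and writes out only the choice case: there the paper argues, exactly as you do, that $E[\CHOOSE]\stepsto E[\underline k]$ preserves the index $\beta$ (a choice reduction is not an unfold-fold reduction), that induction on $k$ gives $(\underline k,\underline k)\in\den{\NAT}\vec r_\beta$, and that $E'[\CHOOSE]\Downarrow$ follows once $E'[\underline k]\Downarrow$ holds for all $k\in\N$. Your extra bookkeeping for the recursive-type cases (the single index drop of an unfold-fold step matching the shift recorded by $\LATER$, with antitonicity aligning the ordinals) is exactly what the may-case proof the paper defers to already contains, adapted to ordinal indices.
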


\begin{proof}
 The proof is similar to the one for Proposition~\ref{prop:FTLR}. We
 give only the case for choice, where we assume $\Delta\vdash \Gamma$
 and prove $\Delta;\Gamma\vdash \CHOOSE\mustlogapprox \CHOOSE:\NAT$.
 Suppose $(E,E')\in\T{\den{\NAT}\vec r}_\beta$ and
 $E[\CHOOSE]\Downarrow_\beta$.  
 We are to show that $E'[\CHOOSE]\Downarrow$, for which it suffices
 to show that $E'[\underline k]\Downarrow$, for all $k\in\N$.
 Let $k\in\N$ be arbitrary. Then
 $E[\CHOOSE]\stepsto E[\underline k]$ and so $E[\underline k]\Downarrow_\beta$.
 Induction on $k$ shows that $(\underline k,\underline
 k)\in\den{\NAT}\vec r_{\beta}$ and hence the required follows
 by the assumption on $(E,E')$.
\end{proof}

\begin{cor}
  \label{cor:must-values-FTLR}
  If $v \in Val(\tau)$ then for all $\nu < \omega_1$, $(v,v) \in \den{\tau}_{\nu}$.
\end{cor}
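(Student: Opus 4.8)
The plan is to adapt the proof of Corollary~\ref{cor:values-FTLR} essentially line by line, proceeding by induction on the structure of the value $v$, but invoking the must-theory fundamental property (Proposition~\ref{prop:must-FTLR}) in place of Proposition~\ref{prop:FTLR} and letting the step-index $\nu$ range over $\omega_1$ rather than over $\omega$. This substitution is legitimate because the relational interpretation $\den{\tau}$ and all the constructions on uniform relations from Section~\ref{sect:uniform} are literally the same in both sections; only the ambient index set and the biorthogonal lifting $\TT{(\cdot)}$ change, and neither of these is disturbed by an induction on $v$.

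First I would treat the function case, where $\tau=\tau_1\to\tau_2$ and $v=\FUN{x}{e}$. Fixing $\nu<\omega_1$ and unfolding the definition of the arrow relation, the goal reduces to showing $(e\subst{x}{u},e\subst{x}{u'})\in\TT{\den{\tau_2}}_{\nu'}$ for every $\nu'\leq\nu$ and every $(u,u')\in\den{\tau_1}_{\nu'}$. Since $v\in\Val(\tau_1\to\tau_2)$ yields $\emp;x{:}\tau_1\vdash e:\tau_2$, reflexivity from Proposition~\ref{prop:must-FTLR} gives $\emp;x{:}\tau_1\vdash e\mustlogapprox e:\tau_2$; instantiating this at index $\nu'$ with the value substitution $x\mapsto(u,u')$ produces exactly the required membership. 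The polymorphic case $\tau=\ALL{\alpha}{\sigma}$, $v=\TFUN{\alpha}{e}$ is handled the same way: after unfolding the $\forall$-relation, reflexivity of $\mustlogapprox$ applied to the open term $e$ of type $\sigma$ delivers the goal.

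The remaining cases---unit, pairs, and injections $\IN{j}\,v'$---follow directly from the induction hypothesis, with the injection case additionally using the substitution lemma (Lemma~\ref{lem:may:subst}) to rewrite $\den{\tau_j}(\vec r,\den{\tau}\vec r)$ as $\den{\tau_j\subst{\alpha}{\tau}}\vec r$, exactly as in the may case. I do not expect a genuine obstacle here: the only point that requires a little care is the index bookkeeping in the $\forall$-case, where a relation $R\in\VRel_\nu(\tau,\tau')$ supplied by the definition of the $\forall$-construction must be matched against a relation in $\VRel_{\omega_1}(\tau,\tau')$ used to instantiate $\mustlogapprox$; this is resolved by non-expansiveness, since only the components up to index $\nu$ are relevant. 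Note finally that, although $\nu$ may be a limit ordinal below $\omega_1$, the argument never inducts on $\nu$ itself, so the limit stages introduce no additional difficulty.
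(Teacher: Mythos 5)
Your proposal is correct and is precisely the argument the paper intends: the paper omits the proof of Corollary~\ref{cor:must-values-FTLR} as ``analogous to the proof of Corollary~\ref{cor:values-FTLR}'', and you carry out exactly that analogy---induction on $v$, with the function and $\forall$ cases discharged by reflexivity from Proposition~\ref{prop:must-FTLR} and the injection case using the (must-analogue of the) substitution lemma. Your remark on the index bookkeeping via non-expansiveness is a sound filling-in of a detail the paper leaves implicit.
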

We omit the proof, as it is analogous to the proof of Corollary~\ref{cor:values-FTLR}.

\begin{thm}
[Coincidence] 
\label{thm:must-coincidence}
$\Delta;\Gamma\vdash e\mustlogapprox e':\tau$  
if and only if 
$\Delta;\Gamma\vdash e\mustciu e':\tau$.   
\end{thm}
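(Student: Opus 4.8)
The plan is to mirror the proof of Theorem~\ref{thm:coincidence}, replacing the finite step-indices by countable ordinals, the may-observations $\downarrow$ and $\downarrow_n$ by the must-observations $\Downarrow$ and $\Downarrow_\beta$, and the biorthogonal closure by its must-variant. Throughout I use that the relational interpretation $\den{\tau}(\vec r)$ is given by exactly the same clauses as in Section~\ref{sect:maytheory}, so that the substitution lemma (Lemma~\ref{lem:may:subst}) continues to hold verbatim for the must-interpretation.

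For the direction from $\mustlogapprox$ to $\mustciu$, fix $\delta\in\Type^\Delta$, $\gamma\in\Subst(\Gamma\delta)$ and $E\in\Ctxt(\tau\delta)$, and assume $E[e\delta\gamma]\Downarrow$. Whereas in the may case one passes from $\downarrow$ to $\downarrow_n$ for some $n<\omega$, here I invoke Lemma~\ref{lem:11b} to obtain a countable ordinal $\beta<\omegaOne$ with $E[e\delta\gamma]\Downarrow_\beta$. By Proposition~\ref{prop:must-FTLR} together with Corollary~\ref{cor:must-values-FTLR} we have $(\gamma,\gamma)\in\den{\Gamma\delta}_\beta$ and $(E,E)\in\T{\den{\tau\delta}}_\beta$. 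Instantiating the definition of $\mustlogapprox$ at this $\beta$ and using Lemma~\ref{lem:may:subst} gives $(e\delta\gamma,e'\delta\gamma)\in\TT{\den{\tau\delta}}_\beta$; unfolding the definition of $\TT{(\cdot)}_\beta$ with the pair $(E,E)\in\T{\den{\tau\delta}}_\beta$ and the hypothesis $E[e\delta\gamma]\Downarrow_\beta$ then yields $E[e'\delta\gamma]\Downarrow$, as required.

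For the converse, I first record that $\mustlogapprox$ is closed on the right under $\mustciu$: if $\Delta;\Gamma\vdash e\mustlogapprox e':\tau$ and $\Delta;\Gamma\vdash e'\mustciu e'':\tau$, then $\Delta;\Gamma\vdash e\mustlogapprox e'':\tau$. This is immediate from the shape of $\TT{(\cdot)}_\beta$, whose conclusion is a must-convergence statement $E'[\,\cdot\,]\Downarrow$ about the right-hand term, so that the implication $E'[e']\Downarrow\Rightarrow E'[e'']\Downarrow$ supplied by $\mustciu$ lets one replace $e'$ by $e''$. Granting this, assume $\Delta;\Gamma\vdash e\mustciu e':\tau$. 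Proposition~\ref{prop:must-FTLR} gives reflexivity, $\Delta;\Gamma\vdash e\mustlogapprox e:\tau$, and composing with the closure property just noted yields $\Delta;\Gamma\vdash e\mustlogapprox e':\tau$.

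The only genuinely new ingredient compared with Theorem~\ref{thm:coincidence} is the transfinite indexing, and the main point to get right is the first direction, where the step-count $n<\omega$ is replaced by a countable ordinal extracted from Lemma~\ref{lem:11b} (and ultimately from the closure-ordinal analysis in Lemma~\ref{prop:stratified-must-convergence}). Everything else --- the uniformity of $\den{\tau}$, the substitution lemma, and the biorthogonal closure argument --- transfers unchanged from the may case, so I expect no essential difficulty beyond bookkeeping with countable ordinals in place of natural numbers.
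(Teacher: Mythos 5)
Your proposal is correct and follows essentially the same route as the paper's own (very terse) proof: for the left-to-right direction you use Lemma~\ref{lem:11b} to pass from $\Downarrow$ to $\Downarrow_\beta$ for some $\beta<\omegaOne$ and then appeal to Proposition~\ref{prop:must-FTLR} and Corollary~\ref{cor:must-values-FTLR}, exactly as the paper indicates, and for the converse you use closure of $\mustlogapprox$ on the right under $\mustciu$ together with reflexivity, which is also the paper's argument. Your write-up merely spells out the details that the paper leaves to the analogy with Theorem~\ref{thm:coincidence}, and does so correctly.
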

\begin{proof}
 The proof is completely analogous to that of
 Theorem~\ref{thm:coincidence}.  For the direction from left to right
 one uses the relationship between $\Downarrow$ and
 $\Downarrow_\beta$ given by Lemma~\ref{lem:11b} and then
 appeals to Proposition~\ref{prop:must-FTLR} and Corollary~\ref{cor:must-values-FTLR}.
 The direction from right to left uses the fact that $\mustlogapprox$ is closed under
 must-CIU approximation.
\end{proof}
\begin{proof}[Proof of CIU Theorem~\ref{thm:ciu}(2)] 
The proof is analogous to that of  Theorem~\ref{thm:ciu}(1). 
From the definition, $\mustciu$ is a must-adequate reflexive and transitive relation, 
by Proposition~\ref{prop:must-FTLR} and Theorem~\ref{thm:must-coincidence} it is also compatible, and thus contained in $\mustctxtapprox$. 
From this containment and the closure of $\mustciu$ under  beta conversion it follows that $\mustctxtapprox$ has the substitutivity properties in Figure~\ref{fig:substitutivity}.  
Thus it suffices to prove the containment of $\mustctxtapprox$ in $\mustciu$ for closed terms, which is clear by the compatibility and must-adequacy of $\mustctxtapprox$. 
\end{proof}

\section{Applications}
\label{sect:applications}

This section illustrates how the logical relation characterization of
contextual approximation can be used to derive interesting examples
and further proof principles.  We consider three such applications: a
recursion-induction principle for recursively defined functions,
syntactic minimal invariance of a recursive type, and an application
of relational parametricity to characterize the elements of the type 
$\ALL{\alpha}{\alpha\times\alpha\to\alpha}$.

\paragraph*{Recursion-induction}

\begin{figure}
\begin{align*}
\inferrule{\Delta;\Gamma\vdash v\,v' \mayctxtapprox v':\tau_1\mathop\to\tau_2}{\Delta;\Gamma\vdash \TAPP{\FIX}{\tau_1\tau_2}\,v\mayctxtapprox v':\tau_1\mathop\to\tau_2}
\qquad
\inferrule{\Delta;\Gamma\vdash v\,v' \mustctxtapprox v':\tau_1\mathop\to\tau_2}{\Delta;\Gamma\vdash \TAPP{\FIX}{\tau_1\tau_2}\,v\mustctxtapprox v':\tau_1\mathop\to\tau_2}
\end{align*}
\caption{\label{fig:recursion-induction} Recursion induction: least prefixed point property of \FIX}
\end{figure}

Recall from the introduction that $\FIX : \ALL{\alpha,\beta}{((\alpha\mathop\to\beta)\mathop\to(\alpha\mathop\to\beta)) \to(\alpha\mathop\to\beta)}$ is given by the term $\TFUN{\alpha,\beta}{\FUN{f}{\delta_f(\IN{}\,\delta_f)}}$ where $\delta_f$ is an abbreviation for the term $\FUN{y}{\CASE{y}~\BRANCH{}{y'}{f(\FUN{x}{(\FUN r{r\,x})(y'\,y)})}}$. 
We now prove that $\FIX$ is a \emph{least} prefixed point
combinator. More precisely, we prove
(1) the soundness of the recursion-induction rules in
Figure~\ref{fig:recursion-induction};
and (2) Proposition~\ref{prop:fix}, which says that
$\FIX$ behaves as a fixed point combinator for
a large class of functionals, including all those of the form
$\FUN{f}{u}$, for $u$ a value. 
(Observe that this class of functionals includes those needed
for defining a standard fixed point expression $\FIX \,f(x).\, e$ via
an application of the fixed point combinator.)

Our recursion-induction rules are mild generalizations of the rules
given by Lassen~\cite{Lassen:PhD} 
who proved similar results (for a language without
polymorphism), when $v$ was restricted to be of the form
$\FUN{f}{u}$, for some value $u$.

We only include the proof for $\mustctxtapprox$ and for notational simplicity we
assume that the contexts $\Delta$ and $\Gamma$ are empty.
We assume the premise of the rule, and to show the conclusion we first prove that 
$(h,v')\in{\den{\tau_1\to\tau_2}}_\beta$ where $h$ is $\FUN{x}{(\FUN r{r\,x})\,(\delta_v\,(\IN{}\,\delta_v))}$, for all $\beta<\omegaOne$. 
The result then follows from the agreement of the logical relation with contextual approximation and transitivity, 
since $\TAPP{\FIX}{\tau_1}{\tau_2}\,v \ctxtequiv v\,h\mustctxtapprox v\,v'\mustctxtapprox v'$. 

To prove 
$(h,v')\in{\den{\tau_1\to\tau_2}}_\beta$ 
we proceed by induction on $\beta$
and assume that $(h, v')\in {\den{\tau_1\to\tau_2}}_{\nu}$, for all $\nu<\beta$;
we are then to show that $(h, v')\in {\den{\tau_1\to\tau_2}}_{\beta}$. 
From the canonical forms lemma it follows that $v'$ must be of the form $\FUN{x}{e'}$ for some $e'$. 
So let $\beta_1\leq\beta$ and $(u,u')\in\den{\tau_1}_{\beta_1}$, then it remains to show 
$((\FUN{r}{r\,u})(\delta_v\,(\IN{}\,\delta_v)), e'\subst{x}{u'})\in\TT{\den{\tau_2}}_{\beta_1}$. 

Suppose $\beta_2\leq\beta_1$, $(E,E')\in\T{\den{\tau_2}}_{\beta_2}$ and
$E[(\FUN{r}{r\,u})(\delta_v\,(\IN{}\,\delta_v))]\Downarrow_{\beta_2}$; we are to show $E'[e'\subst{x}{u'}]\Downarrow$. 
By (the must-analogue of) Lemma~\ref{lem:context-composition} and the fundamental property of the logical relation applied to $v$ we obtain  
$(E[(\FUN{r}{r\,u})\,((\FUN{x}{v\,x})\,[])], E'[(\FUN r{r\,u'})\,((\FUN{x}{v\,x})\,[])])\in\T{\den{\tau_1\to\tau_2}}_{\beta_2}$. 
Then, since $\delta_v\,(\IN{}\,\delta_v)\stepstoone v\,h$ and $(\FUN{x}{v\,x})\,h\stepsto v\,h$, we have 
$E[(\FUN{r}{r\,u})(v\,h)]\Downarrow_{\beta_3}$ for $\beta_3<\beta_2\leq\beta$, 
and hence also 
$E' [(\FUN r{r\,u'})\,(v\,v')]\Downarrow$ by induction hypothesis. 

By the premise and Theorem~\ref{thm:must-coincidence} we have that
$v\, v'$ CIU-approximates $v'$, and thus we get 
$E' [(\FUN r{r\,u'})\,v']\Downarrow$. 
Finally, since $(\FUN r{r\,u'})\,v'\stepsto^* e'\subst{x}{u'}$ we obtain the required 
$E'[e'\subst{x}{u'}]\Downarrow$.

\begin{prop}
  \label{prop:fix}
  Let $\tau, \tau' \in \Type$, $f \in \Val((\tau \to \tau') \to \tau \to \tau')$.
  If for all $g \in \Val(\tau\to\tau')$ there exists a value $f_g$ such that $f\,g
  \stepstopure f_g$, then $\TAPP{\TAPP{\FIX}{\tau}}{\tau'}\,f$ is a
  fixed point of $f$, i.e., 
  $f(\TAPP{\TAPP{\FIX}{\tau}}{\tau'}\,f) \mustctxtequiv
  \TAPP{\TAPP{\FIX}{\tau}}{\tau'}\,f$
and 
  $f(\TAPP{\TAPP{\FIX}{\tau}}{\tau'}\,f) \mayctxtequiv
  \TAPP{\TAPP{\FIX}{\tau}}{\tau'}\,f$.\footnote{We have abused notation slightly by
    writing $\TAPP{\TAPP{\FIX}{\tau}}{\tau'}\,f$ instead of
    $\letr{x}{\TAPP{\FIX}{\tau}}{\letr{y}{\TAPP{x}{\tau'}}{y\,f}}$, but the former is more
    readable.}
  %NOTE, there is slight abuse of notation here, with application, but this is much more readable.
\end{prop}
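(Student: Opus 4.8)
The plan is to reduce everything to the deterministic behaviour of the combinator together with the hypothesis on $f$, and then package the resulting reduction chains as contextual equivalences via the CIU Theorem~\ref{thm:ciu}. Write $\Phi$ for $\TAPP{\TAPP{\FIX}{\tau}}{\tau'}\,f$. First I would unfold $\Phi$: the two type applications, the outer $\beta$-redex, and the unfold-fold reduction unfolding $\IN{}\,\delta_f$ reduce $\Phi$ deterministically to $f\,h$, where $h = \FUN{x}{(\FUN r{r\,x})(\delta_f\,(\IN{}\,\delta_f))}$ is exactly the value appearing in the recursion-induction argument above (with $v := f$); in symbols $\Phi \stepsto^* \delta_f\,(\IN{}\,\delta_f)\stepstoone f\,h$. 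Since none of these steps is a choice reduction, each is deterministic, and deterministic reductions are valid equivalences, so $\Phi \ctxtequiv f\,h$ (hence both $\mayctxtequiv$ and $\mustctxtequiv$).

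Next I would invoke the hypothesis. Since $h\in\Val(\tau\to\tau')$, there is a value $f_h$ with $f\,h\stepstopure f_h$; a pure reduction sequence contains no choice steps, so it too is a chain of deterministic reductions and $f\,h\ctxtequiv f_h$. Combining with the previous step gives $\Phi \ctxtequiv f\,h \ctxtequiv f_h$.

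The crux is then to show $h\ctxtequiv f_h$, i.e.\ a value-level functional extensionality. Both $h$ and $f_h$ are closed values of type $\tau\to\tau'$. For an arbitrary $u\in\Val(\tau)$ I would compute
\[ h\,u \;\stepsto\; (\FUN r{r\,u})(\delta_f\,(\IN{}\,\delta_f)) \;\stepsto^*\; (\FUN r{r\,u})(f\,h) \;\stepsto^*\; (\FUN r{r\,u})\,f_h \;\stepsto\; f_h\,u, \]
where the third reduction uses $f\,h\stepstopure f_h$ to evaluate the argument. All of these steps are pure, hence deterministic, so the open body $(\FUN r{r\,x})(\delta_f\,(\IN{}\,\delta_f))$ of $h$ is CIU-equivalent to $f_h\,x$. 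By the compatibility rule for $\lambda$-abstraction and the call-by-value $\eta$-law, $h\ctxtequiv \FUN x{f_h\,x}\ctxtequiv f_h$. For may-equivalence this last step is precisely Lemma~\ref{lem:may:fun-extensionality}; the must-analogue holds by the same logical-relations argument, since it is extensionality at function type for \emph{values} — only the expression-level version fails, as witnessed by the counterexamples with $E$ and $E'$.

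Finally I would assemble the chain. From $h\ctxtequiv f_h\ctxtequiv \Phi$ we get $h\ctxtequiv\Phi$, so by compatibility of application $f\,h\ctxtequiv f\,\Phi$; together with $\Phi\ctxtequiv f\,h$ this yields $f\,\Phi\ctxtequiv\Phi$. As every equivalence used holds simultaneously for $\mayctxtequiv$ and $\mustctxtequiv$, both claimed fixed-point equations follow at once. The main obstacle is the extensionality step $h\ctxtequiv f_h$: it is exactly where the hypothesis on $f$ is indispensable, for if $f\,h$ could diverge or branch via $\CHOOSE$ the reduction $h\,u\stepsto^* f_h\,u$ would no longer be forced, and $\FIX$ would in general only be a least prefixed point rather than a genuine fixed point.
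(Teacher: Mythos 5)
Your proposal is correct and takes essentially the same route as the paper's proof: both unfold $\TAPP{\TAPP{\FIX}{\tau}}{\tau'}\,f$ deterministically to $f\,h$, use the hypothesis to get a unique value $f_h$ with $f\,h \stepstopure f_h$, check that $h\,u$ and $f_h\,u$ behave identically in every evaluation context for every value $u$, and conclude $h \ctxtequiv f_h \ctxtequiv f\,h$ via Lemma~\ref{lem:may:fun-extensionality} and its must-analogue, assembling the result by compatibility and transitivity. The only cosmetic difference is that the paper records the key step as explicit biconditionals $E[h\,v]\Downarrow \iff E[f_h\,v]\Downarrow$ (and the may-version), where you instead invoke the fact that deterministic (choice-free) reductions are valid equivalences, with an optional detour through $\lambda$-compatibility and the call-by-value $\eta$-law.
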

\begin{proof}
  Fix $\tau, \tau' \in \Type$ and $f \in \Val((\tau \to \tau') \to \tau \to \tau')$.
  Let $\FIX_{\tau, \tau'} = \FUN{f}{\delta_f \, \IN{}{\delta_f}}$. It is then easy to see
  that $\FIX_{\tau, \tau'} f \stepstopure f\,h$ where 
  $h = \FUN{x}{(\FUN{r}{r\,x})(\FIX_{\tau,\tau'}\, f)}$. Similarly, $f\,
  (\FIX_{\tau,\tau'}\,f) \stepstopure f\,(f\,h)$.

  Let $E \in \Stk(\tau')$ and $v \in \Val(\tau)$. We have the following sequence of
  equivalences
  \begin{align*}
    E[h\,v]\Downarrow &\iff E[(\FUN{r}{r\,v})\,(\FIX_{\tau,\tau'}\,f)]\Downarrow\\
                      &\iff E[(\FUN{r}{r\,v})\,(f\,h)]\Downarrow\\
                      &\iff \left(\forall f', f\,h \stepsto^* f' \IMPLIES E[f'\,v]\Downarrow\right)\\
  \end{align*}
  and similarly, for may equivalence,
  \begin{align*}
    E[h\,v]\downarrow &\iff E[(\FUN{r}{r\,v})\,(\FIX_{\tau,\tau'}\,f)]\downarrow\\
                      &\iff E[(\FUN{r}{r\,v})\,(f\,h)]\downarrow\\
                      &\iff \left(\exists f'_v, f\,h \stepsto^* f'_v \land E[f'\,v]\downarrow\right)\\
  \end{align*}

  Suppose further that $f\,h \stepstopure f_h$, for some value $f_h$. 
  This implies that $f\,h$ reduces to a
  unique value. Then the above equivalences reduce to
  $E[h\,v]\Downarrow \iff E[f_h\,v]\Downarrow$ and $E[h\,v]\downarrow \iff
  E[f_h\,v]\downarrow$. Since $v$ and $E$ were arbitrary, we can use
  Lemma~\ref{lem:may:fun-extensionality} and its must-analogue to conclude that $h$ is may
  and must equivalent to $f_h$ and thus to $f\,h$, which concludes the proof.
\end{proof}

\paragraph*{Syntactic minimal invariance}
Consider the type $\tau = \REC\alpha{\NAT+\alpha\to\alpha}$. 
Let $\id=\FUN xx$ and consider the term 
\[
f\ \equiv\ \FUN{h,x}{\CASE{x}~{\BRANCH{1}{y}{\IN{1}\,{y}~\SEP~\BRANCH{2}{g}{\IN{2}\,{\FUN y{h(g(h\,y))}}}}}}\ .\]
We shall show that 
$\TAPP{\FIX}{\tau\tau}\,f\ctxtequiv \id :\tau\to\tau$. 
This equivalence corresponds to the characterization of solutions to recursive domain equations as minimal invariants in  domain-theoretic  work \cite{Pitts:96}, from which 
Pitts derives several (co-)induction principles.
Our proof is similar to the one by Dreyer, Ahmed, and
Birkedal~\cite{BirkedalL:lslr-journal} for a
language without nondeterminism.

By the soundness of the call-by-value beta- and eta-laws for contextual equivalence (Figure~\ref{fig:basictheory}) and the transitivity of  $\ctxtapprox$, it is easy to see that 
$f\,\id\ctxtequiv \id :\tau\to\tau$. 
The recursion-induction principle therefore yields  
$\TAPP{\FIX}{\tau\tau}\,f\ctxtapprox \id :\tau\to\tau$. 

For the reverse approximation we first show $\id\mustlogapprox h:\tau\to\tau$  where $h$ is again the term $\FUN{x}{(\FUN{r}{r\,x})(\delta_f\,(\IN{}\,\delta_f))}$. We show this 
by proving $(\id,h)\in\den{\tau\to\tau}_\beta$ for all $\beta<\omegaOne$ by induction on $\beta$. 
(The case for may-approximation is similar.) 

Thus it suffices to show,
for all $\nu\leq\beta$, for all $(v,v')\in\den{\tau}_\nu$, $(\id\, v,
h\, v')\in\TT{\den{\tau}}_\nu$.

Since $\den{\tau} = \IN{1}(\LATER\den{\NAT})\cup\IN{2}(\LATER\den{\tau\to\tau})$ there are two cases to consider:
\begin{itemize}
\item Case $(v,v')\in\IN{1}(\LATER\den{\NAT})_\nu$. 
 Then there exist $u,u'\in\Val(\NAT)$ such that $v=\IN{1}\,u$,
 $v'=\IN{1}\,u'$ and $(u,u')\in\den{\NAT}_{\nu'}$, for all
 $\nu'<\nu\leq\beta$. 
 Given $(E,E')\in\T{\den{\tau}}_\nu$ such that 
 $E[\id\, v]\Downarrow_\nu$, it suffices to show that
 $E[h\, v']\Downarrow$, which follows using the must-analogues of Lemmas~\ref{lem:ext-lemma}
 and~\ref{lem:extensiveness}
 since $h\, v'\stepstopure v'$ and $(v,v')\in\den{\tau}_\nu$ by
 assumption.
\item Case $(v,v')\in\IN{2}(\LATER\den{\tau\to\tau})_\nu$.  Then there
 exist $g,g'\in\Val(\tau\mathop\to\tau)$ such that $v=\IN{2}\,g$,
 $v'=\IN{2}\,g'$ and $(g,g')\in\den{\tau\mathop\to\tau}_{\nu'}$ for
 all $\nu'<\nu\leq\beta$.
 We are to show that $(\id\, v, h\, v')\in\TT{\den{\tau}}_\nu$.
 Since $h\, v'\stepstopure \IN{2}(\FUN{y}{h(g'(h\, y))})$ and
 $\id\, v\stepstozeropure v$, by must-analogue of Lemma~\ref{lem:ext-lemma}, it suffices to
 show
 $(\IN{2}(g), \IN{2}(\FUN{y}{h(g'(h\, y))}))\in \TT{\den{\tau}}_\nu$.
 Hence it suffices to show
 $(g, \FUN{y}{h(g'(h\, y))})\in \TT{\den{\tau\to\tau}}_{\nu_1}$, for
 all $\nu_1<\nu$.
 Pick $\nu_1< \nu$. 
 By the must-analogue of Lemma~\ref{lem:ext-lemma} it suffices to show, for all
 $\nu_2\leq \nu_1$,
 \begin{displaymath}
   \forall (u,u')\in\den{\tau}_{\nu_2}.\;
   (g(u), h(g'(h(u'))))\in\TT{\den{\tau}}_{\nu_2}.
 \end{displaymath}
 To this end, let $(u,u')\in\den{\tau}_{\nu_2}$ and pick $\nu_3\leq \nu_2$
 and suppose that $(E,E')\in\T{\den{\tau}}_{\nu_3}$.
 By the induction hypothesis and the must-analogue of Lemma~\ref{lem:ext-lemma},
 we get $(u, h(u'))\in \TT{\den{\tau}}_{\nu_3}$.
 Hence by the must-analogue of Lemma~\ref{lem:may:apply-let},
 we get 
 $(g(u), g'(h(u')))\in\TT{\den{\tau}}_{\nu_3}$,
 and thus it suffices to show that
 $(E,E'[h\,[]])\in \T{\den{\tau}}_{\nu_3}$.
 So let $\nu_4\leq \nu_3$ and take $(w,w')\in \den{\tau}_{\nu_4}$.
 We are to show that 
 if $E[w]\Downarrow_{\nu_4}$, then $E'[h\, w']\Downarrow$.
 By the induction hypothesis and the must-analogue of Lemma~\ref{lem:ext-lemma},
 $(w, h\, w')\in \TT{\den{\tau}}_{\nu_4}$, from which the required
 follows by the assumption on $(E,E')$.
\end{itemize}
By Theorem~\ref{thm:must-coincidence} and the CIU theorem,
$\id\mustlogapprox h:\tau\to\tau$ implies $\id\mustctxtapprox
h:\tau\to\tau$.  Since $\id\ctxtequiv f\,\id:\tau\to\tau$ and
$f\,h\ctxtequiv \TAPP{\FIX}{\tau\tau}\,f:\tau\to\tau$ we obtain
$\id\mustctxtapprox\TAPP{\FIX}{\tau\tau}\,f:\tau\to\tau$ by
compatibility and transitivity of must-contextual equivalence.

\paragraph*{Parametricity}

We will now characterize the elements of the type
$\ALL{\alpha}{\alpha\times\alpha\to\alpha}$, using relational 
parametricity. The main result is expressed as Theorem~\ref{thm:parametricity};
we first start with some lemmas. 
We only state and prove the results for must-contextual equivalence;
for may-contextual equivalence the properties and proofs are analogous.

\begin{lem}
\label{lem:must:ex:div1}
Let $v \in \Val\left(\ALL{\alpha}{\alpha\times\alpha\to\alpha}\right)$. If there exists
a $\tau\in\Type$ such that $\TAPP{v}{\tau}$ may-diverges then $v \mustctxtequiv
\left(\TFUN{\alpha}{\TAPP{\Omega}{(\alpha \times \alpha \to \alpha)}}\right)$.
\end{lem}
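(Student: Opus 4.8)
The plan is to transport the problem to the logical relation: by Theorem~\ref{thm:must-coincidence} and the CIU Theorem~\ref{thm:ciu}, $\mustlogapprox$ coincides with $\mustctxtapprox$, and since $\mustctxtequiv$ is the symmetrization of $\mustctxtapprox$ it suffices to prove both $v\mustlogapprox w$ and $w\mustlogapprox v$, where I abbreviate $w=\TFUN{\alpha}{\TAPP{\Omega}{(\alpha\times\alpha\to\alpha)}}$. The engine of the whole argument is a single vacuity observation about the term-lifting: if a closed term $a$ may-diverges, then $(a,b)\in\TT{s}_\beta$ for \emph{every} term $b$, value relation $s$, and ordinal $\beta$. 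Indeed, for $(E,E')\in\T{s}_\nu$ the term $E[a]$ inherits an infinite reduction sequence from $a$ (plugging into an evaluation context preserves reductions), so it may-diverges; since $E[a]\Downarrow_\nu$ would entail $E[a]\Downarrow$ (the soundness direction complementing Lemma~\ref{lem:11b}), the hypothesis $E[a]\Downarrow_\nu$ is false and the defining implication of $\TT{s}_\beta$ holds vacuously.

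The key step is a parametricity argument upgrading the hypothesis ``$\TAPP{v}{\tau}$ may-diverges for some $\tau$'' to ``$\TAPP{v}{\sigma}$ may-diverges for \emph{every} $\sigma\in\Type$''. Writing $v=\TFUN{\alpha}{e}$ by canonical forms, so that $\TAPP{v}{\sigma}\stepsto e\subst{\alpha}{\sigma}$ and may-divergence of the two coincides, I would start from $(v,v)\in\den{\ALL{\alpha}{\alpha\times\alpha\to\alpha}}_\beta$ for all $\beta<\omegaOne$ (Corollary~\ref{cor:must-values-FTLR}). Unfolding the $\forall$-interpretation gives $(e\subst{\alpha}{\sigma},e\subst{\alpha}{\sigma'})\in\TT{\den{\alpha\times\alpha\to\alpha}(r)}_\beta$ for all $\sigma,\sigma'\in\Type$ and $r\in\VRel_\beta(\sigma,\sigma')$. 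I instantiate $\sigma'=\tau$ and test with $([],[])$, which lies in $\T{\den{\alpha\times\alpha\to\alpha}(r)}_\nu$ for every $\nu$ because values are must-convergent; this yields the implication $e\subst{\alpha}{\sigma}\Downarrow_\nu\Rightarrow e\subst{\alpha}{\tau}\Downarrow$ for all $\nu\leq\beta$. As $e\subst{\alpha}{\tau}$ may-diverges the conclusion is false, so $\neg(e\subst{\alpha}{\sigma}\Downarrow_\nu)$ for all $\nu<\omegaOne$, and the contrapositive of Lemma~\ref{lem:11b} forces $e\subst{\alpha}{\sigma}$, hence $\TAPP{v}{\sigma}$, to may-diverge.

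Both approximations then fall out of the vacuity observation. By the must-analogue of extensiveness (Lemma~\ref{lem:extensiveness}), $\den{\ALL{\alpha}{\alpha\times\alpha\to\alpha}}$ is contained in its lifting $\TT{(\cdot)}$, so for each direction it suffices to show that the relevant pair of \emph{values} lies in $\den{\ALL{\alpha}{\alpha\times\alpha\to\alpha}}_\beta$ for all $\beta<\omegaOne$. For $v\mustlogapprox w$ this membership unfolds to $(e\subst{\alpha}{\sigma},\TAPP{\Omega}{(\sigma'\times\sigma'\to\sigma')})\in\TT{\den{\alpha\times\alpha\to\alpha}(r)}_\beta$, which holds because the left component $e\subst{\alpha}{\sigma}$ may-diverges by the previous paragraph. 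For $w\mustlogapprox v$ it unfolds to $(\TAPP{\Omega}{(\sigma\times\sigma\to\sigma)},e\subst{\alpha}{\sigma'})\in\TT{\den{\alpha\times\alpha\to\alpha}(r)}_\beta$, which holds because $\TAPP{\Omega}{(\sigma\times\sigma\to\sigma)}$ diverges deterministically, hence may-diverges. Combining the two directions through coincidence yields $v\mustctxtequiv w$.

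The main obstacle is the parametricity step, and within it two points require care. First, I use the logical relation in \emph{reverse}: rather than propagating convergence, I read its implication contrapositively to \emph{derive} divergence at an arbitrary $\sigma$ from divergence at the given $\tau$. Second, the argument hinges on keeping three notions apart --- genuine must-convergence $\Downarrow$, the unfold-fold-indexed $\Downarrow_\nu$ used in $\TT{(\cdot)}$, and may-divergence --- and on using both the soundness direction $\Downarrow_\nu\Rightarrow\Downarrow$ (so that a may-diverging term falsifies every $\Downarrow_\nu$) and Lemma~\ref{lem:11b} for the converse. Once uniform may-divergence is secured, the two approximations are immediate, since a diverging left-hand term discharges the biorthogonal condition for an arbitrary right-hand term.
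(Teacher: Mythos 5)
Your proof is correct and takes essentially the same route as the paper's: propagate may-divergence from the given $\tau$ to every instantiation via the fundamental property tested against the pair of empty contexts (the paper's must-analogue of Lemma~\ref{lem:may:forall-apply}), then obtain both logical approximations vacuously from the biorthogonal lifting and conclude via extensionality for $\forall$ (the must-analogue of Lemma~\ref{lem:may:forall-ext}), Theorem~\ref{thm:must-coincidence}, and Theorem~\ref{thm:ciu}. The only differences are presentational: you inline those two helper lemmas by unfolding the $\forall$-interpretation with canonical forms, and you explicitly flag the soundness direction $\Downarrow_\nu\Rightarrow\Downarrow$ (resting on every infinite reduction sequence containing an unfold-fold step), which the paper's vacuity reasoning uses but leaves implicit.
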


\begin{proof}
Let $\tau$ be such that $\TAPP{v}{\tau}$ may-diverges. By the must-analogue of
Lemma~\ref{lem:may:forall-apply} we have $\forall \tau'\in\Type, \forall R,\forall \nu <
\omega_1$, $\left(\TAPP{v}{\tau'}, \TAPP{v}{\tau}\right) \in
\TT{\den{\alpha\times\alpha\to\alpha}R_{\nu}}$. This implies that for all $\tau' \in \Type$,
$\TAPP{v}{\tau'}$ may-diverges (because the empty context is always related to itself,
for instance).

Using the must-analogue of Lemma~\ref{lem:may:forall-ext} we can thus conclude that $v
\mustlogapprox \TFUN{\alpha}{\TAPP{\Omega}{(\alpha \times \alpha \to \alpha)}}$
and $\TFUN{\alpha}{\TAPP{\Omega}{(\alpha \times \alpha \to \alpha)}} \mustlogapprox
v$. Theorems \ref{thm:must-coincidence} and \ref{thm:ciu} then finish the proof.
\end{proof}

\begin{lem}
\label{lem:must:ex:div2}
Let $v \in
\Val\left(\ALL{\alpha}{\alpha\times\alpha\to\alpha}\right)$. If for all
$\tau \in \Type$, the expression $\TAPP{v}{\tau}$ must-converges, and
there exist a $\tau$ and $u
\in\Val(\tau\times\tau)$ such that $\letapp{(v\,\tau)}{u}$ may-diverges, then for all
$\tau' \in Type$ and for all $u' \in Val(\tau' \times \tau')$, $\letapp{(v\,\tau')}{u'}
 \mustctxtequiv \TAPP{\Omega}{\tau'}$.
\end{lem}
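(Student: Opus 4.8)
The plan is to reduce the must-equivalence to a pure divergence statement and then prove divergence by a parametricity argument driven by the must-analogue of Lemma~\ref{lem:may:forall-apply}. I first record the reduction: any closed term $e$ that may-diverges is must-equivalent to $\TAPP{\Omega}{\tau'}$. Indeed, since $e\stepsto e_1$ implies $E[e]\stepsto E[e_1]$, an infinite reduction sequence out of $e$ lifts through every evaluation context, so $E[e]$ never must-converges; the same is true of $\TAPP{\Omega}{\tau'}$, whose reduction is deterministic and non-terminating. Hence $E[e]\Downarrow$ and $E[\TAPP{\Omega}{\tau'}]\Downarrow$ are both always false, the defining implications of $\mustciu$ hold vacuously in both directions, and Theorem~\ref{thm:ciu}(2) gives $e\mustctxtequiv\TAPP{\Omega}{\tau'}$ (both terms being closed). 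It therefore suffices to show that $\letapp{(v\,\tau')}{u'}$ may-diverges for every $\tau'\in\Type$ and every $u'\in\Val(\tau'\times\tau')$.

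I would establish this by contradiction, arranging the biorthogonal so that the term already known to diverge, $\letapp{(v\,\tau)}{u}$, occupies the right-hand slot. Fix $\tau'$ and $u'$; by canonical forms write $u=\PAIR{u_1}{u_2}$ and $u'=\PAIR{u_1'}{u_2'}$. Assume for contradiction that $\letapp{(v\,\tau')}{u'}\Downarrow$. By Lemma~\ref{lem:11b} there is then a $\nu<\omega_1$ with $\letapp{(v\,\tau')}{u'}\Downarrow_\nu$, which (by monotonicity of $\Downarrow_{(\cdot)}$) I may assume satisfies $\nu\geq1$. I then pick the uniform relation $R\in\VRel_{\omega_1}(\tau',\tau)$ defined by $R_0=\Val(\tau')\times\Val(\tau)$ and $R_\beta=\{(u_1',u_1),(u_2',u_2)\}$ for $0<\beta<\omega_1$; since $R$ is constant above index $0$, the uniformity conditions are immediate.

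With $R$ in hand the pieces combine as follows. Because $\nu\geq1$ we have $(u',u)\in(R\times R)_\nu=\den{\alpha\times\alpha}R_\nu$, and the must-analogue of Lemma~\ref{lem:extensiveness} promotes this to $(u',u)\in\TT{\den{\alpha\times\alpha}R_\nu}$. The must-analogue of Lemma~\ref{lem:may:forall-apply}, applied to $v$ with this $R$, gives $(v\,\tau',v\,\tau)\in\TT{\den{\alpha\times\alpha\to\alpha}R_\nu}$. Feeding both memberships into the must-analogue of Lemma~\ref{lem:may:apply-let} yields $(\letapp{(v\,\tau')}{u'},\letapp{(v\,\tau)}{u})\in\TT{\den{\alpha}R_\nu}$. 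Finally, the pair $([],[])$ of empty contexts lies in $\T{\den{\alpha}R_\nu}$, because the defining implication of $\T{(\cdot)}$ asks only that related values must-converge, which they do trivially. Instantiating the definition of $\TT{\den{\alpha}R_\nu}$ with $([],[])$ and $\letapp{(v\,\tau')}{u'}\Downarrow_\nu$ then forces $\letapp{(v\,\tau)}{u}\Downarrow$, contradicting the hypothesis that $\letapp{(v\,\tau)}{u}$ may-diverges. Hence $\letapp{(v\,\tau')}{u'}$ may-diverges, and the first paragraph finishes the proof.

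The main obstacle is the parametricity bookkeeping rather than any single calculation: one must orient the biorthogonal so that divergence is transported in the right direction (the contrapositive of ``$E[\cdot]\Downarrow\Rightarrow E'[\cdot]\Downarrow$''), define $R$ so that it relates precisely the two components of $u'$ to those of $u$ at every positive index, and track the countable index $\nu$ supplied by Lemma~\ref{lem:11b} carefully enough that $(u',u)$ genuinely lands in $(R\times R)_\nu$. The subsidiary facts---uniformity of $R$ and $([],[])\in\T{\den{\alpha}R_\nu}$---are routine once these choices are fixed.
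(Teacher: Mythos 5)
Your proof is correct and follows essentially the same route as the paper's: the same relation $R$ relating the components of $u'$ to those of $u$, the same must-analogues of Lemmas~\ref{lem:may:forall-apply} and~\ref{lem:may:apply-let} oriented so the known-diverging term sits on the right, and the empty-context instantiation of the biorthogonal to transport divergence. The only differences are presentational: you spell out the uniformity repair $R_0=\Val(\tau')\times\Val(\tau)$ and the concluding step that a may-diverging closed term is must-contextually equivalent to $\TAPP{\Omega}{\tau'}$, both of which the paper leaves implicit.
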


\begin{proof}
Let $\tau' \in \Type, u' \in \Val\left(\tau' \times \tau'\right)$. By the canonical forms
lemma $u = \PAIR{u_1}{u_2}$ for some $u_1, u_2 \in \Val(\tau)$ and $u' =
\PAIR{u_1'}{u_2'}$ for some $u_1', u_2' \in \Val(\tau')$. Let $R_{\nu} = \{(u_1', u_1), (u_2',
u_2) \}$ for $\nu < \omega_1$. It is easy to see that $(u', u) \in \den{\alpha \times \alpha} R_{\nu}$.
The must-analogues of Lemmas \ref{lem:may:apply-let} and \ref{lem:may:forall-apply} then imply that
$(\letapp{(v\,\tau')}{u'}, \letapp{(v\,\tau)}{u}) \in \TT{\den{\alpha}R_{\nu}}$.
This in particular means that $\letapp{(v\,\tau')}{u'}$ may-diverges.
Since $\tau'\in\Type$ and $u'\in\Val(\tau'\times\tau')$ were arbitrary, we have that for all $\tau'\in\Type$
and $u' \in \Val(\tau' \times\tau')$, $\letapp{(v\,\tau')}{u'} \mustctxtequiv \TAPP{\Omega}{\tau'}$.

\end{proof}

\newcommand{\uval}{t}
\newcommand{\vval}{s}

\begin{lem}
\label{lem:must:ex:converging}
Let $v \in \Val\left(\ALL{\alpha}{\alpha\times\alpha\to\alpha}\right)$. If for all 
$\tau \in \Type$ and for all $u\in\Val(\tau\times\tau)$, the
expression $\letapp{(\TAPP{v}{\tau})}{u}$
must-converges, one of the following three cases holds
\begin{enumerate}
\item $\forall \tau\in\Type,\forall \uval, \vval \in \Val(\tau), 
  \letapp{(\TAPP{v}{\tau})}{\PAIR{\uval}{\vval}} \mustctxtequiv \uval$
\item $\forall \tau\in\Type,\forall \uval, \vval \in \Val(\tau), 
  \letapp{(\TAPP{v}{\tau})}{\PAIR{\uval}{\vval}} \mustctxtequiv \vval$
\item $\forall \tau\in\Type,\forall \uval, \vval \in \Val(\tau), 
  \letapp{(\TAPP{v}{\tau})}{\PAIR{\uval}{\vval}} \mustctxtequiv \uval \OR \vval$
\end{enumerate}
\end{lem}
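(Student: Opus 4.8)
The plan is to probe $v$ at the two-element type $\B=\REC{\alpha}{\ONE+\ONE}$ on the pair $\PAIR{\tval}{\fval}$, read off a single choice pattern there, and then propagate it to an arbitrary $\tau$ and pair $\PAIR{t}{s}$ by relational parametricity, following the technique of Lemma~\ref{lem:must:ex:div2}. Concretely, I would set $w=\letapp{(\TAPP{v}{\B})}{\PAIR{\tval}{\fval}}$, a closed term of type $\B$. By hypothesis $w\Downarrow$, so by preservation and progress every maximal reduction sequence from $w$ is finite and ends in a value, and by the canonical forms lemma that value is $\tval$ or $\fval$. Hence the set of values reachable from $w$ is a nonempty subset of $\{\tval,\fval\}$, and exactly one of three situations occurs: $w$ reaches only $\tval$, only $\fval$, or both. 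Moreover, since $w\Downarrow$, for any $\B$-accepting evaluation context $E$ we have $E[w]\Downarrow$ iff $E[w']\Downarrow$ for every value $w'$ reachable from $w$; comparing reachable-value sets and invoking Theorem~\ref{thm:ciu}, the three situations amount to $w\mustctxtequiv\tval$, $w\mustctxtequiv\fval$, and $w\mustctxtequiv\tval\OR\fval$ respectively. I then claim that the situation in force determines alternative (1), (2), or (3) uniformly in $\tau,t,s$.

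For the transport, fix $\tau\in\Type$ and $t,s\in\Val(\tau)$, write $w_\tau=\letapp{(\TAPP{v}{\tau})}{\PAIR{t}{s}}$, and take the uniform relation $R\in\VRel_\omegaOne(\B,\tau)$ with $R_\nu=\{(\tval,t),(\fval,s)\}$. Then $(\PAIR{\tval}{\fval},\PAIR{t}{s})\in\den{\alpha\times\alpha}R_\nu$, and, exactly as in Lemma~\ref{lem:must:ex:div2}, the must-analogues of Lemma~\ref{lem:may:forall-apply} and Lemma~\ref{lem:may:apply-let} together with Lemma~\ref{lem:extensiveness} yield $(w,w_\tau)\in\TT{\den{\alpha}R}_\nu=\TT{R}_\nu$ for all $\nu<\omegaOne$; the reverse relation $S$ with $S_\nu=\{(t,\tval),(s,\fval)\}$ similarly gives $(w_\tau,w)\in\TT{S}_\nu$. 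The crucial gadget is, for an arbitrary $\tau$-accepting context $E'$, the $\B$-accepting evaluation context $E=E'[(\FUN{y}{\CASE{y}~(\BRANCH{1}{x}{t}\SEP\BRANCH{2}{x}{s})})\,[]]$, which is legal since its inner part has the form $v\,[]$. By deterministic pure reductions $E[\tval]\stepsto^* E'[t]$ and $E[\fval]\stepsto^* E'[s]$, hence $E[\tval]\Downarrow\iff E'[t]\Downarrow$ and $E[\fval]\Downarrow\iff E'[s]\Downarrow$; these biconditionals make $(E,E')\in\T{R}_\mu$ and $(E',E)\in\T{S}_\mu$ hold for every $\mu<\omegaOne$.

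It remains to run the biorthogonality in both directions; I treat the situation $w\mustctxtequiv\tval$ (reachable-value set $\{\tval\}$), the others being symmetric. For $t\mustciu w_\tau$, take $E'$ with $E'[t]\Downarrow$; since $w$ reaches only $\tval$ we have $E[w]\stepsto^* E[\tval]\stepsto^* E'[t]$, so $E[w]\Downarrow$, hence $E[w]\Downarrow_\mu$ for some $\mu$ by Lemma~\ref{lem:11b}, and $(w,w_\tau)\in\TT{R}_\mu$ with $(E,E')\in\T{R}_\mu$ gives $E'[w_\tau]\Downarrow$. For $w_\tau\mustciu t$, take $E'$ with $E'[w_\tau]\Downarrow$, so $E'[w_\tau]\Downarrow_\mu$ for some $\mu$; then $(w_\tau,w)\in\TT{S}_\mu$ with $(E',E)\in\T{S}_\mu$ gives $E[w]\Downarrow$, and as $w$ reaches only $\tval$ this forces $E[\tval]\Downarrow$, i.e.\ $E'[t]\Downarrow$. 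Thus $t\mustciu w_\tau$ and $w_\tau\mustciu t$, and Theorem~\ref{thm:ciu} with Theorem~\ref{thm:must-coincidence} yields $w_\tau\mustctxtequiv t$, establishing alternative (1). In the situation $w\mustctxtequiv\tval\OR\fval$ the same $E$ satisfies $E[w]\Downarrow\iff(E'[t]\Downarrow\wedge E'[s]\Downarrow)\iff E'[t\OR s]\Downarrow$, and the analogous argument delivers alternative (3), while $w\mustctxtequiv\fval$ gives alternative (2). Since the reachable-value set of $w$ does not depend on $\tau,t,s$, the same alternative holds for all of them, as required. The main obstacle is the bookkeeping of the biorthogonal conditions --- verifying membership of $E$ in $\T{R}_\mu$ and $\T{S}_\mu$ and pinning down the index $\mu$ at which $E[w]$ must-converges --- but once the reachable-value set of $w$ is fixed, every such check reduces to the deterministic reductions of the case analysis inside $E$.
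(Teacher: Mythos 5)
Your proof is correct and follows the paper's skeleton quite closely: probe $v$ at $\B$ on $\PAIR{\tval}{\fval}$, split into three cases according to which of $\tval,\fval$ are reachable, transport to arbitrary $\tau$, $t$, $s$ via the constant relations $R=\{(\tval,t),(\fval,s)\}$ and $S=\{(t,\tval),(s,\fval)\}$ using the must-analogues of Lemmas~\ref{lem:may:forall-apply} and~\ref{lem:may:apply-let}, and conclude by must-CIU equivalence in both directions. Where you genuinely diverge is the gadget used to populate $\T{R}$ and $\T{S}$: the paper pairs the arbitrary context with the one-armed divergence tests $(\FUN{x}{\IFT{x}{\Omega_{\B}}{x}})\,[]$ and $(\FUN{x}{\IFT{x}{x}{\Omega_{\B}}})\,[]$, so that each obligation in the definition of $\T{(\cdot)}$ is either vacuous (the test side definitely diverges) or trivially discharged by the standing assumption, and it runs each CIU direction by contradiction; you instead compose the arbitrary $E'$ with the two-armed relabeling $(\FUN{y}{\CASE{y}~\ldots})\,[]$ sending $\tval\mapsto t$ and $\fval\mapsto s$ deterministically, which yields direct biconditionals, lets a single context serve both CIU directions and all three cases, and avoids the contradiction arguments --- a tidier and more symmetric presentation. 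One point deserves flagging: your claim that the biconditionals immediately give $(E,E')\in\T{R}_\mu$ is not literally right, because the obligations have the form $E[\fval]\Downarrow_\nu\Rightarrow E'[s]\Downarrow$, and your deterministic reduction $E[\fval]\stepstoone E'[s]$ converts the stratified hypothesis only into stratified convergence $E'[s]\Downarrow_{\nu'}$; to reach genuine must-convergence you need the implication $e\Downarrow_\beta\Rightarrow e\Downarrow$, i.e., that reduction performing no unfold-fold steps cannot diverge, a fact the paper never states (Lemma~\ref{lem:11b} is its converse). This is not a defect relative to the paper --- its own proof tacitly uses the same fact when it infers $\lnot(E[\uval]\Downarrow_\nu)$ for all $\nu$ from the supposition $\lnot(E[\uval]\Downarrow)$ --- but your variant leans on it at every single membership check, so you should isolate it as an explicit auxiliary lemma before running the argument.
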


\begin{proof}
Let $\tau\in\Type$. Must-analogues of Lemmas~\ref{lem:may:forall-apply}, \ref{lem:may:apply-let} and the
definitions of relational actions show that 
\begin{align}
  \forall \nu < \omega_1, \forall R \in \VRel(\B, \tau), \forall (b,w) \in (R\times R)_{\nu},
  (\letapp{(\TAPP{v}{\B})}{b}, \letapp{(\TAPP{v}{\tau})}{w})\in \TT{R_{\nu}} \label{eq:lem:must:ex:converging-1}
  \intertext{and}
  \forall \nu < \omega_1, \forall S \in \VRel(\tau, \B), \forall (w,b) \in (S\times S)_{\nu},
  (\letapp{(\TAPP{v}{\tau})}{w}, \letapp{(\TAPP{v}{\B})}{b})\in \TT{S_{\nu}}. \label{eq:lem:must:ex:converging-2}
\end{align}
By assumption there exists a $\vval\in\Val(\B)$, such that
$\letapp{(\TAPP{v}{\tau})}{\PAIR{\tval}{\fval}} \stepsto^* \vval$. By the canonical forms
lemma, $\vval$ can only be $\tval$ or $\fval$ and based on this, we consider three different
options.

In all the cases let $\uval, \vval \in \Val(\tau)$ and define $R = \{(\tval, \uval), (\fval,
\vval)\}$ and $S = \{(\uval, \tval), (\vval, \fval)\}$. Note that the cases don't depend on $\uval, \vval,
R$ or $S$.
\begin{itemize}
\item $\letapp{(\TAPP{v}{\B})}{\PAIR{\tval}{\fval}} \stepsto^* \tval$ but
  $\lnot\left(\letapp{(\TAPP{v}{\B})}{\PAIR{\tval}{\fval}} \stepsto^* \fval\right)$. In
  this case, we wish to show that $\letapp{(\TAPP{v}{\tau})}{\PAIR{\uval}{\vval}} \mustctxtequiv
  \uval$ and we again show this by showing that the two terms are must-CIU equivalent.

  Let $E\in\Stk(\tau)$ and assume
  $E\left[\letapp{(\TAPP{v}{\tau})}{\PAIR{\uval}{\vval}}\right]\Downarrow$. This implies there
  exists a $\nu < \omega_1$, such that
  $E\left[\letapp{(\TAPP{v}{\tau})}{\PAIR{\uval}{\vval}}\right]\Downarrow_{\nu}$. We must show
  $E[\uval]\Downarrow$. Suppose instead that $\lnot(E[\uval]\Downarrow)$. Then $\forall \beta <
  \omega_1, (E, (\FUN{x}{\IFT{x}{\Omega_{\B}}{x}})\,[]) \in \T{S_{\beta}}$. Instantiating
  (\ref{eq:lem:must:ex:converging-2}) with the above defined $S$ and any $\beta \geq \nu$ shows
  that 
  \[\left(\FUN{x}{\IFT{x}{\Omega_{\B}}{x}}\right)
  \left(\letapp{(\TAPP{v}{\B})}{\PAIR{\tval}{\fval}}\right)\Downarrow,\] but we have assumed that
  $\letapp{(\TAPP{v}{\B})}{\PAIR{\tval}{\fval}}]\stepsto^* \tval$.
  This therefore leads to a contradiction, stemming from the assumption
  that $E[\uval]$ may-diverges. This shows one direction of may-CIU approximation.

  For the other, again let $E\in\Stk(\tau)$ and now assume $E[\uval]\Downarrow$. It follows
  that for all $\beta < \omega_1$, $((\FUN{x}{\IFT{x}{x}{\Omega_{\B}}})\,[], E)\in\T{R_n}$. We
  now instantiate (\ref{eq:lem:must:ex:converging-1}) with our particular $R$ and $\nu <
  \omega_1$, such that $\left(\FUN{x}{\IFT{x}{x}{\Omega_{\B}}}\right)\,
    \left(\letapp{(\TAPP{v}{\B})}{\PAIR{\tval}{\fval}}\right)\Downarrow_\nu$. Such
  a $\nu$ exists since we have assumed that
  $\letapp{(\TAPP{v}{\B})}{\PAIR{\tval}{\fval}}$ must-converges and 
  $\letapp{(\TAPP{v}{\B})}{\PAIR{\tval}{\fval}} \stepsto^* \tval$ but it does not reduce
  to $\fval$ and so this implies
  $E\left[\letapp{(\TAPP{v}{\tau})}{\PAIR{\uval}{\vval}}\right]\Downarrow$, which concludes this
  part of the proof.

\item $\letapp{(\TAPP{v}{\B})}{\PAIR{\tval}{\fval}} \stepsto^* \fval$ but
  $\lnot\left(\letapp{(\TAPP{v}{\B})}{\PAIR{\tval}{\fval}} \stepsto^* \tval\right)$. In
  this case we show that $\letapp{(\TAPP{v}{\tau})}{\PAIR{\uval}{\vval}} \mustctxtequiv \vval$. The
  proof of this is completely analogous to the one for the previous case, so we omit the
  details here.

\item $\letapp{(\TAPP{v}{\B})}{\PAIR{\tval}{\fval}} \stepsto^* \fval$ \emph{and}
  $\letapp{(\TAPP{v}{\B})}{\PAIR{\tval}{\fval}} \stepsto^* \tval$. In
  this case, we wish to show that $\letapp{(\TAPP{v}{\tau})}{\PAIR{\uval}{\vval}} \mustctxtequiv
  \uval \OR \vval$. We again do this by showing must-CIU equivalence in two steps.

  Let $E\in\Stk(\tau)$ and assume
  $E\left[\letapp{(\TAPP{v}{\tau})}{\PAIR{\uval}{\vval}}\right]\Downarrow$. This implies there
  exists a $\nu < \omega_1$, such that
  $E\left[\letapp{(\TAPP{v}{\tau})}{\PAIR{\uval}{\vval}}\right]\Downarrow_\nu$. We must show
  $E[\uval\OR \vval]\Downarrow$. Suppose instead that $\lnot(E[\uval \OR \vval]\Downarrow)$. This
  implies $\lnot(E[\uval]\Downarrow)$ or $\lnot(E[\vval]\Downarrow)$ (or both). Without
  loss of generality suppose $\lnot(E[\uval]\Downarrow)$. This implies
  $\forall \beta < \omega_1, (E, (\FUN{x}{{\IFT{x}{\Omega_{\B}}{x}}})\,[]) \in \T{S_\beta}$. Instantiating
  (\ref{eq:lem:must:ex:converging-2}) with the above defined $S$ and any $\beta \geq \nu$ leads
  to a contradiction, since it implies that 
  \[(\FUN{x}{{\IFT{x}{\Omega_{\B}}{x}}})\,\letapp{(\TAPP{v}{\B})}{\PAIR{\tval}{\fval}}\Downarrow\]
  but since $\letapp{(\TAPP{v}{\B})}{\PAIR{\tval}{\fval}} \stepsto^* \tval$, this cannot be.

  For the other direction, let $E\in\Stk(\tau)$ and now assume $E[\uval \OR
  \vval]\Downarrow$. This implies that $E[\uval]\Downarrow$ and $E[\vval]\Downarrow$,
  which further implies that $\forall \beta < \omega_1, ([],E)\in\T{R_n}$. If we instantiate
  (\ref{eq:lem:must:ex:converging-1})
  with our particular $R$ and $\nu < \omega_1$, such that 
  \[ \letapp{(\TAPP{v}{\B})}{\PAIR{\tval}{\fval}}\Downarrow_\nu,
  \]
  (such a $\nu$ again exists by assumption) we have
  $E\left[\letapp{(\TAPP{v}{\tau})}{\PAIR{\uval}{\vval}}\right]\Downarrow$ which concludes
  the proof. \qedhere

\end{itemize}
\end{proof}

\begin{thm}
 \label{thm:parametricity}
If $v \in \Val\left(\ALL{\alpha}{\alpha\times\alpha\to\alpha}\right)$ then exactly one
of the following holds

\begin{itemize}
\item $v \mustctxtequiv \TFUN{\alpha}{\TAPP{\Omega}{(\alpha \times \alpha \to \alpha)}}$
\item $\forall \tau \in \Type, \forall \uval \in Val(\tau\times\tau),
  \letapp{(v\,\tau)}{\uval} \mustctxtequiv \TAPP{\Omega}{\tau}$
\item $\forall \tau\in\Type,\forall \uval, \vval \in \Val(\tau), 
  \letapp{(\TAPP{v}{\tau})}{\PAIR{\uval}{\vval}} \mustctxtequiv \uval$
\item $\forall \tau\in\Type,\forall \uval, \vval \in \Val(\tau), 
  \letapp{(\TAPP{v}{\tau})}{\PAIR{\uval}{\vval}} \mustctxtequiv \vval$
\item $\forall \tau\in\Type,\forall \uval, \vval \in \Val(\tau), 
  \letapp{(\TAPP{v}{\tau})}{\PAIR{\uval}{\vval}} \mustctxtequiv \uval \OR \vval$
\end{itemize}

If, further, $\forall \tau\in\Type, \exists v_{\tau} \in \Val(\tau\times\tau \to \tau)$, such
that $\TAPP{v}{\tau} \stepstopure v_{\tau}$ then one of the following holds
\begin{itemize}
\item $v \mustctxtequiv \TFUN{\alpha}{\FUN{x}{\TAPP{\Omega}{\alpha}}}$
\item $v \mustctxtequiv \TFUN{\alpha}{\FUN{x}{\syn{proj}_1\,x}}$
\item $v \mustctxtequiv \TFUN{\alpha}{\FUN{x}{\syn{proj}_2\,x}}$
\item $v \mustctxtequiv \TFUN{\alpha}{\FUN{x}{\syn{proj}_1\,x \OR \syn{proj}_2\,x}}$.
\end{itemize}
\end{thm}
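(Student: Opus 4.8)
The plan is to prove the first part by an exhaustive case analysis on the convergence behaviour of $v$, feeding each case to one of the three preceding lemmas, and then to derive the second part by promoting the resulting equivalences on \emph{applications} of $v$ to an equivalence on $v$ itself. For the first part I would distinguish three mutually exclusive and jointly exhaustive situations. If $\TAPP{v}{\tau}$ may-diverges for some $\tau\in\Type$, then Lemma~\ref{lem:must:ex:div1} immediately gives the first case. Otherwise $\TAPP{v}{\tau}$ must-converges for every $\tau$; if in addition $\letapp{(\TAPP{v}{\tau})}{u}$ may-diverges for some $\tau$ and $u\in\Val(\tau\times\tau)$, then Lemma~\ref{lem:must:ex:div2} gives the second case. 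In the only remaining situation $\letapp{(\TAPP{v}{\tau})}{u}$ must-converges for all $\tau$ and all $u$, and Lemma~\ref{lem:must:ex:converging} produces exactly one of the three projection-like cases. Exhaustiveness is clear since the three situations negate one another, and they line up with the lemma hypotheses once one notes that $\letapp{(\TAPP{v}{\tau})}{u}$ evaluates $\TAPP{v}{\tau}$ first, so must-convergence of $\letapp{(\TAPP{v}{\tau})}{u}$ entails must-convergence of $\TAPP{v}{\tau}$.

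For the \emph{exactly one} claim I would argue at the level of these convergence conditions, which genuinely partition $\Val(\ALL{\alpha}{\alpha\times\alpha\to\alpha})$, and then separate the three converging sub-cases by evaluating at $\tau=\B$ and $u=\PAIR{\tval}{\fval}$: the candidate results $\tval$, $\fval$, and $\tval\OR\fval$ are pairwise non-must-equivalent, since for any two of them there is a context that must-converges on one but can diverge on the other (for instance, binding the hole to a variable $z$ and returning $z$ when it is $\tval$ while diverging when it is $\fval$ separates $\tval$ from $\tval\OR\fval$). Hence no single term $\letapp{(\TAPP{v}{\B})}{\PAIR{\tval}{\fval}}$ can be must-equivalent to two of them. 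The one subtlety to flag is that the fully divergent first case also makes the equation of the second case true (if $\TAPP{v}{\tau}$ must-diverges then $\letapp{(\TAPP{v}{\tau})}{u}$ is must-equivalent to $\TAPP{\Omega}{\tau}$), so the genuine disjointness is that of the underlying convergence conditions, the first case being singled out by may-divergence of $\TAPP{v}{\tau}$ itself.

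For the second part, the extra hypothesis $\TAPP{v}{\tau}\stepstopure v_\tau$ first rules out the first case: were $\TAPP{v}{\tau'}$ to may-diverge for some $\tau'$, Lemma~\ref{lem:must:ex:div1} would force $\TAPP{v}{\tau}$ to must-diverge for every $\tau$, leaving no reachable value $v_\tau$. So $v$ falls under one of cases two through five, and I promote the relevant equation to an equivalence on $v$. By the must-analogue of Lemma~\ref{lem:may:extensionality-ctx-forall} it suffices to prove $\TAPP{v}{\tau}\mustctxtequiv\TAPP{(\TFUN{\alpha}{w})}{\tau}$ for all $\tau$, where $w$ is the matching body ($\FUN{x}{\TAPP{\Omega}{\alpha}}$, $\FUN{x}{\syn{proj}_1\,x}$, $\FUN{x}{\syn{proj}_2\,x}$, or $\FUN{x}{\syn{proj}_1\,x\OR\syn{proj}_2\,x}$); the right-hand side reduces deterministically to $w\subst{\alpha}{\tau}$, and since the choice-free reduction $\TAPP{v}{\tau}\stepstopure v_\tau$ is deterministic and hence a valid equivalence by Theorem~\ref{thm:ciu}, it remains to compare the two \emph{values} $v_\tau$ and $w\subst{\alpha}{\tau}$. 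By the must-analogue of Lemma~\ref{lem:may:fun-extensionality} this reduces to $v_\tau\,u\mustctxtequiv w\subst{\alpha}{\tau}\,u$ for all $u\in\Val(\tau\times\tau)$; here $\letapp{(\TAPP{v}{\tau})}{u}\stepstopure v_\tau\,u$, again deterministically, so the left-hand side is must-equivalent to $\letapp{(\TAPP{v}{\tau})}{u}$, whose must-behaviour is exactly the one pinned down by the first-part case and coincides with $w\subst{\alpha}{\tau}\,u$ after reducing the projection(s).

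The main obstacle I anticipate is precisely this bookkeeping that makes the \emph{value}-only extensionality lemmas applicable: Lemmas~\ref{lem:may:extensionality-ctx-forall} and~\ref{lem:may:fun-extensionality} speak about values, whereas $\TAPP{v}{\tau}$ is an expression, so the pure-reduction hypothesis must be used to replace it by $v_\tau$ (and to rewrite $\letapp{(\TAPP{v}{\tau})}{u}$ as $v_\tau\,u$), together with the observation that a choice-free reduction reaching a value passes through a unique, deterministic redex at each step and is therefore a contextual equivalence. Everything else---checking the hypotheses of the three classification lemmas and exhibiting the concrete distinguishing contexts---is routine given the results already established.
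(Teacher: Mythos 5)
Your proposal is correct and follows essentially the same route as the paper: the first part is precisely the case split furnished by Lemmas~\ref{lem:must:ex:div1}, \ref{lem:must:ex:div2} and~\ref{lem:must:ex:converging} (whose hypotheses are exhaustive because must-convergence is by definition the complement of may-divergence), and the second part uses the must-analogues of Lemmas~\ref{lem:may:extensionality-ctx-forall} and~\ref{lem:may:fun-extensionality} together with the key observation---also stressed in the paper---that $\TAPP{v}{\tau}\stepstopure v_\tau$ makes $v_\tau$ the \emph{unique} value of $\TAPP{v}{\tau}$, since a choice-free reduction sequence is deterministic. You are in fact more careful than the paper on two points it glosses over, namely the ``exactly one'' claim (where you rightly note that the divergent first case also validates the second equation, so genuine disjointness holds at the level of the underlying convergence conditions) and the exclusion of the first case under the pure-reduction hypothesis; only note that in the latter argument Lemma~\ref{lem:must:ex:div1} yields \emph{may}-divergence of every $\TAPP{v}{\tau}$, not must-divergence, which is harmless because may-divergence already contradicts your determinism observation for pure reductions reaching a value.
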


\begin{proof}
The first part of the theorem only summarizes
Lemmas~\ref{lem:must:ex:div1},~\ref{lem:must:ex:div2},~\ref{lem:must:ex:converging}.

For the second part, we consider cases as in the first part and as all of them are
analogous, we only show the last one. Using the must-analogue of
Lemma~\ref{lem:may:extensionality-ctx-forall} we only need to show
\[
\forall \tau\in\Type, \TAPP{v}{\tau} \mustctxtequiv \FUN{x}{\syn{proj}_1\,x \OR
  \syn{proj}_2\,x}
\]
It is easy to show that $\TAPP{v}{\tau} \mustctxtequiv v_{\tau}$, but note that the fact
that $\TAPP{v}{\tau}$ reduces to $v_{\tau}$ using only pure reductions is crucial, as it
implies that this $v_{\tau}$ is the unique value of $\TAPP{v}{\tau}$. Using  transitivity of
$\mustctxtequiv$ it thus suffices to show $v_{\tau} \mustctxtequiv \FUN{x}{\syn{proj}_1\,x \OR
  \syn{proj}_2\,x}$. From the first part we have that 
\[\forall \uval, \vval \in \Val(\tau),
\letapp{(\TAPP{v}{\tau})}{\PAIR{\uval}{\vval}} \mustctxtequiv \uval \OR \vval\] and it is also immediate
that $(\FUN{x}{\syn{proj}_1\,x \OR \syn{proj}_2\,x})\,\PAIR{\uval}{\vval} \mustctxtequiv \uval\OR \vval$
which together imply
\[\forall \uval, \vval \in \Val(\tau), v_{\tau}\,\PAIR{\uval}{\vval} \mustctxtequiv
(\FUN{x}{\syn{proj}_1\,x \OR \syn{proj}_2\,x})\,\PAIR{\uval}{\vval}.\]

The canonical forms lemma shows that such pairs are the only possible values of type
$\tau\times\tau$ and so the must-analogue of Lemma~\ref{lem:may:fun-extensionality} implies that $v_{\tau}
\mustctxtequiv \FUN{x}{\syn{proj}_1\,x \OR \syn{proj}_2\,x}$, as required.
\end{proof}

Note that the example in Section~\ref{sect:maytheory}, used to demonstrate the lack of
extensionality for expression of function type, also demonstrates that
it is not the case that $v$ is contextually equivalent to one of the
functions listed in the above theorem without some further
restrictions, such as the one used for the second part of the theorem.

\section{Comparison to conference paper}
\label{sec:comparison-conf}

A preliminary version of this paper was presented at the 20th Annual Conference on Computer Science Logic (CSL'11), 12-15 September 2011.  The present version corrects some mistakes in the proof of syntactic minimal invariance in the earlier conference paper.  This was done by changing the counting of steps so that the only steps that count in the indexing of the logical relations are unfold-fold reductions.  That suffices for well-definedness of the logical relation, and means that the approximation relations are closed under pure zero-step reductions on the left and under pure arbitrary reductions on the right, which was implicitly used in the wrong proof in the conference paper.  This change means that the precise formulation of several lemmas have changed. 
Moreover, we have changed the parametricity example to a more
interesting one involving nondeterminism.

\section*{Acknowledgements}

We gratefully acknowledge the comments and suggestions from the referees. In particular
we thank them for discovering the problem with the earlier proof of syntactic minimal invariance
and for suggesting a better example of relational parametricity.

%%
%% Bibliography
%%

\bibliographystyle{abbrv}
\bibliography{local}

\end{document}